\newtheorem{prpstn}[]{Proposition}
\newtheorem{rmrk}[]{Remark}
\newtheorem{lem}[]{Lemma}
\newtheorem{thrm}[]{Theorem}
\numberwithin{equation}{section}
\numberwithin{prpstn}{section}
\numberwithin{rmrk}{section}
\newcommand{\Supp}{\operatorname{Supp}}
\newcommand{\sgn}{\operatorname{sgn}}
\newcommand{\beq}{\begin{equation}}
\newcommand{\eeq}{\end{equation}}
\newcommand{\beqa}{\begin{eqnarray}}
\newcommand{\eeqa}{\end{eqnarray}}
\newcommand{\ddt}{\frac{{\rm d}}{{\rm d}t}}
\newcommand{\di}{\, {\rm d}}
\newcommand{\e}{\varepsilon}
\newcommand{\pa}{\partial}
\title{Modelling coevolutionary dynamics in heterogeneous SI epidemiological systems across scales\thanks{This work was supported by the Italian Ministry of University and Research (MUR) through the PRIN 2020 project (No. 2020JLWP23) ``Integrated Mathematical Approaches to Socio–Epidemiological Dynamics'' (CUP: E15F21005420006) and the PRIN2022-PNRR project (No. P2022Z7ZAJ) ``A Unitary Mathematical Framework for Modelling Muscular Dystrophies'' (CUP: E53D23018070001), and the INdAM group GNFM.}}
\author{Tommaso Lorenzi,\thanks{Department of Mathematical Sciences ``G. L. Lagrange'', Politecnico di Torino, 10129 Torino, Italy (tommaso.lorenzi@polito.it)}
\and
Elisa Paparelli,\thanks{Department of Mathematical Sciences ``G. L. Lagrange'', Politecnico di Torino, 10129 Torino, Italy (elisa.paparelli@polito.it)}
\and
Andrea Tosin \thanks{Department of Mathematical Sciences ``G. L. Lagrange'', Politecnico di Torino, 10129 Torino, Italy (andrea.tosin@polito.it)}
}
\begin{document}
%\date{}
\maketitle

\begin{abstract}
We develop a new structured compartmental model for the coevolutionary dynamics between susceptible and infectious individuals in heterogeneous SI epidemiological systems. In this model, the susceptible compartment is structured by a continuous variable that represents the level of resistance to infection of susceptible individuals, while the infectious compartment is structured by a continuous variable that represents the viral load of infectious individuals. We first formulate an individual-based model wherein the dynamics of single individuals is described through stochastic processes, which permits a fine-grain representation of individual dynamics and captures stochastic variability in evolutionary trajectories amongst individuals. Next we formally derive the mesoscopic counterpart of this model, which consists of a system of coupled integro-differential equations for the population density functions of susceptible and infectious individuals. Then we consider an appropriately rescaled version of this system and we carry out formal asymptotic analysis to derive the corresponding macroscopic model, which comprises a system of coupled ordinary differential equations for the proportions of susceptible and infectious individuals, the mean level of resistance to infection of susceptible individuals, and the mean viral load of infectious individuals. Overall, this leads to a coherent mathematical representation of the coevolutionary dynamics between susceptible and infectious individuals across scales. We provide well-posedness results for the mesoscopic and macroscopic models, and we show that there is excellent agreement between analytical results on the long-time behaviour of the components of the solution to the macroscopic model, the results of Monte Carlo simulations of the individual-based model, and numerical solutions of the macroscopic model. 
\end{abstract}

\section{Introduction}

\paragraph{Background.} Compartmental epidemiological models formulated as ordinary differential equations (ODEs), wherein each differential equation governs the dynamic of the proportion of individuals in a given compartment, have found broad application in theoretical and empirical research in epidemiology~\cite{diekmann2013mathematical,diekmann2000mathematical,2,3,iannelli2015introduction,nowak2000virus}. An implicit assumption underlying these models is homogeneity of the different compartments, in that individuals in each compartment are assumed, as a first approximation, identical. This, however, ignores the heterogeneity typically observed within populations (i.e. the members of the population express different characteristics to different degrees), which has been found to play a pivotal role in the spread of various infectious diseases -- see for instance~\cite{britton2020mathematical,chabas2018evolutionary,elie2022source,jones2021viral,kimberly2016heterogeneity,tkachenko2021time,trauer2019importance,woolhouse1997heterogeneities,yates2006pathogen,zhang2022monitoring} and references therein.

A possible way of incorporating heterogeneity in compartmental epidemiological models consists in structuring one or more compartments by continuous variables that capture variation in relevant characteristics amongst the members of the same compartment -- i.e. individuals expressing a given characteristic to different degrees are characterised by different values of the structuring variable~\cite{metz2014dynamics,perthame2006transport}. The distribution of the expression of such characteristics in a structured compartment is then described by a population density function, whose dynamics are governed by an integro-differential equation (IDE), a partial differential equation (PDE) or a partial integro-differential equation (PIDE), which replaces the ODE for the corresponding unstructured compartment in the original model. 

Continuously structured compartmental epidemiological models have been proposed in a variety of contexts, from age-structured populations~\cite{busenberg1991global,inaba1990threshold,inaba2017age,thieme2009spectral} to populations occupying spatially heterogeneous environments \cite{diekmann1978thresholds,ducasse2022threshold,inaba2012new,lou2011reaction,peng2012reaction,thieme1977model,wang2011nonlocal,wang2012basic}. Closer to the topic of our study, we mention earlier works where continuous structuring variables have been introduced in compartmental epidemiological models to capture variability in phenotypic traits~\cite{abi2021asymptotic,almeida2021final,burie2020asymptotic,burie2020slow,burie2020concentration,djidjou2017steady,karev2019trait,novozhilov2012epidemiological,veliov2005effect}, resistance to infection or immunity level~\cite{barbarossa2015immuno,gandolfi2015epidemic,iacono2012evolution,lorenzi2021evolutionary}, viral or pathogen load (i.e. the quantity of virus or pathogen carried by infectious individuals)~\cite{banerjee2020immuno,della2021sir,della2022sir,gandolfi2015epidemic,loy2021viral}, and socio-economic characteristics~\cite{bernardi2022effects,dimarco2020wealth,dimarco2021kinetic,zanella2022kinetic}. 

\paragraph{Contents of the article.} In this article, we complement earlier works on the development and study of structured compartmental epidemiological models by considering a heterogeneous SI system in which the susceptible compartment is structured by a continuous variable that captures variability in resistance to infection amongst susceptible individuals, while the infectious compartment is structured by a continuous variable that captures variability in viral load amongst infectious individuals. The level of resistance to infection of susceptible individuals and the viral load of infectious individuals may evolve in time, for example because of (epi)genetic changes~\cite{atlante2020epigenetic,gomez2012epigenetics,hill1996genetics,karlsson2014natural,kwok2021host,quintana2019human} and within-host virus dynamics~\cite{du2022within,fraser2007variation,gutierrez2012circulating,hadjichrysanthou2016understanding,hay2021estimating,puhach2023sars}, respectively. Moreover, due to contact with infectious individuals, susceptible individuals may become infectious with a probability that depends both on their level of resistance to infection and the viral load of the infectious individuals they come in contact with. 

Instead of defining a mathematical model that provides an aggregate description of the dynamics of such a heterogeneous SI system on the basis of population-scale phenomenological assumptions, here we first formulate an individual-based model wherein the dynamics of single individuals are described through stochastic processes. This permits a fine-grain representation of individual dynamics and captures stochastic variability in evolutionary trajectories amongst individuals. Next we formally derive the mesoscopic counterpart of this model, which consists of a system of coupled IDEs for the population density functions of susceptible and infectious individuals. Being formally derived from an underlying individual-based model, the terms comprised in such a mesoscopic model provide an accurate mean-field representation of the underlying individual dynamics and inter-individual interactions. Then we consider an appropriately rescaled version of this IDE system and we carry out formal asymptotic analysis to derive the corresponding macroscopic model, which comprises a system of coupled ODEs for the proportions of susceptible and infectious individuals, the mean level of resistance to infection of susceptible individuals, and the mean viral load of infectious individuals. Overall, this leads to a coherent mathematical representation of the coevolutionary dynamics between susceptible and infectious individuals across scales.

\paragraph{Outline of the article.} In Section~\ref{sec:model}, we formulate an individual-based model for the dynamics of the heterogeneous SI system under study and we formally derive the mesoscopic and macroscopic counterparts of this model. In Section~\ref{sec:analysis}, we provide well-posedness results for the mesoscopic and macroscopic models, and we study the long-time behaviour of the solution to the coupled ODEs of the macroscopic model. In Section~\ref{sec:numsim}, we report on the results of Monte Carlo simulations of the individual-based model and integrate them with the results of numerical simulations of the macroscopic model along with the analytical results on the long-time behaviour of the solution to the corresponding ODE system. In Section~\ref{sec:disc}, we conclude with a discussion and propose some future research directions. 

\section{Individual-based, mesoscopic, and macroscopic models}
\label{sec:model}
In this section, we formulate an individual-based model for the SI epidemiological system considered here (see Section~\ref{Dec:micromodel}), and we formally derive first the mesoscopic counterpart of this model (see Section~\ref{Dec:mesomodel}) and then the corresponding macroscopic model (see Section~\ref{Dec:macromodel}).

\subsection{Formulation of the individual-based model}
\label{Dec:micromodel}
%negligible viral load
%In order to provide a general microscopic-scale description of the epidemiological system considered here, 
Building on the modelling approach employed in~\cite{loy2020boltzmann,loy2021viral}, at time $t \in \mathbb{R}_+$ we describe the microscopic state of the individuals, regarded as being indistinguishable, by the triplet $(X_{t}, R_{t}, V_{t})$, with $X_{t} \in \mathcal{A} := \{I, S\}$, $R_{t} \in \mathcal{R} \subset \mathbb{R}_+$ and $V_{t} \in \mathcal{V} \subset \mathbb{R}_+$, where $\mathcal{R}$ and $\mathcal{V}$ are bounded intervals and $\mathbb{R}_+$ denotes the set of non-negative real numbers. The discrete random variable $X_{t}$ specifies the compartment that an individual belongs to -- i.e. susceptible, $X_{t} = S$, or infectious, $X_{t} = I$ -- while the continuous random variables $R_{t}$ and $V_{t}$ represent, respectively, the level of resistance to infection and the viral load of the individual.
\begin{rmrk}
\label{rmrk:v=0}
In our modelling framework, infectious individuals with zero viral load are infectious individuals whose viral load is below the infectivity threshold and thus cannot transmit the infection to susceptible individuals.
\end{rmrk}

We denote by $f(t,x_i,r,v)$ the probability density function of the triple $(X_{t}, R_{t}, V_{t})$, i.e.
\beq
\label{eq:f}
(X_{t}, R_{t}, V_{t}) \sim f(t,x_i,r,v),
\eeq
and, in the remainder of the article, we will use the more compact notation
\beq
\label{eq:fSfI}
f(t,x_i=S,r,v) =: f_S(t,r,v) \; \text{ and } \;  f(t,x_i=I,r,v) =: f_I(t,r,v)
\eeq
along with the fact that, since $f(x_i,r,v,t)$ is regarded as a probability density function,
\beq
\label{ass:f}
\sum_{x_i \in \mathcal{A}} \int_{\mathcal{R}} \int_{\mathcal{V}} f(t,x_i,r,v) \di v \di r = \int_{\mathcal{R}} \int_{\mathcal{V}} f_S(t,r,v) \di v \di r + \int_{\mathcal{R}} \int_{\mathcal{V}} f_I(t,r,v) \di v \di r= 1 \;\; \forall \, t \geq 0.
\eeq

We focus on a scenario where, between time $t$ and time $t + \Delta t$, with $\Delta t \in \mathbb{R}^*_+$, where $\mathbb{R}^*_+$ denotes the set of positive real numbers:
\begin{itemize}
\item[(i)] the level of resistance to infection of a susceptible individual may change in time at rate $\theta_R\in\mathbb{R}^*_+$ and switch from $r$ to $r'$ with probability $K_S(r'| r)$;
\item[(ii)] the viral load of an infectious individual may change in time at rate $\theta_V\in\mathbb{R}^*_+$ and switch from $v$ to $v'$ with probability $K_I(v'| v)$;
\item[(iii)] contacts between a susceptible individual with level of resistance to infection $r$ and an infectious individual with viral load $v^*$, which occur at rate $\theta_X\in\mathbb{R}^*_+$, may cause the susceptible individual to become infectious, with probability $q(r, v^*)$, and thus acquire the viral load $v''$, with probability $Q(v'' |r, v^*)$.
\end{itemize}
The kernels $K_S$, $K_I$, and $Q$ along with the function $q$ satisfy the following assumptions 
\beq
\label{ass:KSa}
K_S\in L^1(\mathcal{R}; C(\mathcal{R})), \quad \Supp \left(K_S\right) \subseteq \mathcal{R}\times\mathcal{R}, \;\; K_S \geq 0 \; \text{ a.e.}, \;\; \int_{\mathcal{R}} K_S(r' | r) \di r' = 1 \;\; \forall \, r \in  \mathcal{R},
\eeq

\beq
\label{ass:KSb}
\int_{\mathcal{R}} r' \, K_S(r' | r) \di r'  \in  \mathcal{R} \;\; \forall \, r \in  \mathcal{R}, \quad \int_{\mathcal{R}} (r')^2 K_{S}(r' | r) \di r' - \left(\int_{\mathcal{R}} r' K_{S}(r' | r) \di r' \right)^2 > 0 \;\; \forall \, r \in  \mathcal{ \mathring R},
\eeq

\beq
\label{ass:KIa}
\Supp \left(K_I \right) \subseteq \mathcal{V}\times\mathcal{V}, \;\; K_I \geq 0\; \text{ a.e.}, \;\;  \int_{\mathcal{V}} K_I(v' | v) \di v' = 1  \;\; \forall \, v \in  \mathcal{V},
\eeq

\beq
\label{ass:KIb}
\int_{\mathcal{V}} v' \, K_I(v' | v) \di v' \in  \mathcal{V} \;\; \forall \, v \in  \mathcal{V}, \quad \int_{\mathcal{V}} (v')^2 K_{I}(v' | v) \di v' - \left(\int_{\mathcal{V}} v' K_{I}(v' | v) \di v' \right)^2 > 0 \;\; \forall \, v \in  \mathcal{ \mathring V},
\eeq

\beq
\label{ass:Qa}
Q \in L^1(\mathcal{V}; C(\mathcal{R}\times\mathcal{V})), \quad \Supp\left(Q \right) \subseteq \mathcal{V}\times\mathcal{R}\times\mathcal{V}, \;\; Q \geq 0 \; \text{ a.e.}, \;\; \int_{\mathcal{V}} Q(v | r, v^*) \di v = 1 \;\; \forall \, (r, v^*) \in  \mathcal{R} \times \mathcal{V},
\eeq

\beq
\label{ass:Qb}
\int_{\mathcal{V}} v \,  \, Q(v | r, v^*) \di v \in \mathcal{V} \;\; \forall \, (r, v^*) \in  \mathcal{R} \times \mathcal{V},
\eeq

\beq
\label{ass:q}
q\in C(\mathcal{R}\times\mathcal{V}), \quad \Supp\left(q \right)\subseteq \mathcal{R}\times\mathcal{V}, \quad 0 \leq q \leq 1, \quad q(r,0) = 0 \;\; \forall r \in \mathcal{R},
\eeq
where $\Supp\left(\cdot\right)$ denotes the support of the function $\left(\cdot\right)$, $\mathcal{ \mathring R}$ denotes the interior of the interval $\mathcal{R}$, and $\mathcal{ \mathring V}$ denotes the interior of the interval $\mathcal{V}$.

\begin{rmrk}
\label{rmrk:q(r,0)}
The assumption~\eqref{ass:q} on $q(r,0)$ translates in mathematical terms to the biological idea that, as mentioned in Remark~\ref{rmrk:v=0}, infectious individuals with $v=0$ are infectious individuals who carry a viral load which is too low for the infection to spread from them to the susceptible individuals they come in contact with.
\end{rmrk}

Under the scenario corresponding to (i)-(iii), the evolution of a focal individual in the microscopic state $(X_{t}, R_{t}, V_{t})$ is governed by the following system
\beq
\label{eq:microdynlaws}
\begin{cases}
X_{t + \Delta t} = (1 - \Theta_X) \, X_t + \Theta_X \, X'_t,
\\\\
R_{t + \Delta t} = (1 - \Theta_R) \, R_t + \Theta_R \, R'_t,
\\\\
V_{t + \Delta t} = (1 - \Theta_X) \Big[(1 - \Theta_V) \, V_t + \Theta_V \, V'_t \Big] + \Theta_X \, V''_t,
\end{cases}
\eeq
where
\beq
\label{eq:Theta}
\Theta_{\alpha} \sim {\rm Bernoulli}\left(\theta_{\alpha} \, \Delta t \right) \; \text{ with } \; \theta_{\alpha} \in \mathbb{R}^*_+  \; \text{ for } \; \alpha \in \{X, R, V\}.
\eeq
In particular, to integrate (i) and (ii) into~\eqref{eq:microdynlaws}, we let
\beq
\label{eq:Rp}
\left(R'_t | X_t, R_t\right) \sim J_S(r'|x_i, r) \; \text{ with } \; J_S(r'|x_i, r) := \begin{cases}
K_S(r'| r), \; \text{if } x_i = S,
\\\\
\delta_{r}(r'), \;\, \quad \, \text{if } x_i = I,
\end{cases}
\eeq
and
\beq
\label{eq:Vp}
\left(V'_t | X_t, V_t\right) \sim J_I(v'|x_i, v) \; \text{ with } \; J_I(v'|x_i, v) := \begin{cases}
\delta_{v}(v'), \quad\;\, \text{if } x_i = S,
\\\\
K_I(v'| v), \; \text{if } x_i = I,
\end{cases}
\eeq
where $\delta_{b}(a)$ is the Dirac delta centred in $a=b$. The definition~\eqref{eq:Rp} for $J_S$ captures the fact that the component of the microscopic state modelling the level of resistance to infection evolves in time, according to the kernel $K_S$, for susceptible individuals only. Similarly, the definition for $J_I$ given by~\eqref{eq:Vp} models the fact that the component of the microscopic state corresponding to the viral load evolves in time, according to the kernel $K_I$, only for infectious individuals.

Moreover, to integrate (iii) into~\eqref{eq:microdynlaws} we let
\beq
\label{eq:XpVpp}
\left(X'_t, V''_t | X_t, R_t, V_t, X^*_t, V^*_t \right) \sim T(x'_i, v'' | x_i, r, v, x^*_j, v^*) 
\eeq
with 
\beq
\label{eq:T}
\resizebox{.9\textwidth}{!}{$
T(x'_i, v'' | x_i, r, v, x^*_j, v^*) := 
\begin{cases}
q(r, v^*) \, Q(v'' |r, v^*) \, \delta_{I, x'_i} + \left(1 - q(r, v^*)\right) \, \delta_{v''}(v) \, \delta_{S, x'_i}, \;\;\,\, \text{if } x_i = S, \; x^*_j = I,
\\
\delta_{v}(v'') \, \delta_{S, x'_i}, \quad \quad \quad \quad \quad \quad \quad\quad\quad \quad \quad \quad \quad \quad \quad\quad\quad \quad \, \text{if } x_i = S, \; x^*_j = S,
\\
\delta_{v}(v'') \, \delta_{I, x'_i}, \quad \quad \quad \quad \quad \quad \quad\quad\quad \quad \quad \quad \quad \quad \quad\quad\quad \quad \; \text{if } x_i = I, \; x^*_j = I,
\\
\delta_{v}(v'') \, \delta_{I, x'_i}, \quad \quad \quad \quad \quad \quad \quad\quad\quad \quad \quad \quad \quad \quad \quad\quad\quad \quad \; \text{if } x_i = I, \; x^*_j = S,
\end{cases}
$}
\eeq
where $\delta_{\beta, \alpha}$ is the Kronecker delta centred in $\alpha = \beta$. The definition for $T$ given by~\eqref{eq:T} translates in mathematical terms to the biological ideas that: a contact with an infectious individual is required for a susceptible individual to become infectious; if, upon contact with an infectious individual, a susceptible individual does not become infectious then the individual's microscopic state remains unchanged.

\begin{rmrk}
\label{rmrk:T}
The definition given by~\eqref{eq:T} is obtained by rewriting $T(x'_i, v'' | x_i, r, v, x^*_j, v^*)$ as
$$
T(x'_i, v'' | x_i, r, v, x^*_j, v^*) = T_1(v'' | x'_i, x_i, r, v, x^*_j, v^*) \, T_2(x'_i | x_i, r, v, x^*_j, v^*),
$$
and then choosing $T_1$ and $T_2$ appropriately. For instance, the definition given by~\eqref{eq:T} for $x_i = S$ and $x^*_j = I$ is obtained by choosing
$$
T_2(x'_i | x_i=S, r, v, x^*_j=I, v^*) := q(r, v^*) \, \delta_{I, x'_i} + \left(1 - q(r, v^*) \right)  \, \delta_{S, x'_i}
$$
and
$$
T_1(v'' | x'_i, x_i, r, v, x^*_j, v^*) :=
\begin{cases}
\delta_{v''}(v), \qquad\;\, \text{if } x'_i = S,
\\\\
Q(v'' |r, v^*), \;\; \text{if } x'_i = I,
\end{cases}
$$
while definitions~\eqref{eq:T} for the other values of $x_i$ and $x^*_j$ are obtained by choosing
$$
T_2(x'_i | x_i, r, v, x^*_j, v^*) := \delta_{x_i, x'_i}  \quad \text{and} \quad T_1(v'' | x'_i, x_i, r, v, x^*_j, v^*) :=  \delta_{v''}(v).
$$
\end{rmrk}

\subsection{Formal derivation of the corresponding mesoscopic model}
\label{Dec:mesomodel}
In this section, we carry out a formal derivation of the mesoscopic model corresponding to the individual-based model presented in Section~\ref{Dec:micromodel}. In summary, we show that the population density functions of susceptible and infectious individuals, i.e. the functions
%with $f_S(r,v,t)$ and $f_I(r,v,t)$ being defined via~\eqref{eq:fSfI}, 
\beq
\label{eq:nSnI}
n_S(t,r) := \int_{\mathcal{V}} \, f_S(t,r,v) \di v \quad \text{and} \quad n_I(t,v) := \int_{\mathcal{R}} \, f_I(t,r,v) \di r,
\eeq
which model, respectively, the proportion of susceptible individuals with level of resistance $r$ and the proportion of infectious individuals with viral load $v$ at time $t$, are weak solutions of the following IDE system
\beq
\label{eq:model}
\resizebox{.9\textwidth}{!}{$
\begin{cases}
\displaystyle{\pa_t n_S = - \theta_X \, n_S \, \int_{\mathcal{V}} q(r, v^*) \, n_I(t,v^*) \di v^* + \theta_R \int_{\mathcal{R}} K_S(r | r') \, n_S(t,r') \di r' - \theta_R \, n_S}, 
\\\\
\displaystyle{\pa_t n_I = \theta_X \, \int_{\mathcal{R}} \int_{\mathcal{V}} q(r,v^*) \, Q(v | r, v^*) \, n_S(t,r) \, n_I(t,v^*) \di v^* \di r + \theta_V \int_{\mathcal{V}} K_I(v | v') \, n_I(t,v') \di v' - \theta_V \, n_I},
\end{cases}
$}
\eeq
which we complement with an initial condition of components $n_S(r,0) = n_{S,0}(r)$ and $n_I(v,0) = n_{I,0}(v)$ such that (see also Remark~\ref{rmrk:masscons})
\begin{equation}
\label{eq:ICa}
\displaystyle{n_{S,0} \in L^1 \cap L^{\infty}(\mathcal{R}), \;\; n_{S,0} \geq 0 \; \text{ a.e.}, \quad n_{I,0} \in L^1 \cap L^{\infty}(\mathcal{V}), \;\; n_{I,0} \geq 0 \; \text{ a.e.}}, 
\end{equation}

\begin{equation}
\label{eq:ICb}
\resizebox{.9\textwidth}{!}{$
\displaystyle{\int_{\mathcal{R}} n_{S,0}(r) \di r + \int_{\mathcal{V}} n_{I,0}(v) \di v = 1, \qquad 0 \leq \dfrac{\displaystyle{\int_{\mathcal{R}} r \, n_{S,0}(r) \di r}}{\displaystyle{\int_{\mathcal{R}} n_{S,0}(r) \di r}} \leq \sup \mathcal{R}, \qquad 0 \leq \dfrac{\displaystyle{\int_{\mathcal{V}} v \, n_{I,0}(v) \di v}}{\displaystyle{\int_{\mathcal{V}} n_{I,0}(v) \di v}} \leq \sup \mathcal{V},}
$}
\end{equation}
where $\sup (\cdot)$ denotes the supremum of the interval $(\cdot)$.

\paragraph{General evolution equation for expectations of observables.} Recalling the probability density function $f(t,x_i,r,v)$ given in~\eqref{eq:f}, we note that, since the components of the triple $\left(X_{t + \Delta t}, R_{t + \Delta t}, V_{t + \Delta t}\right)$ are given by~\eqref{eq:microdynlaws}, for any observable $\Phi : \mathcal{A} \times \mathcal{R} \times \mathcal{V} \to \mathbb{R}$ the expectation 
$$
\left<\Phi\left(X_{t}, R_{t},  V_{t}\right)\right> = \sum_{x_i \in \mathcal{A}} \int_{\mathcal{R}} \int_{\mathcal{V}} \Phi(x_i,r,v) \, f(t,x_i,r,v) \di v \di r
$$
satisfies (see, for instance,~\cite{pareschi2013BOOK})
\begin{align}
\label{eq:EVPhi1}
	\left<\Phi\left(X_{t+\Delta{t}},R_{t+\Delta{t}},V_{t+\Delta{t}}\right)\right> &= \theta_X \,\Delta{t}\left<\Phi\left(X'_t,R_{t},V''_t\right)\right>
		+\theta_R\,\Delta{t}\left<\Phi\left(X_t,R'_t,V_t\right)\right> \nonumber \\
	&\phantom{=} +\theta_V\,\Delta{t}\left<\Phi\left(X_t,R_{t},V'_t\right)\right> \nonumber \\
	&\phantom{=} +\left(1-\theta_X\,\Delta{t}-\theta_R\,\Delta{t}-\theta_V\,\Delta{t}\right)\,\left<\Phi\left(X_{t},R_{t},V_{t}\right)\right>+o(\Delta{t}) 
\end{align}
with
\beq
\label{eq:distXpRVpp}
\resizebox{.9\textwidth}{!}{$
\displaystyle\left(X'_t,  R_t, V''_t \right) \sim \sum_{x_i \in \mathcal{A}} \sum_{x^*_j \in \mathcal{A}} \int_{\mathcal{R}} \int_{\mathcal{V}} \int_{\mathcal{V}} T(x'_i, v'' | x_i, r, v, x^*_j, v^*) \, f(t,x_i,r,v) \, f(t,x^*_j, r^*, v^*) \di v \di v^* \di r^*,
$}
\eeq
\beq
\label{eq:distXRpVp}
\resizebox{.9\textwidth}{!}{$
\displaystyle\left(X_t, R'_t, V_t\right) \sim \int_{\mathcal{R}} J_S(r'|x_i, r) \, f(t,x_i,r,v) \di r, \quad \left(X_t, R_t, V'_t\right) \sim \int_{\mathcal{V}} J_I(v'|x_i, v) \, f(t,x_i,r,v) \di v.
$}
\eeq
Hence, rearranging terms in~\eqref{eq:EVPhi1}, dividing through by $\Delta t$ and then letting $\Delta t \to 0^+$, we formally obtain the following evolution equation
\begin{align}
\label{eq:EVPhi}
	\ddt \left<\Phi\left(X_{t},R_{t}, V_{t}\right) \right>  &= \theta_X \, \Big[\left<\Phi\left(X'_t, R_{t}, V''_t\right) \right> - \left<\Phi\left(X_{t}, R_{t},  V_{t}\right) \right> \Big] \nonumber \\
	&\phantom{=} + \, \theta_R \, \Big[\left<\Phi\left(X_t, R'_t, V_t\right) \right> -	\left<\Phi\left(X_{t}, R_{t},  V_{t}\right) \right> \Big] \nonumber \\
	&\phantom{=} +\theta_V \, \Big[\left<\Phi\left(X_t, R_{t}, V'_t \right) \right> - \left<\Phi\left(X_{t}, R_{t},  V_{t}\right) \right> \Big],
\end{align}
which can be regarded as a weak form of the kinetic equation for the probability density function $f$ with $\Phi$ playing the role of an arbitrary test function. Explicitly, expressing the expectations $\langle\cdot\rangle$ as summations/integrations with respect to $f$ and invoking~\eqref{eq:distXpRVpp},~\eqref{eq:distXRpVp} we write:
\begin{align}
\label{eq:kinetic.f}
	& \ddt\sum_{x_i\in\mathcal{A}}\int_\mathcal{R}\int_\mathcal{V}\Phi(x_i,r,v)f(t,x_i,r,v)\di v\di r= \nonumber \\
	& \resizebox{\textwidth}{!}{$\displaystyle\quad \theta_X\sum_{x_i,\,x_j^\ast,\,x_i'\in\mathcal{A}}\iint_{\mathcal{R}^2}\iiint_{\mathcal{V}^3}\Phi(x_i',r,v'')\,T(x'_i, v'' | x_i, r, v, x^*_j, v^*) \, f(t,x_i,r,v) \, f(t,x^*_j, r^*, v^*) \di v \di v^* \di v'' \di r \di r^*$} \nonumber \\
	& \quad +\theta_R\sum_{x_i\in\mathcal{A}}\iint_{\mathcal{R}^2}\int_\mathcal{V}\Phi(x_i,r',v)\,J_S(r'|x_i, r) \, f(t,x_i,r,v) \di v \di r \di r' \nonumber \\
	& \quad +\theta_V\sum_{x_i\in\mathcal{A}}\int_\mathcal{R}\iint_{\mathcal{V}^2}\Phi(x_i,r,v')\,J_I(v'|x_i, v) \, f(t,x_i,r,v) \di v \di v' \di r \nonumber \\
	& \quad -(\theta_X+\theta_R+\theta_V)\sum_{x_i\in\mathcal{A}}\int_\mathcal{R}\int_\mathcal{V}\Phi(x_i,r,v)f(t,x_i,r,v)\di v\di r.
\end{align}

\paragraph{Derivation of the IDE~\eqref{eq:model} for~\texorpdfstring{$\boldsymbol{n_S(t,r)}$}{}.} In order to derive the IDE~\eqref{eq:model} for the population density function $n_S(t,r)$, we choose
$$
\Phi(x_i,r,v) := \delta_{S, x_i} \, \phi(r) \,  {\bf 1}_{\mathcal{V}}(v),
$$
where ${\bf 1}_{\mathcal{V}}(v)$ is the indicator function of the set $\mathcal{V}$ and $\phi \in C^{\infty}\big(\mathcal{R}\big)$. Substituting the above expression of $\Phi(x_i,r,v)$ into~\eqref{eq:kinetic.f}, using the assumptions and definitions given by~\eqref{eq:Rp}-\eqref{eq:T}, and recalling~\eqref{eq:fSfI} and~\eqref{eq:nSnI} yields
\begin{align*}
	\resizebox{\textwidth}{!}{$
	\begin{aligned}[b]
		\ddt\int_{\mathcal{R}}\phi(r)\,n_S(t,r)\di r &= \theta_X\,\int_{\mathcal{R}}\int_{\mathcal{V}}\phi(r)\,{\bf 1}_{\mathcal{V}}(v'')\int_{\mathcal{V}}\int_{\mathcal{V}}\left(1-q(r,v^*)\right) \,
			\delta_{v''}(v) \, f_S(t,r,v) \, n_I(t,v^*) \di v \di v^* \di v'' \di r \\
		&\phantom{=} +\,\theta_X\,\int_{\mathcal{R}}\int_{\mathcal{V}}\phi(r)\,{\bf 1}_{\mathcal{V}}(v'')\int_{\mathcal{R}}\int_{\mathcal{V}}\delta_{v''}(v)\,f_S(t,r,v)\,n_S(t,r^*)\di v\di r^*\di v''\di r \\
		&\phantom{=} +\,\theta_R\,\int_{\mathcal{R}}\int_{\mathcal{V}}\phi(r')\,{\bf 1}_{\mathcal{V}}(v)\int_{\mathcal{R}}K_S(r'|r)\,f_S(t,r,v)\di r\di v\di r' \\
		&\phantom{=} -\,\theta_R\int_{\mathcal{R}}\int_{\mathcal{V}}\phi(r)\,{\bf 1}_{\mathcal{V}}(v)\,f_S(t,r,v)\di v\di r \\
		&= \theta_X\,\int_{\mathcal{R}}\int_{\mathcal{V}}\phi(r)\left(1-q(r,v^*)\right)\,n_S(t,r)\,n_I(t,v^*)\di v^*\di r \\
		&\phantom{=} +\theta_X\,\int_{\mathcal{R}}\int_{\mathcal{R}}\phi(r)\,n_S(t,r)\,n_S(t,r^*)\di r^*\di r-\,\theta_X\int_{\mathcal{R}}\phi(r)\,n_S(t,r)\di r \\
		&\phantom{=} +\,\theta_R\,\int_{\mathcal{R}}\phi(r')\int_{\mathcal{R}}K_S(r'|r)\,n_S(t,r)\di r\di r'-\,\theta_R\int_{\mathcal{R}}\phi(r)\,n_S(t,r)\di r.
	\end{aligned}
	$}
\end{align*}
Moreover, using the fact that (cf. the integral identity~\eqref{ass:f})
$$
\int_{\mathcal{R}} n_S(t,r^*) \di r^* + \int_{\mathcal{V}} n_I(t,v^*) \di v^* = 1 \;\; \forall t \geq 0
$$
along with the fact that
$$
\int_{\mathcal{R}} \phi(r') \int_{\mathcal{R}} K_S(r' | r) \, n_S(t,r) \di r \di r' = \int_{\mathcal{R}} \phi(r) \int_{\mathcal{R}} K_S(r | r') \, n_S(t,r') \di r' \di r,
$$
we find
\beqa
\int_{\mathcal{R}} \phi(r) \, \pa_t n_S(t,r) \di r &=& \int_{\mathcal{R}} \phi(r) \Big[- \theta_X \, n_S(t,r) \int_{\mathcal{V}} q(r,v^*) \, n_I(t,v^*) \di v^* + \nonumber \\
&& + \, \theta_R \int_{\mathcal{R}} K_S(r | r') \, n_S(t,r') \di r'  - \, \theta_R \, n_S(t,r) \Big] \di r, \nonumber
\eeqa
which is a weak formulation of the IDE~\eqref{eq:model} for $n_S(t,r)$.

\paragraph{Derivation of the IDE~\eqref{eq:model} for~\texorpdfstring{$\boldsymbol{n_I(t,v)}$}{}.} In order to derive the IDE~\eqref{eq:model} for the population density function $n_I(t,v)$, we choose
$$
\Phi(x_i,r,v) := \delta_{I, x_i} \, {\bf 1}_{\mathcal{R}}(r) \, \phi(v),
$$
where ${\bf 1}_{\mathcal{R}}(r)$ is the indicator function of the set $\mathcal{R}$ and $\phi \in C^{\infty}\big(\mathcal{V}\big)$. Substituting the above expression of $\Phi(x_i,r,v)$ into~\eqref{eq:kinetic.f}, using assumptions and definitions~\eqref{eq:Rp}-\eqref{eq:T}, and recalling~\eqref{eq:fSfI} and~\eqref{eq:nSnI} yields
\begin{align*}
	\begin{aligned}[b]
		\ddt\int_{\mathcal{V}}\phi(v)\,n_I(t,v)\di v &=
			\theta_X\,\int_{\mathcal{R}}\int_{\mathcal{V}}{\bf 1}_{\mathcal{R}}(r)\,\phi(v'')\int_{\mathcal{V}}q(r,v^*)\,Q(v''|r,v^*)\,n_S(t,r)\,n_I(t,v^*)\di v^*\di v''\di r \\
		&\phantom{=} +\,\theta_X\,\int_{\mathcal{R}}\int_{\mathcal{V}}{\bf 1}_{\mathcal{R}}(r)\,\phi(v'')\int_{\mathcal{V}}\int_{\mathcal{V}}\delta_{v''}(v)\,f_I(t,r,v)\,n_I(t,v^*)\di v\di v^*\di v''\di r \\
		&\phantom{=} +\,\theta_X\,\int_{\mathcal{R}}\int_{\mathcal{V}}{\bf 1}_{\mathcal{R}}(r)\,\phi(v'')\int_{\mathcal{R}}\int_{\mathcal{V}}\delta_{v''}(v)\,f_I(t,r,v)\,n_S(t,r^*)\di v\di r^*\di v''\di r \\
		&\phantom{=} -\,\theta_X\int_{\mathcal{R}}\int_{\mathcal{V}}{\bf 1}_{\mathcal{R}}(r)\,\phi(v)\,f_I(t,r,v)\di v\di r \\
		&\phantom{=} +\,\theta_V\,\int_{\mathcal{V}}\int_{\mathcal{R}}{\bf 1}_{\mathcal{R}}(r)\,\phi(v')\int_{\mathcal{V}}K_I(v'|v)\,f_I(t,r,v)\di v\di r\di v' \\
		&\phantom{=} -\,\theta_V\int_{\mathcal{R}}\int_{\mathcal{V}}{\bf 1}_{\mathcal{R}}(r)\,\phi(v)\,f_I(t,r,v)\di v\di r \\
		&= \theta_X\,\int_{\mathcal{R}}\int_{\mathcal{V}}\phi(v)\int_{\mathcal{V}}q(r,v^*)\,Q(v|r,v^*)\,n_S(t,r)\,n_I(t,v^*)\di v^*\di v\di r \\
		&\phantom{=} +\,\theta_X\,\int_{\mathcal{V}}\int_{\mathcal{V}}\phi(v)\,n_I(t,v)\,n_I(t,v^*)\di v\di v^* \\
		&\phantom{=} +\,\theta_X\,\int_{\mathcal{R}}\int_{\mathcal{V}}\phi(v)\,n_I(t,v)\,n_S(t,r^*)\di v\di r^* \\
		&\phantom{=} -\,\theta_X\int_{\mathcal{V}}\phi(v)\,n_I(t,v)\di v \\
		&\phantom{=} +\,\theta_V\,\int_{\mathcal{V}}\phi(v')\int_{\mathcal{V}}K_I(v'|v)\,n_I(t,v)\di v\di v'-\theta_V\int_{\mathcal{V}}\phi(v)\,n_I(t,v)\di v.
	\end{aligned}
\end{align*}

Moreover, using the fact that (cf. the integral identity~\eqref{ass:f})
$$
\int_{\mathcal{R}} n_S(t,r^*) \di r^* + \int_{\mathcal{V}} n_I(t,v^*) \di v^* = 1 \;\; \forall t \geq 0
$$
along with the fact that
$$
\int_{\mathcal{V}} \phi(v') \int_{\mathcal{V}} K_I(v' | v) \, n_I(t,v) \di v \di v' = \int_{\mathcal{V}} \phi(v) \int_{\mathcal{V}} K_I(v | v') \, n_I(t,v') \di v' \di v,
$$
we find
\beqa
\int_{\mathcal{V}} \phi(v) \, \pa_t n_I(t,v) \di v &=& \int_{\mathcal{V}} \phi(v) \Big[\theta_X \, \int_{\mathcal{R}} \int_{\mathcal{V}} q(r,v^*) \, Q(v | r, v^*) \, n_S(t,r) \, n_I(t,v^*) \di v^* \di r + \nonumber \\
&& + \, \theta_V \int_{\mathcal{V}} K_I(v | v') \, n_I(t,v') \di v'  - \, \theta_V \, n_I(t,v) \Big] \di v, \nonumber
\eeqa
which is a weak formulation of the IDE~\eqref{eq:model} for $n_I(t,v)$.
\begin{rmrk}
\label{rmrk:masscons}
Note that, under assumptions~\eqref{ass:KSa}, \eqref{ass:KIa}, and \eqref{ass:Qa}, adding together the differential equations obtained by integrating the IDE~\eqref{eq:model} for $n_S(t,r)$ over $\mathcal{R}$ and the IDE~\eqref{eq:model} for $n_I(t,v)$ over $\mathcal{V}$ gives
$$
\ddt \left(\int_{\mathcal{R}} n_S(t,r) \di r + \int_{\mathcal{V}} n_I(t,v) \di v \right) = 0,
$$
which implies that, for any initial condition that satisfies assumptions~\eqref{eq:ICa} and~\eqref{eq:ICb}, the integral identity~\eqref{ass:f} holds.
\end{rmrk}

\subsection{Formal derivation of the corresponding macroscopic model}
\label{Dec:macromodel}
Starting from an appropriately rescaled version of the mesoscopic model given by the IDE system~\eqref{eq:model} and introducing appropriate assumptions on the kernels $K_S$ and $K_I$, in this section we formally derive a possible macroscopic counterpart of the individual-based model presented in Section~\ref{Dec:micromodel}, which comprises the system of ordinary differential equations~\eqref{eq:ODEsNSNIMSMIgen} for the proportions $N_S$, $N_I$ of susceptible and infectious individuals, respectively, the mean level of resistance to infection $M_S$ of susceptible individuals, and the mean viral load $M_I$ of infectious individuals, i.e. the quantities
\beq
\label{def:NSI}
N_S(t) := \int_{\mathcal{R}} n_S(t,r) \di r, \quad N_I(t) := \int_{\mathcal{V}} n_I(t,v) \di v,
\eeq
\beq
\label{def:MSI}
N_S(t)M_S(t) := \int_{\mathcal{R}} r \, n_S(t,r) \di r, \quad N_I(t)M_I(t) := \int_{\mathcal{V}} v \, n_I(t,v) \di v.
\eeq

\paragraph{Key underlying assumptions on~\texorpdfstring{$\boldsymbol{K_S}$}{} and~\texorpdfstring{$\boldsymbol{K_I}$}{}.} Note that, under assumptions~\eqref{ass:KSa}-\eqref{ass:KIb}, one can easily prove that, for all functions $\Lambda_S \in C(\mathbb{R}_+, \mathcal{R})$ and $\Lambda_I \in C(\mathbb{R}_+, \mathcal{V})$, the integral operators
\beq\label{def:intopK}
\mathcal{K}_S[\psi](t,r) = \int_{\mathcal{R}} K_S(r | r') \, \psi(t,r') \di r', \quad \mathcal{K}_I[\psi](t,v) = \int_{\mathcal{V}} K_I(v | v') \, \psi(t,v') \di v'
\eeq
admit, respectively, a non-negative eigenfuction $\psi_S \in C\left(\mathbb{R}_+; L^1(\mathcal{R})\right)$ and a non-negative eigenfuction $\psi_I \in C\left(\mathbb{R}_+; L^1(\mathcal{V})\right)$ associated with the eigenvalue $1$ and satisfying the following conditions
\beq\label{def:intopKconst}
\int_{\mathcal{R}} \psi_S(t,r)  \di r = 1, \quad \int_{\mathcal{R}} r \, \psi_S(t,r) \di r = \Lambda_S(t), \quad \int_{\mathcal{V}} \psi_I(t,v)  \di v = 1, \quad \int_{\mathcal{V}} v \, \psi_I(t,v) \di v = \Lambda_I(t).
\eeq
Moreover, we let the kernels $K_S$ and $K_I$ satisfy the following additional assumptions
\beq
\label{ass:Krv}
\int_{\mathcal{R}} r \, K_S(r | r') \di r = r' \;\; \forall \, r' \in  \mathcal{R}, \quad \int_{\mathcal{V}} v \, K_I(v | v') \di v = v' \;\; \forall \, v' \in  \mathcal{V}.
\eeq

\begin{rmrk}
\label{rem:Krv}
Assumptions~\eqref{ass:Krv} correspond to a biological scenario where the level of resistance to infection of susceptible individuals and the viral load of infectious individuals remain, on average, unaltered. 
%Under this scenario, without loss of generality, 
\end{rmrk}

\paragraph{Rescaled mesoscopic model.} We introduce a small parameter $\e \in \mathbb{R}^*_+$ and let 
\beq
\label{ass:thetaseps}
\theta_X = \e, \quad \theta_R = O(1) \; \text{ and } \; \theta_V = O(1) \;\; \text{for } \e \to 0^+.
\eeq

\begin{rmrk}
\label{rem:theta}
The assumptions given by~\eqref{ass:thetaseps} correspond to a biological scenario where contacts between susceptible and infectious individuals that lead to disease transmission occur on a slower time scale than changes in the level of resistance to infection of susceptible individuals and in the viral load of infectious individuals.
\end{rmrk}

Furthermore, under assumptions~\eqref{ass:thetaseps}, in order to capture the effect of disease transmission driven by contacts between susceptible and infectious individuals (cf. Remark~\ref{rem:theta}), we use the time scaling $t \to t / \e$ in the IDE system~\eqref{eq:model} and, in so doing, we obtain the following system of IDEs for the population density functions $n^{\e}_S(t,r) \equiv n_S(\frac{t}{\e},r)$ and $n^{\e}_I(t,v) \equiv n_I(\frac{t}{\e},v)$
\beq
\label{eq:modeleps}
\resizebox{.9\textwidth}{!}{$
\begin{cases}
\displaystyle{\e \, \pa_t n^{\e}_S = - \e \, n^{\e}_S \, \int_{\mathcal{V}} q(r, v^*) \, n^{\e}_I(t,v^*) \di v^* + \theta_R \, \int_{\mathcal{R}} K_S(r | r') \, n^{\e}_S(t,r') \di r' - \theta_R \, n^{\e}_S}, 
\\\\
\displaystyle{\e \, \pa_t n^{\e}_I = \e \, \int_{\mathcal{R}} \int_{\mathcal{V}} q(r,v^*) \, Q(v | r, v^*) \, n^{\e}_S(t,r) \, n^{\e}_I(t,v^*) \di v^* \di r + \theta_V \int_{\mathcal{V}} K_I(v | v') \, n^{\e}_I(t,v') \di v' -  \theta_V \, n^{\e}_I},
\end{cases}
$}
\eeq
which we complement with the initial conditions 
\beq
\label{eq:ICeps}
n^{\e}_S(r,0)=n^0_S(r), \quad n^{\e}_I(v,0)=n^0_I(v),
\eeq
where $n^0_S$ and $n^0_I$ satisfy assumptions~\eqref{eq:ICa} and~\eqref{eq:ICb}. 

Integrating both sides of the IDE~\eqref{eq:modeleps}$_1$ over $\mathcal{R}$ and both sides of the IDE~\eqref{eq:modeleps}$_2$ over $\mathcal{V}$, and then using assumptions~\eqref{ass:KSa}, \eqref{ass:KIa}, and \eqref{ass:Qa} produces the following system of differential equations 
\beq
\label{eq:ODEsNSNIeps}
\begin{cases}
\displaystyle{\ddt \int_{\mathcal{R}} n^{\e}_S(t,r) \di r = - \int_{\mathcal{R}} \int_{\mathcal{V}} q(r, v^*) \, n^{\e}_S(t,r) \, n^{\e}_I(t,v^*) \, \di v^* \di r},
\\\\
\displaystyle{\ddt \int_{\mathcal{R}} n^{\e}_I(t,v) \di v = \int_{\mathcal{R}} \int_{\mathcal{V}} q(r,v^*) \, n^{\e}_S(t,r) \, n^{\e}_I(t,v^*) \, \di v^* \di r}.
\end{cases}
\eeq
Similarly, first multiplying both sides of the IDE~\eqref{eq:modeleps}$_1$ by $r$ and integrating over $\mathcal{R}$ the resulting IDE, then multiplying both sides of the IDE~\eqref{eq:modeleps}$_2$ by $v$ and integrating over $\mathcal{V}$ the resulting IDE, and finally using the additional assumptions~\eqref{ass:Krv} produces the following system of differential equations 
\beq
\label{eq:ODEsMSMIeps}
\begin{cases}
\displaystyle{\ddt \int_{\mathcal{R}} r \, n^{\e}_S(t,r) \di r = - \int_{\mathcal{V}} \left(\int_{\mathcal{R}} r \, q(r, v^*) \, n^{\e}_S(t,r) \, \di r \right) \, n^{\e}_I(t,v^*) \di v^*},
\\\\
\displaystyle{\ddt \int_{\mathcal{R}} v \, n^{\e}_I(t,v) \di v = \int_{\mathcal{R}} \int_{\mathcal{V}} \overline{Q}(r, v^*) \, q(r,v^*) \, n^{\e}_S(t,r) \, n^{\e}_I(t,v^*) \di v^* \di r},
\end{cases}
\eeq
where
\beq
\label{def:overlineQ}
\overline{Q}(r, v^*) := \int_{\mathcal{V}} v \,  \, Q(v | r, v^*) \di v.
\eeq

\paragraph{Formal asymptotic analysis for~\texorpdfstring{$\boldsymbol{\e\to 0^+}$}{}.} Under assumptions~\eqref{ass:Qa}, \eqref{ass:Qb}, and \eqref{ass:Krv}, denoting by $n_S(t,r)$ and $n_I(t,v)$ the leading-order terms of the asymptotic expansions for $n^{\e}_S(t,r)$ and $n^{\e}_I(t,v)$, letting $\e \to 0^+$ in the IDE system~\eqref{eq:modeleps} and then using the definitions given by~\eqref{def:NSI} and \eqref{def:MSI} formally gives
\beq
\label{eq:nSnIfact}
\begin{cases}
\displaystyle{\int_{\mathcal{R}} K_S(r | r') \, n_S(t,r') \di r' - n_S(t,r) = 0}
\\\\
\displaystyle{\int_{\mathcal{V}} K_I(v | v') \, n_I(t,v') \di v' -  n_I(t,v) = 0}
\end{cases}
\quad \Longrightarrow \quad
\begin{cases}
\displaystyle{n_S(t,r) = \psi_S(t,r) \, N_S(t)}
\\\\
\displaystyle{n_I(t,v) = \psi_I(t,v) \, N_I(t),}
\end{cases}
\eeq
where the non-negative functions $\psi_S \in C\left(\mathbb{R}_+; L^1(\mathcal{R})\right)$ and $\psi_I \in C\left(\mathbb{R}_+; L^1(\mathcal{V})\right)$ are eigenfunctions of the operators $\mathcal{K}_S$ and $\mathcal{K}_I$, defined via~\eqref{def:intopK}, associated with the eigenvalue $1$ and satisfying conditions~\eqref{def:intopKconst} with $\Lambda_S \equiv M_S$ and $\Lambda_I \equiv M_I$, that is,
\beq
\label{ass:KSKIaddMSMI}
\resizebox{.9\textwidth}{!}{$
\begin{cases}
 \displaystyle{\int_{\mathcal{R}} K_S(r | r') \, \psi_S(t,r') \di r' = \psi_S(t,r)},
 \\\\
 \displaystyle{\int_{\mathcal{R}} \psi_S(t,r)  \di r = 1, \; \int_{\mathcal{R}} r \, \psi_S(t,r) \di r = M_S(t),}
\end{cases}
%\quad
\begin{cases}
 \displaystyle{\int_{\mathcal{V}} K_I(v | v') \, \psi_I(t,v') \di v'  = \psi_I(t,v)},
 \\\\
 \displaystyle{\int_{\mathcal{V}} \psi_I(t,v)  \di v = 1, \; \int_{\mathcal{V}} v \, \psi_I(t,v) \di v = M_I(t).}
\end{cases}
$}
\eeq
 
%with $\psi_S$ and $\psi_I$ given by the eigenvalue problems~\eqref{ass:KSKIadd} and $N_S(t)$ and $N_I(t)$ defined via~\eqref{def:NSI}. 
Moreover, letting $\e \to 0^+$ in~\eqref{eq:ODEsNSNIeps}, substituting the expressions of $n_S(t,r)$ and $n_I(t,v)$ given by~\eqref{eq:nSnIfact} in the resulting system of differential equations, and then using the definitions given by~\eqref{def:NSI}, we formally obtain the following system of differential equations 
\beq
\label{eq:ODEsNSNI}
\begin{cases}
\displaystyle{\ddt N_S = - N_S \, N_I \int_{\mathcal{R}} \int_{\mathcal{V}} q(r, v^*) \, \psi_S(t,r) \, \psi_I(t,v^*) \di v^* \di r},
\\\\
\displaystyle{\ddt N_I = N_I \, N_S \int_{\mathcal{R}} \int_{\mathcal{V}} q(r,v^*) \, \psi_S(t,r) \, \psi_I(t,v^*) \di v^* \di r}.
\end{cases}
\eeq
Similarly, letting $\e \to 0^+$ in~\eqref{eq:ODEsMSMIeps}, substituting the expressions of $n_S(t,r)$ and $n_I(t,v)$ given by~\eqref{eq:nSnIfact} in the resulting system of differential equations, and then using the definitions given by~\eqref{def:NSI} and \eqref{def:MSI}, formally yields the following system of differential equations 
\beq
\label{eq:ODEsMSMI}
\begin{cases}
\displaystyle{\ddt \left(N_S \, M_S \right) = - N_S \, N_I \int_{\mathcal{V}} \left(\int_{\mathcal{R}} r \, q(r, v^*) \, \psi_S(t,r) \di r \right) \, \psi_I(t,v^*) \di v^*},
\\\\
\displaystyle{\ddt \left(N_I \, M_I \right) = N_I \, N_S \int_{\mathcal{R}} \int_{\mathcal{V}} \overline{Q}(r, v^*) \, q(r,v^*) \, \psi_S(t,r) \, \psi_I(t,v^*) \di v^* \di r}.
\end{cases}
\eeq
Finally, substituting the differential equation~\eqref{eq:ODEsNSNI}$_1$ into the differential equation~\eqref{eq:ODEsMSMI}$_1$ and the differential equation~\eqref{eq:ODEsNSNI}$_2$ into the differential equation~\eqref{eq:ODEsMSMI}$_2$, after a little algebra we find the following ODE system
\beq
\label{eq:ODEsNSNIMSMIgen}
\begin{cases}
\displaystyle{\ddt N_S = - N_S \, N_I \int_{\mathcal{R}} \int_{\mathcal{V}} q(r, v^*) \, \psi_S(t,r) \, \psi_I(t,v^*) \di v^* \di r},
\\\\
\displaystyle{\ddt N_I = N_I \, N_S \int_{\mathcal{R}} \int_{\mathcal{V}} q(r,v^*) \, \psi_S(t,r) \, \psi_I(t,v^*) \di v^* \di r},
\\\\
\displaystyle{\ddt M_S = N_I \int_{\mathcal{V}} \left(\int_{\mathcal{R}} \left(M_S - r\right) \, q(r, v^*) \, \psi_S(t,r) \di r \right) \, \psi_I(t,v^*) \di v^*},
\\\\
\displaystyle{\ddt M_I = N_S \int_{\mathcal{R}} \int_{\mathcal{V}} \left(\overline{Q}(r, v^*) - M_I\right) \, q(r,v^*) \, \psi_S(t,r) \, \psi_I(t,v^*) \di v^* \di r},
\end{cases}
\eeq
which is complemented with~\eqref{def:overlineQ} and~\eqref{ass:KSKIaddMSMI}, and is subject to the following initial conditions (cf. assumptions~\eqref{eq:ICa} and~\eqref{eq:ICb} and the definitions given by~\eqref{def:NSI} and \eqref{def:MSI}) 
\beq
\label{eq:ODEsNSNIMSMIICgen}
N_S(0) = N_{S,0} \in (0,1), \quad N_I(0) = 1- N_{S,0}, \quad M_S(0) = M_{S,0} \in \mathcal{R}, \quad M_I(0) = M_{I,0} \in \mathcal{V}.
\eeq

\begin{rmrk}
A simplified version of the macroscopic model is obtained in the case when, instead of satisfying the additional assumptions~\eqref{ass:Krv}, the kernels $K_S$ and $K_I$ satisfy the following additional assumptions
\beq
\label{ass:KSKI_analysis2}
K_S(r | r') \equiv K_S(r), \quad K_I(v | v') \equiv K_I(v),
\eeq
which correspond to a biological scenario where the new level of resistance to infection of susceptible individuals and the new viral load of infectious individuals that are acquired as a result of a change are independent of the original ones. This case is covered in detail in Appendix~\ref{sec:appD}.
\end{rmrk}

\section{Analytical results}
\label{sec:analysis}
In this section, we establish well-posedness of the mesoscopic and the macroscopic models derived in the previous section (see Section~\ref{sec:analysisWP}), and we study the long-time behaviour of the solution to the macroscopic model.

\subsection{Well-posedness of the mesoscopic and the macroscopic models}
\label{sec:analysisWP}
\paragraph{Well-posedness of the mesoscopic model.} Using a method similar to those employed in~\cite{borra2013asymptotic,delitala2012asymptotic} (see Appendix~\ref{sec:appA}), which relies on the Banach fixed-point theorem, one can prove the following well-posedness result for the mesoscopic model~\eqref{eq:model}:
\begin{thrm}
\label{theo:wpide}
Under assumptions~\eqref{ass:KSa}, \eqref{ass:KIa}, \eqref{ass:Qa}, \eqref{ass:q} and~\eqref{eq:Theta}, the Cauchy problem~\eqref{eq:model}-\eqref{eq:ICb} admits a unique solution of non-negative components $n_S \in C\left(\mathbb{R}_+; L^1(\mathcal{R})\right)$ and $n_I \in C\left(\mathbb{R}_+; L^1(\mathcal{V})\right)$, which satisfy the following a priori estimate
$$
\| n_S(t,\cdot)\|_{L^1(\mathcal{R})} + \| n_I(t,\cdot)\|_{L^1(\mathcal{V})} = 1 \;\; \forall \, t \geq 0.
$$
\end{thrm}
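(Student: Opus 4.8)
The plan is to recast the Cauchy problem~\eqref{eq:model}-\eqref{eq:ICb} as a fixed-point problem in a suitable Banach space and apply the Banach contraction principle, following the strategy of~\cite{borra2013asymptotic,delitala2012asymptotic}. Writing $u = (n_S, n_I)$, I would work in the space $X_T := C\bigl([0,T]; L^1(\mathcal{R})\bigr) \times C\bigl([0,T]; L^1(\mathcal{V})\bigr)$ equipped with the norm $\|u\|_{X_T} = \sup_{t \in [0,T]} \bigl(\|n_S(t,\cdot)\|_{L^1(\mathcal{R})} + \|n_I(t,\cdot)\|_{L^1(\mathcal{V})}\bigr)$, and define the integral operator $\mathcal{F}$ obtained by writing the IDE system~\eqref{eq:model} in mild (Duhamel) form: each equation has the structure $\partial_t n = -\lambda(t,\cdot)\,n + (\text{gain terms depending on } u)$, where the ``loss rate'' $\lambda$ is $\theta_X \int_{\mathcal{V}} q(r,v^*)\,n_I(t,v^*)\di v^* + \theta_R$ for the first equation and $\theta_V$ for the second; integrating the linear part explicitly and treating everything else as a source yields $\mathcal{F}(u)(t) = \bigl(\text{expression linear in the data }n_{S,0}, n_{I,0}\text{ and integral in }u\bigr)$.

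The key steps, in order, are: \textbf{(1)} show $\mathcal{F}$ maps a suitable closed ball $B_\rho \subset X_T$ into itself, using the normalisation $\int K_S(r|r')\di r' = 1$, $\int K_I(v|v')\di v' = 1$, $\int Q(v|r,v^*)\di v = 1$ from~\eqref{ass:KSa},~\eqref{ass:KIa},~\eqref{ass:Qa}, the bound $0 \le q \le 1$ from~\eqref{ass:q}, and Fubini's theorem to estimate the $L^1$ norms of the gain terms; \textbf{(2)} show $\mathcal{F}$ is a contraction on $B_\rho$ for $T$ small enough, by subtracting $\mathcal{F}(u_1) - \mathcal{F}(u_2)$, estimating each difference term in $X_T$-norm (the bilinear transmission terms are handled by adding and subtracting a mixed term, e.g. $n_{S,1}n_{I,1} - n_{S,2}n_{I,2} = (n_{S,1}-n_{S,2})n_{I,1} + n_{S,2}(n_{I,1}-n_{I,2})$, using the a priori $L^1$ bound on the ball), yielding a Lipschitz constant of the form $C(\rho)\,T$; \textbf{(3)} conclude existence and uniqueness on $[0,T]$ by Banach's theorem; \textbf{(4)} establish preservation of non-negativity — this follows from the mild formulation, since the loss term appears as a multiplicative exponential $e^{-\int_0^t \lambda}$ and all gain terms are integrals of non-negative quantities against non-negative kernels, so $n_{S,0}, n_{I,0} \ge 0$ propagates; \textbf{(5)} derive the conservation identity $\|n_S(t,\cdot)\|_{L^1} + \|n_I(t,\cdot)\|_{L^1} = 1$ exactly as in Remark~\ref{rmrk:masscons}, by integrating~\eqref{eq:model}$_1$ over $\mathcal{R}$ and~\eqref{eq:model}$_2$ over $\mathcal{V}$, adding, and noting all terms cancel so that the total mass is constant, hence equal to its initial value $1$ by~\eqref{eq:ICb}; \textbf{(6)} bootstrap from the local solution to a global one on $\mathbb{R}_+$: because the conservation identity gives a uniform a priori bound on $\|u(t)\|_{X_t}$ independent of $t$, the length $T$ of the existence interval furnished by step (3) can be chosen uniformly, so the solution is extended by a standard continuation argument to all of $\mathbb{R}_+$.

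I would also keep track of the $L^\infty$ bounds from~\eqref{eq:ICa}: since the loss term is non-negative and the gain terms are bounded in $L^\infty$ by the $L^1$ mass times $\|K_S\|_{L^1(\mathcal{R};C(\mathcal{R}))}$ (respectively the analogous quantities for $K_I$ and $Q$), a Grönwall argument on $\|n_S(t,\cdot)\|_{L^\infty}$ and $\|n_I(t,\cdot)\|_{L^\infty}$ shows these remain finite for all time, which is needed to make sense of the $L^1 \cap L^\infty$ regularity class and, in turn, of the macroscopic derivation; this is consistent with the regularity hypotheses $K_S \in L^1(\mathcal{R}; C(\mathcal{R}))$ and $Q \in L^1(\mathcal{V}; C(\mathcal{R}\times\mathcal{V}))$.

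The main obstacle I anticipate is step (2): the transmission terms are quadratic in the unknowns, so $\mathcal{F}$ is only \emph{locally} Lipschitz, and the contraction estimate requires care to produce a Lipschitz constant that genuinely vanishes as $T \to 0^+$ on the ball $B_\rho$ while simultaneously ensuring self-mapping in step (1) with the same $\rho$ and $T$ — the two conditions must be made compatible. The non-local gain term $\int_{\mathcal{R}}\int_{\mathcal{V}} q(r,v^*)Q(v|r,v^*)\,n_S(t,r)\,n_I(t,v^*)\di v^*\di r$ in~\eqref{eq:model}$_2$ needs the regularity assumption~\eqref{ass:Qa} to guarantee that the resulting function of $v$ lies in $L^1(\mathcal{V})$ (indeed in $L^1 \cap L^\infty$), so the ordering of integrations via Fubini and the use of $Q \in L^1(\mathcal{V}; C(\mathcal{R}\times\mathcal{V}))$ must be spelled out carefully. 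Everything else — non-negativity, the conservation identity, and the global continuation — is routine once the local theory is in place.
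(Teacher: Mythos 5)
Your proposal is correct and follows essentially the same route as the paper's proof in Appendix~\ref{sec:appA}: a Banach fixed-point argument in $C([0,T];L^1(\mathcal{R}))\times C([0,T];L^1(\mathcal{V}))$ with the sup-in-time norm, non-negativity read off from the semi-explicit Duhamel representation of each equation, the conservation identity obtained by integrating the two IDEs and adding, and global extension by iterating on time intervals of uniform length thanks to the conserved $L^1$ mass. The only immaterial difference is that the paper runs the contraction on the plain Picard operator $P[n](t)=n_0+\int_0^t F[n](s,\cdot)\,{\rm d}s$ and reserves the exponential (mild) formula solely for the non-negativity step, whereas you propose the mild form for the contraction itself, which merely costs an extra estimate on differences of exponentials since the loss rate in the $n_S$ equation depends on $n_I$.
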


Results analogous to those established by Theorem~\ref{theo:wpide} hold, for any $\e > 0$, also for the Cauchy problem defined by complementing the rescaled mesoscopic model~\eqref{eq:modeleps} with initial conditions~\eqref{eq:ICeps}.

\paragraph{Well-posedness of the macroscopic model.} The following well-posedness result for the macroscopic model defined by the ODE system~\eqref{eq:ODEsNSNIMSMIgen} coupled with the eigenproblems~\eqref{ass:KSKIaddMSMI} can be proved by using standard Cauchy-Lipschitz theory and, therefore, its proof is omitted here. We omit also the proofs of the uniform bounds~\eqref{eq:unifestNSNIMSMI} on $N_S(t)$, $N_I(t)$ and $M_I(t)$, which can be obtained through simple estimates, while we provide the proof of the upper bound~\eqref{eq:unifestNSNIMSMI} on $M_S(t)$ (see  Appendix~\ref{sec:appB}), since it requires a little bit more work.
\begin{thrm}
\label{theo:wpode}
Let the non-negative functions $\psi_S\in C\left(\mathbb{R}_+; L^1(\mathcal{R})\right)$ and $\psi_I\in C\left(\mathbb{R}_+; L^1(\mathcal{V})\right)$ satisfy~\eqref{ass:KSKIaddMSMI}. Under assumptions~\eqref{ass:Qa}-\eqref{ass:q}, the Cauchy problem~\eqref{eq:ODEsNSNIMSMIgen}-\eqref{eq:ODEsNSNIMSMIICgen} complemented with~\eqref{def:overlineQ} admits a unique solution of components $N_S, N_I, M_S, M_I \in C\left(\mathbb{R}_+\right)$, which satisfy the following a priori estimates
\beq
\label{eq:unifestNSNIMSMI}
\resizebox{.9\textwidth}{!}{$
0 \leq N_S(t) \leq 1, \;\;  0 \leq N_I(t) \leq 1, \;\; 0 \leq M_S(t) \leq \sup \mathcal{R}, \;\; 0 \leq M_I(t) \leq \sup \mathcal{V}, \;\; N_S(t)+N_I(t)=1
$}
\eeq
for all $t\geq 0$.
\end{thrm}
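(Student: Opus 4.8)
The plan is to read~\eqref{eq:ODEsNSNIMSMIgen} as a closed ODE system for the four-dimensional unknown $(N_S,N_I,M_S,M_I)$, obtain a local-in-time solution from the Cauchy--Lipschitz theorem, and then upgrade it to a global one by a priori estimates that simultaneously establish~\eqref{eq:unifestNSNIMSMI}. First I would use the fact, noted after~\eqref{def:intopKconst}, that under~\eqref{ass:KSa}--\eqref{ass:KIb} the normalisation $\int_{\mathcal{R}} r\,\psi_S(t,r)\di r=M_S(t)$ (resp.\ $\int_{\mathcal{V}} v\,\psi_I(t,v)\di v=M_I(t)$) together with the eigenrelation in~\eqref{ass:KSKIaddMSMI} selects a unique non-negative eigenfunction $\psi_S[M_S]\in L^1(\mathcal{R})$ (resp.\ $\psi_I[M_I]\in L^1(\mathcal{V})$) depending Lipschitz-continuously on its prescribed mean; substituting $\psi_S(t,\cdot)=\psi_S[M_S(t)]$ and $\psi_I(t,\cdot)=\psi_I[M_I(t)]$ turns the right-hand side of~\eqref{eq:ODEsNSNIMSMIgen} into a vector field that, thanks to $q\in C(\mathcal{R}\times\mathcal{V})$ with $0\le q\le1$ (\eqref{ass:q}) and to $\overline{Q}$ continuous with $\overline{Q}(r,v^\ast)\in\mathcal{V}$ (\eqref{ass:Qa}--\eqref{ass:Qb}), is locally Lipschitz in $(N_S,N_I,M_S,M_I)$. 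The Cauchy--Lipschitz theorem then yields a unique maximal $C^1$ solution, and the whole question reduces to showing that it cannot leave the box $[0,1]^2\times\mathcal{R}\times\mathcal{V}$.

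For the straightforward estimates I would argue as follows. Summing~\eqref{eq:ODEsNSNIMSMIgen}$_1$ and~\eqref{eq:ODEsNSNIMSMIgen}$_2$ gives $\ddt(N_S+N_I)=0$, hence $N_S(t)+N_I(t)\equiv N_{S,0}+(1-N_{S,0})=1$; writing $\ddt N_S=-a(t)\,N_S$ with $0\le a(t):=N_I\int_{\mathcal{R}}\int_{\mathcal{V}}q\,\psi_S\,\psi_I\di v^\ast\di r\le1$ gives $N_S(t)=N_{S,0}\exp\bigl(-\int_0^t a\bigr)\in(0,N_{S,0}]$, so $N_S$ stays positive and $N_I=1-N_S\in[1-N_{S,0},1)$; in particular $N_S,N_I\in[0,1]$. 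Since $\overline{Q}(r,v^\ast)\in\mathcal{V}$, setting $b(t):=N_S\int_{\mathcal{R}}\int_{\mathcal{V}}q\,\psi_S\,\psi_I\di v^\ast\di r\ge0$ one obtains $b(t)(\inf\mathcal{V}-M_I)\le\ddt M_I\le b(t)(\sup\mathcal{V}-M_I)$, and comparison of $M_I$ with the constant sub/supersolutions $\inf\mathcal{V}$ and $\sup\mathcal{V}$ (equivalently, Gr\"onwall applied to $\sup\mathcal{V}-M_I$ and to $M_I-\inf\mathcal{V}$) gives $0\le\inf\mathcal{V}\le M_I(t)\le\sup\mathcal{V}$, using $M_{I,0}\in\mathcal{V}$. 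The lower bound $0\le M_S(t)$ is equally elementary: with $0\le q\le1$ and $\int_{\mathcal{R}} r\,\psi_S\di r=M_S$ one has $0\le N_I\int_{\mathcal{R}}\int_{\mathcal{V}} r\,q\,\psi_S\,\psi_I\di v^\ast\di r\le M_S$, so~\eqref{eq:ODEsNSNIMSMIgen}$_3$ gives $\ddt M_S\ge-M_S$, whence $M_S(t)\ge M_{S,0}e^{-t}\ge0$ (a sharper bound $M_S\ge\inf\mathcal{R}$ follows from the splitting argument used below).

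The delicate step, carried out in Appendix~\ref{sec:appB}, is the upper bound $M_S(t)\le\sup\mathcal{R}$: the naive estimate $\lvert M_S-r\rvert\le\operatorname{diam}\mathcal{R}$ only controls $M_S$ on bounded time intervals, so I would instead exploit the mean-zero identity $\int_{\mathcal{R}}(M_S(t)-r)\,\psi_S(t,r)\di r=0$ that follows from the two normalisation identities in~\eqref{ass:KSKIaddMSMI}. Fixing $t$ and $v^\ast\in\mathcal{V}$ and splitting $\mathcal{R}$ at $r=M_S(t)$, on $\{r\le M_S\}$ one has $0\le(M_S-r)\,q(r,v^\ast)\,\psi_S\le(M_S-r)\,\psi_S$ since $0\le q\le1$, while on $\{r>M_S\}$ the integrand is $\le0$; hence
\beq
\int_{\mathcal{R}}(M_S-r)\,q(r,v^\ast)\,\psi_S(t,r)\di r\;\le\;\int_{\{r\le M_S\}}(M_S-r)\,\psi_S\di r\;=\;\int_{\{r>M_S\}}(r-M_S)\,\psi_S\di r\;\le\;\sup\mathcal{R}-M_S,
\eeq
the equality being the mean-zero identity and the last bound using $r\le\sup\mathcal{R}$ and $\int_{\mathcal{R}}\psi_S\le1$. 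Inserting this into~\eqref{eq:ODEsNSNIMSMIgen}$_3$ and integrating against $\psi_I(t,\cdot)$ yields $\ddt M_S\le N_I(t)\,(\sup\mathcal{R}-M_S)$; since then $\ddt\bigl(e^{\int_0^t N_I}(\sup\mathcal{R}-M_S)\bigr)\ge0$ and $M_{S,0}\in\mathcal{R}$, we conclude $M_S(t)\le\sup\mathcal{R}$ for all $t\ge0$.

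With~\eqref{eq:unifestNSNIMSMI} in hand, the maximal solution stays in a compact subset of the set on which the vector field is defined and Lipschitz, so it is global; uniqueness is the one furnished by Cauchy--Lipschitz, and continuity on $\mathbb{R}_+$ is automatic. I expect the only genuine obstacle to be the $M_S$ upper bound — specifically, the realisation that the mean-zero constraint carried by the eigenfunction $\psi_S[M_S]$ must be combined with $q\le1$ to beat the linear growth that the crude estimate would permit — together with the routine but not entirely trivial verification that $m\mapsto\psi_S[m]$ and $m\mapsto\psi_I[m]$ are well-defined and Lipschitz, which is precisely what makes the Cauchy--Lipschitz machinery applicable.
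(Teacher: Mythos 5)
Your proposal is correct and follows essentially the same route as the paper: local existence and uniqueness by Cauchy--Lipschitz, the bounds on $N_S$, $N_I$, $M_I$ and the lower bound on $M_S$ by elementary sign and Gr\"onwall arguments, and the upper bound on $M_S$ by combining $0\le q\le 1$ with the identity $\int_{\mathcal{R}}(M_S-r)\,\psi_S(t,r)\di r=0$, exactly the mechanism used in Appendix~\ref{sec:appB} (the paper phrases it as the pair of inequalities~\eqref{eq:diffineq1}--\eqref{eq:diffineq2} rather than your single inequality $\ddt M_S\le N_I(\sup\mathcal{R}-M_S)$, whose final step as displayed is only valid for $M_S\le\sup\mathcal{R}$, but the comparison argument goes through either way). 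Your remark that the Lipschitz dependence of $\psi_S[M_S]$, $\psi_I[M_I]$ on the prescribed means needs checking is a fair point that the paper leaves implicit in ``standard Cauchy--Lipschitz theory''.
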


\subsection{Long-time behaviour of the macroscopic model}
\label{sec:analysisLTB}
In the remainder of the article, without loss of generality, we will be letting
\beq
\label{def:domainsRV}
\mathcal{R} := [0,1], \quad \mathcal{V} := [0,1].
\eeq
Note that, under the definitions given by~\eqref{def:domainsRV}, the uniform bounds~\eqref{eq:unifestNSNIMSMI} specify to
\beq
\label{eq:aprioriest}
0 \leq N_S(t) \leq 1, \quad 0 \leq N_I(t) \leq 1, \quad 0 \leq M_S(t) \leq 1, \quad 0 \leq M_I(t) \leq 1, \quad N_S(t)+N_I(t)=1
\eeq
for all $t\geq 0$.

Under assumptions~\eqref{ass:KSa}-\eqref{ass:KIb} and \eqref{ass:Krv}, the eigenproblems~\eqref{ass:KSKIaddMSMI} can be solved explicitly, as established by Lemma~\ref{lemma:psiSpsiI}, the proof of which is provided in Appendix~\ref{sec:appC}. 

\begin{lem}
\label{lemma:psiSpsiI}
Let the kernels $K_S$ and $K_I$ satisfy assumptions~\eqref{ass:KSa}-\eqref{ass:KIb} and \eqref{ass:Krv}. Then, under assumptions~\eqref{def:domainsRV}, the unique non-negative solutions $\psi_S,\,\psi_I\in C\left(\mathbb{R}_+; L^1([0,1])\right)$ of the eigenproblems~\eqref{ass:KSKIaddMSMI} are
\beq
\label{eq:psiSpsiI}
\psi_S(t,r) = (1-M_S(t)) \, \delta_0(r) + M_S(t) \, \delta_1(r), \quad \psi_I(t,v) = (1-M_I(t)) \, \delta_0(v) + M_I(t) \, \delta_1(v)
\eeq
for all $t\geq 0$.
\end{lem}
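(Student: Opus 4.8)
The strategy is to show that the two conditions defining each eigenfunction — being fixed by the integral operator $\mathcal{K}_S$ (resp. $\mathcal{K}_I$) and having prescribed zeroth and first moments — force the claimed sum of Dirac masses, and then to verify that this candidate indeed satisfies all the requirements of~\eqref{ass:KSKIaddMSMI}. Since the two eigenproblems have identical structure, I would treat $\psi_S$ in detail and note that $\psi_I$ follows \emph{mutatis mutandis}. Throughout, $t$ is a parameter, so I would suppress it and work pointwise in time with $M:=M_S(t)\in[0,1]$.

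First I would establish that the candidate works. Fix $M\in[0,1]$ and set $\psi(r)=(1-M)\delta_0(r)+M\delta_1(r)$, a non-negative measure on $[0,1]$. Its total mass is $1$ and its first moment is $M$, so the normalisation conditions in~\eqref{ass:KSKIaddMSMI} hold. For the fixed-point identity, I would compute $\int_{\mathcal{R}}K_S(r\mid r')\,\psi(r')\,\di r'=(1-M)\,K_S(r\mid 0)+M\,K_S(r\mid 1)$; using assumption~\eqref{ass:KSa} this is a probability density in $r$, and using~\eqref{ass:Krv} the key point is that $\Supp(K_S)\subseteq\mathcal{R}\times\mathcal{R}=[0,1]^2$ combined with $\int_{\mathcal{R}}r\,K_S(r\mid 0)\,\di r=0$ forces $K_S(\cdot\mid 0)=\delta_0$ (a non-negative measure on $[0,1]$ with vanishing first moment must be concentrated at $0$), and likewise $\int_{\mathcal{R}}r\,K_S(r\mid 1)\,\di r=1$ together with $\mathrm{supp}\subseteq[0,1]$ forces $K_S(\cdot\mid 1)=\delta_1$. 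Hence $\mathcal{K}_S[\psi]=(1-M)\delta_0+M\delta_1=\psi$, as required. Continuity in $t$ of $r\mapsto\psi_S(t,r)$ as an element of (the appropriate completion including) $L^1$ follows from continuity of $M_S$, which holds by Theorem~\ref{theo:wpode}.

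Next I would prove uniqueness. Suppose $\psi\in L^1([0,1])$, $\psi\ge 0$, solves $\mathcal{K}_S[\psi]=\psi$ with unit mass and first moment $M$. Applying the observation just made, the range of $\mathcal{K}_S$ consists precisely of measures of the form $\psi'(0)\,\delta_0+\psi'(1)\,\delta_1$ — more carefully, $\mathcal{K}_S[\psi](r)=\big(\int_{\mathcal{R}}\psi(r')\,\di r'-c\big)\delta_0(r)+c\,\delta_1(r)$ is not quite right in general, so instead I would argue directly: integrating the fixed-point identity against $r$ and using~\eqref{ass:Krv} (Fubini) gives no new information, but testing against a bump supported away from $\{0,1\}$ together with the support/moment constraints on $K_S(\cdot\mid r')$ shows that any fixed point must itself be supported on $\{0,1\}$. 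Concretely, the adjoint-type identity $\int_{\mathcal{R}}\phi(r)\,\mathcal{K}_S[\psi](r)\,\di r=\int_{\mathcal{R}}\big(\int_{\mathcal{R}}\phi(r)K_S(r\mid r')\di r\big)\psi(r')\di r'$ for $\phi(r)=r$ and $\phi(r)=r^2$, combined with the strict-variance assumption~\eqref{ass:KSb} on $\mathring{\mathcal{R}}$, yields $\int r^2\psi=\int r\psi=M$ for the fixed point, so $\int r(1-r)\psi(r)\,\di r=0$; since $r(1-r)>0$ on $\mathring{\mathcal{R}}$ and $\psi\ge 0$, the measure $\psi$ charges only $\{0,1\}$, and the two moment conditions then pin down the weights as $1-M$ and $M$. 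This gives~\eqref{eq:psiSpsiI} for $\psi_S$; the argument for $\psi_I$ is identical with $\mathcal{V}$, $K_I$, $M_I$ in place of $\mathcal{R}$, $K_S$, $M_S$.

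The main obstacle is the uniqueness direction, and specifically making rigorous the passage from the abstract assumptions~\eqref{ass:KSa}–\eqref{ass:KIb},~\eqref{ass:Krv} to the conclusion that a non-negative fixed point must be atomic at the endpoints; one has to be careful that $\psi$ is only an $L^1$ function a priori while $\mathcal{K}_S[\psi]$ may be measure-valued, so the cleanest route is to work in the space of non-negative Radon measures on $[0,1]$, observe the eigenvalue problem is well-posed there, and exploit the mean-preservation~\eqref{ass:Krv} together with the convexity of $r\mapsto r^2$ (equivalently, positivity of $r(1-r)$ on the interior) to squeeze the support onto $\{0,1\}$. Everything else — existence of the candidate, its moments, and $t$-continuity — is a short verification.
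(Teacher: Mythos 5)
Your overall strategy coincides with the paper's (Appendix~\ref{sec:appC}): the uniqueness part rests on testing the fixed-point identity against $r^2$ and using the strict positivity of the conditional variance of $K_S(\cdot\,|\,r')$ on the interior (assumption~\eqref{ass:KSb}) together with the mean preservation~\eqref{ass:Krv} to force the support of $\psi_S(t,\cdot)$ onto $\{0,1\}$, after which the two moment conditions in~\eqref{ass:KSKIaddMSMI} fix the weights as $1-M_S$ and $M_S$. One step in your write-up is, however, circular as stated: testing with $\phi(r)=r^2$ does \emph{not} directly yield $\int r^2\psi \di r=\int r\psi \di r=M$ --- that identity is a consequence of $\psi$ being atomic at $\{0,1\}$, which is precisely what you are trying to prove --- and consequently $\int r(1-r)\psi(r)\di r=0$ is not available at that stage. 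What the $r^2$ test actually gives, via~\eqref{ass:Krv} and Fubini, is $\int_0^1\sigma_S^2(r')\,\psi(t,r')\di r'=0$ with $\sigma_S^2(r'):=\int_0^1 r^2K_S(r\,|\,r')\di r-(r')^2$; it is then the positivity of $\sigma_S^2$ on $(0,1)$ guaranteed by~\eqref{ass:KSb}, not the positivity of $r(1-r)$, that localises the support. This is exactly the paper's argument, and your closing paragraph (``exploit the mean-preservation together with the convexity of $r\mapsto r^2$ to squeeze the support onto $\{0,1\}$'') shows you have the correct mechanism in mind, so the repair is immediate. Your additional verification that the candidate is genuinely a fixed point --- via the observation that~\eqref{ass:Krv} and the support constraint force $K_S(\cdot\,|\,0)=\delta_0$ and $K_S(\cdot\,|\,1)=\delta_1$ --- is a sensible complement that the paper leaves implicit.
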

Substituting expressions~\eqref{eq:psiSpsiI} into the ODE system~\eqref{eq:ODEsNSNI}-\eqref{eq:ODEsMSMI} and using the fact that $q(r,0) = 0$ for all $r \in [0,1]$ (cf. assumptions~\eqref{ass:q}) yields
\beq
\label{eq:ODEsNSNIMSMI0}
\begin{cases}
\displaystyle{\ddt N_S = - N_S N_I M_I \Big[q(0,1) (1-M_S) + q(1,1) M_S \Big]},
\\\\
\displaystyle{\ddt N_I = N_S N_I M_I \Big[q(0,1) (1-M_S) + q(1,1) M_S \Big]},
\\\\
\displaystyle{\ddt \left(N_S \, M_S \right) = - N_S M_S N_I M_I q(1,1)},
\\\\
\displaystyle{\ddt \left(N_I \, M_I \right) = N_S N_I M_I \Big[\overline{Q}(0,1) q(0,1) (1-M_S) + \overline{Q}(1,1) q(1,1) M_S \Big]}.
\end{cases}
\eeq
Similarly, inserting~\eqref{eq:psiSpsiI} in the ODE system~\eqref{eq:ODEsNSNIMSMIgen}, after a little algebra we find the following system of ODEs
\beq
\label{eq:ODEsNSNIMSMI1}
\begin{cases}
\displaystyle{\ddt N_S = - N_S N_I M_I \Big[q(0,1) (1-M_S) + q(1,1) M_S \Big]},
\\\\
\displaystyle{\ddt N_I = N_S N_I M_I \Big[q(0,1) (1-M_S) + q(1,1) M_S \Big]},
\\\\
\displaystyle{\ddt M_S = N_I M_I \left(q(0,1) - q(1,1)\right) (1-M_S) M_S},
\\\\
\displaystyle{\ddt M_I = N_S \Big[q(0,1) (1-M_S) + q(1,1) M_S \Big]} \times
\\
\phantom{\ddt M_I = N_S} \quad \times \displaystyle{\left[\dfrac{\overline{Q}(0,1) q(0,1) (1-M_S) + \overline{Q}(1,1) q(1,1) M_S}{q(0,1) (1-M_S) + q(1,1) M_S } - M_I \right] M_I},
\end{cases}
\eeq
which we complement with the following initial condition (cf. assumptions~\eqref{eq:ODEsNSNIMSMIICgen})
\beq
\label{eq:ODEsNSNIMSMIIC}
N_S(0) = N_{S,0} \in (0,1), \quad N_I(0) = 1- N_{S,0}, \quad M_S(0) = M_{S,0} \in (0,1), \quad M_I(0) = M_{I,0} \in (0,1).
\eeq
Note that, under assumptions~\eqref{ass:Qa}, \eqref{ass:Qb}, \eqref{ass:q}, and \eqref{def:domainsRV}, the following conditions hold
\beq
\label{ass:qQbounds}
0 \leq q(0,1) \leq 1, \quad 0 \leq q(1,1) \leq 1, \quad 0 \leq \overline{Q}(0,1) \leq 1, \quad 0 \leq \overline{Q}(1,1) \leq 1.
\eeq
In particular, from now on we will also be assuming that
\beq
\label{ass:qQ0111pos}
\overline{Q}(0,1) > 0, \quad \overline{Q}(1,1) > 0.
\eeq

A complete characterisation of the long-term limit of the components of the solution to the Cauchy problem~\eqref{eq:ODEsNSNIMSMI1}-\eqref{eq:ODEsNSNIMSMIIC}, under alternative biological scenarios corresponding to different assumptions on the probability for a susceptible individual with the minimum level of resistance to infection $r=0$ or the maximum level of resistance to infection $r=1$ to become infectious due to contact with an infectious individual with the maximum viral load $v^*=1$ (i.e. different values of $q(0,1)$ and $q(1,1)$, respectively) is provided by Propositions~\ref{prop1}-\ref{prop3}. 

Proposition~\ref{prop1} shows that if $q(0,1) > q(1,1) > 0$ then all individuals in the system will eventually become infectious.

%----------------------------------------------------------------------------------------------------PROP 1
\begin{prpstn}
\label{prop1}
Let conditions~\eqref{ass:qQbounds}-\eqref{ass:qQ0111pos} hold and assume $0 < q(1,1) < q(0,1) \leq 1$. Then the components of the solution to the Cauchy problem~\eqref{eq:ODEsNSNIMSMI1}-\eqref{eq:ODEsNSNIMSMIIC} are such that
\beq
\label{eq:asylim1a}
\lim_{t \to \infty} N_S(t) = 0, \quad \lim_{t \to \infty} N_I(t) = 1, \quad  \lim_{t \to \infty} M_S(t) = 1,
\eeq
and
\beq
\label{eq:asylim1b}
\lim_{t \to \infty} M_I(t) = N_{I,0} M_{I,0} + N_{S,0} \Big[\overline{Q}(0,1) (1-M_{S,0}) + \overline{Q}(1,1) M_{S,0} \Big].
\eeq
\end{prpstn}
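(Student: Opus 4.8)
The plan is to exploit the monotonicity structure built into~\eqref{eq:ODEsNSNIMSMI1}, a conserved quantity, and a ``repulsion from zero'' property of $M_I$. First I would record some qualitative preliminaries. Beyond the a priori bounds of Theorem~\ref{theo:wpode}, since the logarithmic derivatives of $N_S$, $M_I$, $M_S$ and $1-M_S$ obtained from~\eqref{eq:ODEsNSNIMSMI1} are bounded, variation of constants shows that $0<N_S(t)<1$, $0<M_I(t)\le 1$ and $0<M_S(t)<1$ for all $t\ge 0$. Writing $B(t):=q(0,1)(1-M_S(t))+q(1,1)M_S(t)\in[q(1,1),q(0,1)]$, the $N_S$-equation gives $\ddt N_S=-N_SN_IM_IB\le 0$, so $N_S\downarrow N_S^\infty\in[0,N_{S,0}]$ and $N_I=1-N_S\uparrow$; and, since $q(0,1)>q(1,1)>0$, the $M_S$-equation gives $\ddt M_S=N_IM_I(q(0,1)-q(1,1))(1-M_S)M_S>0$, so $M_S\uparrow M_S^\infty\in(M_{S,0},1]$. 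It remains to identify $N_S^\infty$, $M_S^\infty$ and $\lim M_I$.

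The core step is to show $N_S^\infty=0$, by contradiction. If $N_S^\infty>0$, then from $\ddt\log N_S=-N_IM_IB\le-q(1,1)N_{I,0}M_I$ together with the lower bound on $\log N_S$ we get $M_I\in L^1(0,\infty)$. Setting $h:=-\log M_I$ (finite by the above) and $F(t):=\big(\overline{Q}(0,1)q(0,1)(1-M_S)+\overline{Q}(1,1)q(1,1)M_S\big)/B(t)$, the $M_I$-equation gives $\dot h=N_SBM_I-N_SBF$. Since $0\le N_SB\le 1$ one has $N_SBM_I\le M_I$, and since $F$ is a convex combination of $\overline{Q}(0,1)$ and $\overline{Q}(1,1)$ one has $F\ge\underline{Q}:=\min\{\overline{Q}(0,1),\overline{Q}(1,1)\}>0$ by~\eqref{ass:qQ0111pos}, hence $N_SBF\ge N_S^\infty q(1,1)\underline{Q}=:c>0$; thus $\dot h\le M_I-c$ and $h(t)\le h(0)+\|M_I\|_{L^1}-ct\to-\infty$, i.e. $M_I(t)\to\infty$, contradicting $M_I\le 1$. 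Therefore $N_S\to 0$ and $N_I\to 1$.

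Next I would pin down $M_S^\infty$ using a conserved quantity: a direct computation from~\eqref{eq:ODEsNSNIMSMI1} shows that $H:=N_S\,M_S^{\,a}(1-M_S)^{-b}$, with $a:=q(0,1)/(q(0,1)-q(1,1))$ and $b:=q(1,1)/(q(0,1)-q(1,1))$ (both positive), satisfies $\ddt\log H\equiv 0$, so $H(t)\equiv H_0>0$, i.e. $M_S(t)^{-a}(1-M_S(t))^b=N_S(t)/H_0$. Because $m\mapsto m^{-a}(1-m)^b$ is a continuous strictly decreasing bijection of $(0,1)$ onto $(0,\infty)$, the right-hand side tends to $0$ by the previous step, and $M_S\to M_S^\infty\in(0,1]$, we must have $M_S^\infty=1$; this establishes~\eqref{eq:asylim1a}. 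For $\lim M_I$, I would use the identities $N_SN_IM_I\,q(0,1)(1-M_S)=-\ddt\big(N_S(1-M_S)\big)$ and $N_SN_IM_I\,q(1,1)M_S=-\ddt(N_SM_S)$ (checked from~\eqref{eq:ODEsNSNIMSMI1}) in the equation for $N_IM_I$ in~\eqref{eq:ODEsNSNIMSMI0} to obtain $\ddt(N_IM_I)=-\overline{Q}(0,1)\ddt N_S+\big(\overline{Q}(0,1)-\overline{Q}(1,1)\big)\ddt(N_SM_S)$; integrating on $(0,t)$ and letting $t\to\infty$ with $N_S\to0$, $N_SM_S\to0$, $N_I\to1$ gives $\lim M_I=N_{I,0}M_{I,0}+N_{S,0}\big[\overline{Q}(0,1)(1-M_{S,0})+\overline{Q}(1,1)M_{S,0}\big]$, which is~\eqref{eq:asylim1b}.

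The main obstacle is the core step: monotonicity by itself only yields convergence of $N_S$ and $M_S$ to \emph{some} limits, and the substance is to show these are $0$ and $1$. The key mechanism is the ``repulsion'' bound $\dot h\le M_I-c$ with $c>0$, which says $M_I$ cannot be integrable -- because the target value $F$ in its equation stays bounded below by $\underline{Q}>0$, which is precisely where assumption~\eqref{ass:qQ0111pos} is used -- so that $\int_0^\infty N_IM_I=\infty$ and hence $N_S\to 0$; after that the conserved quantity $H$ delivers $M_S\to1$ and the remaining identifications are algebra.
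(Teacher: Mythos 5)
Your proof is correct, and while the overall skeleton (monotonicity of $N_S$ and $M_S$, the first integral linking $N_S$ and $M_S$, and integration of the equation for $N_IM_I$) parallels the paper's, you execute the two decisive steps differently — and, in both cases, more convincingly. First, the paper deduces $N_S\to 0$ and $M_S\to 1$ essentially from the pointwise positivity of the decay/growth rates $F_1$ and $F_3$, which by itself does not rule out convergence of $N_S$ to a positive limit (the rate could a priori be integrable); your contradiction argument closes exactly this gap: assuming $N_S^\infty>0$ forces $M_I\in L^1$, while the lower bound $G(M_S)\ge\min\{\overline{Q}(0,1),\overline{Q}(1,1)\}>0$ from~\eqref{ass:qQ0111pos} yields $\ddt(-\log M_I)\le M_I-c$ with $c>0$ and hence the contradiction $M_I\to\infty$. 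You then obtain $M_S\to1$ as a consequence of $N_S\to0$ via the conserved quantity $H=N_S M_S^{a}(1-M_S)^{-b}$, which is the same first integral the paper derives by separating variables in ${\rm d}M_S/{\rm d}N_S$ (your $a,b$ are the paper's $p$ and $p-1$, and $H_0$ is the paper's $k$), but used in the opposite logical direction. Second, for~\eqref{eq:asylim1b} the paper divides ${\rm d}P_I/{\rm d}M_S$ and computes an explicit antiderivative involving $M_S^{-p}(1-M_S)^{p-1}$, whereas you observe that the source term in the $P_I=N_IM_I$ equation is an exact time derivative, $-\overline{Q}(0,1)\ddt N_S+(\overline{Q}(0,1)-\overline{Q}(1,1))\ddt(N_SM_S)$, so the limit follows by integrating and passing to the limit with $N_S\to0$, $N_SM_S\to0$, $N_I\to1$. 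This is cleaner, avoids the sign/exponent bookkeeping around $p>1$, and makes transparent why the limit value is a convex combination determined only by the initial data. All intermediate claims I checked (the bounds on $B$, the convex-combination bound on $G$, the vanishing of $\ddt\log H$, and the exact-differential identities) are correct.
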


\begin{proof}
We start by noting that the system of differential equations~\eqref{eq:ODEsNSNIMSMI1} can be rewritten as
\beq
\label{eq:ODEsNSNIMSMI1rev}
\begin{cases}
\displaystyle{\ddt N_S = - F_1(N_I,M_S,M_I) N_S},
\\\\
\displaystyle{\ddt N_I = F_2(N_S,M_S,M_I) N_I},
\\\\
\displaystyle{\ddt M_S = F_3(N_I,M_I) (1-M_S) M_S},
\\\\
\displaystyle{\ddt M_I = F_4(N_S,M_S) \left[G(M_S) - M_I \right] M_I},
\end{cases}
\eeq
where
\begin{align*}
 & F_1(N_I,M_S,M_I) := N_I M_I \Big[q(0,1) (1-M_S) + q(1,1) M_S \Big], \\[3mm]
 & F_2(N_S,M_S,M_I) := N_S M_I \Big[q(0,1) (1-M_S) + q(1,1) M_S \Big], \\[3mm]
 & F_3(N_I,M_I) := N_I M_I \left(q(0,1) - q(1,1)\right), \\[3mm]
 & F_4(N_S,M_S) := N_S \Big[q(0,1) (1-M_S) + q(1,1) M_S\Big], \\[3mm]
 & G(M_S) := \dfrac{\overline{Q}(0,1) q(0,1) (1-M_S) + \overline{Q}(1,1) q(1,1) M_S}{q(0,1) (1-M_S) + q(1,1) M_S }.
\end{align*}

{\it Positivity of $N_I(t)$ and $M_I(t)$.} When conditions~\eqref{ass:qQbounds}-\eqref{ass:qQ0111pos} hold, the a priori estimates~\eqref{eq:aprioriest} ensure that
$$
F_2[N_S,M_S,M_I](t) \geq 0, \quad F_4[N_S,M_S](t) \geq 0, \quad G[M_S](t)>0 \quad \forall \, t \geq 0
$$
and, therefore, the solutions to the differential equations~\eqref{eq:ODEsNSNIMSMI1rev}$_2$ and~\eqref{eq:ODEsNSNIMSMI1rev}$_4$ subject, respectively, to the initial conditions $N_I(0)$ and $M_I(0)$ given by~\eqref{eq:ODEsNSNIMSMIIC} are such that  
\beq
\label{eq:NIMIpositive}
N_I(t) > 0, \quad M_I(t) > 0 \quad \forall \, t \geq 0.
\eeq   
\\
{\it Asymptotic behaviour of $N_S(t)$, $N_I(t)$ and $M_S(t)$ for $t \to \infty$.} The positivity results~\eqref{eq:NIMIpositive} along with the a priori estimates~\eqref{eq:aprioriest} ensure that 
$$
F_1[N_I,M_S,M_I](t) > 0 \quad \forall \, t \geq 0.
$$
Moreover, when $0 < q(1,1) < q(0,1) \leq 1$, the positivity results~\eqref{eq:NIMIpositive} also ensure that 
$$
F_3[N_I,M_I](t) > 0 \quad \forall \, t \geq 0.
$$
Hence, the solutions to the differential equations~\eqref{eq:ODEsNSNIMSMI1rev}$_1$ and~\eqref{eq:ODEsNSNIMSMI1rev}$_3$ subject, respectively, to the initial conditions $N_S(0)$ and $M_S(0)$ given by~\eqref{eq:ODEsNSNIMSMIIC} are such that the asymptotic results~\eqref{eq:asylim1a} on $N_S(t)$ and $M_S(t)$ hold. Furthermore, the asymptotic result~\eqref{eq:asylim1a} on $N_I(t)$ can easily be obtained by using the asymptotic result~\eqref{eq:asylim1a} on $N_S(t)$ along with the fact that $N_I(t)=1-N_S(t)$ for all $t \geq 0$ (cf. the a priori estimates~\eqref{eq:aprioriest}).
\\\\
{\it Asymptotic behaviour of $M_I(t)$ for $t \to \infty$.} 
Dividing the differential equation \eqref{eq:ODEsNSNIMSMI1}$_3$ by the differential equation \eqref{eq:ODEsNSNIMSMI1}$_1$ yields
\beq \nonumber
\frac{\di M_S}{\di N_S}= \frac{N_IM_IM_S(1-M_S)(q(0,1)-q(1,1))}{-N_SN_IM_I[(1-M_S)q(0,1)+M_Sq(1,1)]}=\frac{M_S(1-M_S)(q(0,1)-q(1,1))}{-N_S[(1-M_S)q(0,1)+M_Sq(1,1)]}.
\eeq
Hence, under the initial conditions $N_S(0)$ and $M_S(0)$ given by~\eqref{eq:ODEsNSNIMSMIIC}, we have
\begin{eqnarray*}
&& \int_{M_{S,0}}^{M_S} \Biggr( \frac{ q(0,1)}{q(0,1)-q(1,1)}\frac{1}{M_S}+\frac{q(1,1)}{M_S(q(1,1)-q(0,1))+q(0,1)-q(1,1)} \Biggl )\di M_S=-\int_{N_{S,0}}^{N_S} \frac{\di N_S}{N_S}
\\\\
&& 
\resizebox{.95\textwidth}{!}{$
\displaystyle\Longrightarrow \frac{q(0,1)}{q(0,1)-q(1,1)}\log{\frac{M_S}{M_{S,0}}}+\frac{q(1,1)}{q(1,1)-q(0,1)}\log{\frac{|M_S(q(1,1)-q(0,1))+q(0,1)-q(1,1)|}{|M_{S,0}(q(1,1)-q(0,1))+q(0,1)-q(1,1)|}}=\log{\frac{N_{S,0}}{N_S}},
$}
\end{eqnarray*}
from which, introducing the notation
\beq \label{p}
\frac{q(0,1)}{q(0,1)-q(1,1)} =: p \quad \text{and} \quad \frac{q(1,1)}{q(1,1)-q(0,1)} =: 1-p,
\eeq
we obtain
$$ \Biggr( \frac{M_S}{M_{S,0}} \Biggl) ^p \Biggr( \frac{|(1-M_S)(q(0,1)-q(1,1))|}{|(1-M_{S,0})(q(0,1)-q(1,1))|} \Biggl) ^{1-p}= \frac{N_{S,0}}{N_S}. $$
From the latter equation, using the fact that $1-M_{S,0}>0$ and $1-M_S \geq 0$ (cf. assumptions~\eqref{eq:ODEsNSNIMSMIIC} and the a priori estimates~\eqref{eq:aprioriest}), we obtain
\beq \label{NSfuncMS}
N_S=\frac{k}{M_S^p(1-M_S)^{1-p}}, \qquad k := M_{S,0}^p(1-M_{S,0})^{1-p}N_{S,0}.
\eeq

Moreover, introducing the notation $P_I := N_I M_I$ and combining the differential equations~\eqref{eq:ODEsNSNIMSMI1}$_1$, \eqref{eq:ODEsNSNIMSMI1}$_3$ and \eqref{eq:ODEsNSNIMSMI0}$_4$ yields the following system of differential equations 
\beq
\label{eq:ODEsNSMSPI_generale}
\begin{cases}
\displaystyle{\ddt N_S = -N_SP_I\biggl[q(0,1)(1-M_S)+q(1,1)M_S\biggr]},
\\\\
\displaystyle{\ddt  M_S  =P_I(q(0,1)-q(1,1))(1-M_S)M_S },
\\\\
\displaystyle{\ddt P_I  = P_IN_S \biggl[ q(0,1)\overline{Q}(0,1)(1-M_S)+q(1,1)\overline{Q}(1,1)M_S\biggr]}.
\end{cases}
\eeq
Dividing the differential equation~\eqref{eq:ODEsNSMSPI_generale}$_3$ by the differential equation~\eqref{eq:ODEsNSMSPI_generale}$_2$ and substituting the expression for $N_S$ given by~\eqref{NSfuncMS} into the resulting differential equation gives 
$$
\dfrac{\di P_I}{\di M_S} = \frac{k[ q(0,1)\overline{Q}(0,1)(1-M_S)+q(1,1)\overline{Q}(1,1)M_S]}{M_S^p(1-M_S)^{1-p}M_S(1-M_S)(q(0,1)-q(1,1))}.
$$
Hence, under the initial conditions $P_I(0) := N_I(0) M_I(0)$ and $M_S(0)$ given by~\eqref{eq:ODEsNSNIMSMIIC}, we have
\begin{eqnarray*}
&& \int_{P_{I,0}}^{P_I} \di P_I= \frac{k}{q(0,1)-q(1,1)}\int_{M_{S,0}}^{M_S} \frac{q(0,1)\overline{Q}(0,1)(1-M_S)+q(1,1)\overline{Q}(1,1)M_S}{M_S^{p+1}(1-M_S)^{2-p}} \di M_S
\\\\
&& 
\resizebox{.9\textwidth}{!}{$
\displaystyle\Longrightarrow P_I-P_{I,0}=\frac{k}{q(0,1)-q(1,1)} \dfrac{(1-M_{S,0})^{p-1}}{M_{S,0}^{p}} \frac{\overline Q(0,1)q(0,1)(M_{S,0}(1-p)+p-1) + p \, \overline Q(1,1)q(1,1)M_{S,0}}{(p-1)p} + 
$}
\\\\
&&
\resizebox{.9\textwidth}{!}{$
\displaystyle\phantom{\Longrightarrow P_I-P_{I,0}=} - \frac{k}{q(0,1)-q(1,1)} \dfrac{(1-M_{S})^{p-1}}{M_{S}^{p}} \frac{\overline Q(0,1)q(0,1)(M_S(1-p)+p-1) + p \, \overline Q(1,1)q(1,1)M_S}{(p-1)p}.
$}
\end{eqnarray*}
Since $P_I:=N_I M_I$ and $P_{I,0}:=N_{I,0} M_{I,0}$, recalling the definition of $k$ given by~\eqref{NSfuncMS} alongside the definitions of $p$ and $1-p$ given by~\eqref{p}, and using the fact that if $0 < q(1,1) < q(0,1) \leq 1$ then $p>1$ along with the asymptotic results~\eqref{eq:asylim1a} on $M_S(t)$ and $N_I(t)$, from the latter equation we obtain the asymptotic result~\eqref{eq:asylim1a} on $M_I(t)$.
\end{proof}

Moreover, Proposition~\ref{prop2} shows that if $q(0,1) > q(1,1) = 0$ then an endemic equilibrium is eventually established.

%----------------------------------------------------------------------------------------------------PROP 2
\begin{prpstn}
\label{prop2}
Let conditions~\eqref{ass:qQbounds}-\eqref{ass:qQ0111pos} hold and assume $q(1,1)=0$ and $0 < q(0,1) \leq 1$. Then the components of the solution to the Cauchy problem~\eqref{eq:ODEsNSNIMSMI1}-\eqref{eq:ODEsNSNIMSMIIC} are such that
\beq
\label{eq:asylim2a}
\lim_{t \to \infty} N_S(t) = N_{S,0} M_{S,0}, \quad \lim_{t \to \infty} N_I(t) = 1-N_{S,0} M_{S,0}, \quad  \lim_{t \to \infty} M_S(t) = 1,
\eeq
and
\beq
\label{eq:asylim2b}
\lim_{t \to \infty} M_I(t) = \overline{Q}(0,1) - \dfrac{\left(\overline{Q}(0,1)-M_{I,0}\right) \left(1-N_{S,0}\right)}{1-N_{S,0} M_{S,0}}.
\eeq
\end{prpstn}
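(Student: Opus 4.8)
The plan is to follow the skeleton of the proof of Proposition~\ref{prop1}, but to exploit the fact that the choice $q(1,1)=0$ makes the system considerably simpler: it produces two exact conservation laws that essentially close the argument. First I would specialise the systems~\eqref{eq:ODEsNSNIMSMI0} and~\eqref{eq:ODEsNSNIMSMI1} to $q(1,1)=0$ (in~\eqref{eq:ODEsNSNIMSMI1} the quotient in the $M_I$-equation then collapses to the constant $\overline{Q}(0,1)$, using that $M_S(t)<1$ for every finite $t$ since $M_S$ solves a logistic-type equation with $M_{S,0}<1$), obtaining
\[
\ddt N_S=-N_SN_IM_I\,q(0,1)(1-M_S),\qquad \ddt M_S=N_IM_I\,q(0,1)(1-M_S)M_S,
\]
\[
\ddt M_I=N_S\,q(0,1)(1-M_S)\bigl(\overline{Q}(0,1)-M_I\bigr)M_I,\qquad N_I=1-N_S.
\]
From the a priori estimates~\eqref{eq:aprioriest} one reads off $\ddt N_I\geq 0$ and $\ddt M_S\geq 0$, hence $N_I(t)\geq N_{I,0}>0$ and $M_S(t)\geq M_{S,0}>0$ for all $t\geq 0$, with $M_S$ nondecreasing and bounded above by $1$.

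The core of the argument is two conservation laws, which I would extract directly from the product form~\eqref{eq:ODEsNSNIMSMI0} with $q(1,1)=0$. Its third equation gives $\ddt(N_SM_S)=0$, so $N_S(t)M_S(t)\equiv N_{S,0}M_{S,0}$; and its second and fourth equations give $\ddt(N_IM_I)=N_SN_IM_I\,\overline{Q}(0,1)q(0,1)(1-M_S)=\overline{Q}(0,1)\,\ddt N_I$, whence $N_I\bigl(M_I-\overline{Q}(0,1)\bigr)\equiv N_{I,0}\bigl(M_{I,0}-\overline{Q}(0,1)\bigr)$, that is,
\[
M_I=\overline{Q}(0,1)+\frac{N_{I,0}\bigl(M_{I,0}-\overline{Q}(0,1)\bigr)}{N_I}.
\]
Since $N_{I,0}\leq N_I(t)\leq 1$, this identity yields the uniform lower bound $M_I(t)\geq m:=\min\{M_{I,0},\overline{Q}(0,1)\}>0$ for all $t\geq 0$.

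Next I would prove $M_S(t)\to 1$. Inserting the lower bounds $N_I\geq N_{I,0}$, $M_I\geq m$, $M_S\geq M_{S,0}$ and $q(0,1)>0$ into the $M_S$-equation gives $\ddt M_S\geq \gamma(1-M_S)$ with $\gamma:=N_{I,0}\,m\,q(0,1)\,M_{S,0}>0$, so by comparison $0\leq 1-M_S(t)\leq (1-M_{S,0})e^{-\gamma t}\to 0$. From the first conservation law, $N_S=N_{S,0}M_{S,0}/M_S\to N_{S,0}M_{S,0}$ and $N_I=1-N_S\to 1-N_{S,0}M_{S,0}$, which is~\eqref{eq:asylim2a}. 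Finally, letting $t\to\infty$ in the displayed expression for $M_I$ and using $N_{I,0}=1-N_{S,0}$ yields~\eqref{eq:asylim2b}.

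The only genuinely delicate point is the convergence $M_S\to 1$: it is what forces the need for uniform-in-time lower bounds on $N_I$ and $M_I$, and the bound on $M_I$ in turn rests on the conservation of $N_IM_I$. The key observation driving everything — and which is unavailable in the setting of Proposition~\ref{prop1}, where $q(1,1)>0$ — is that $N_SM_S$ is \emph{exactly} conserved when $q(1,1)=0$; once this is noticed, the remaining computations are routine.
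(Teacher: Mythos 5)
Your proposal is correct, and it reaches the same conclusions by a route that is structurally close to, but cleaner than, the paper's. The paper obtains $N_S=N_{S,0}M_{S,0}/M_S$ by integrating ${\rm d}N_S/{\rm d}M_S=-N_S/M_S$, which is exactly your first conservation law $N_SM_S\equiv N_{S,0}M_{S,0}$ read off from~\eqref{eq:ODEsNSNIMSMI0}$_3$; the two are equivalent. The real difference is in the treatment of $M_I$: the paper divides the $M_I$-equation by the $M_S$-equation, integrates $\int {\rm d}M_I/(\overline{Q}(0,1)-M_I)$, and then has to argue separately that $\sgn(\overline{Q}(0,1)-M_I(t))$ is preserved in order to remove absolute values from the resulting identity. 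Your observation that $\ddt(N_IM_I)=\overline{Q}(0,1)\,\ddt N_I$ when $q(1,1)=0$ yields the same identity — note that $M_S-M_{S,0}N_{S,0}=M_S N_I$, so the paper's invariant $|\overline{Q}(0,1)-M_I|\,|M_S-M_{S,0}N_{S,0}|/M_S$ is precisely $|\overline{Q}(0,1)-M_I|\,N_I$ — but in signed form from the outset, so the sign discussion disappears. Your argument is also more complete on one point the paper leaves implicit: for $M_S\to1$ the paper only appeals to ``a method similar to that employed in the proof of Proposition~\ref{prop1}'', which rests on positivity of the right-hand side; positivity alone does not force convergence to $1$ unless the coefficient is non-integrable in time, and your uniform lower bound $M_I(t)\geq\min\{M_{I,0},\overline{Q}(0,1)\}>0$ (itself a consequence of the second conservation law together with $N_I\geq N_{I,0}$) supplies exactly the quantitative differential inequality $\ddt M_S\geq\gamma(1-M_S)$ that closes this step rigorously, with an exponential rate as a bonus.
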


\begin{proof} {\it Asymptotic behaviour of $M_S(t)$ for $t \to \infty$.} The asymptotic result~\eqref{eq:asylim2a} on $M_S(t)$ can be proved using a method similar to that employed in the proof of Proposition~\ref{prop1}.
\\\\
{\it Asymptotic behaviour of $N_S(t)$ and $N_I(t)$ for $t \to \infty$.} When $q(1,1)=0$ and $0 < q(0,1) \leq 1$, under the initial condition~\eqref{eq:ODEsNSNIMSMIIC}, using the differential equations~\eqref{eq:ODEsNSNIMSMI1}$_1$ and~\eqref{eq:ODEsNSNIMSMI1}$_3$ one finds 
\beq \label{eq:NSMS}
\dfrac{{\rm d} N_S}{{\rm d} M_S} = - \dfrac{N_S}{M_S} \quad \Longrightarrow \quad N_S(t) = N_{S,0} \dfrac{M_{S,0}}{M_S(t)} \quad \forall \, t \geq 0.
\eeq
This along with the asymptotic result~\eqref{eq:asylim2a} on $M_S(t)$ leads to the asymptotic result~\eqref{eq:asylim2a} on $N_S(t)$. Furthermore, the asymptotic result~\eqref{eq:asylim1a} on $N_I(t)$ can easily be obtained by using the asymptotic result~\eqref{eq:asylim2a} on $N_S(t)$ along with the fact that $N_I(t)=1-N_S(t)$ for all $t \geq 0$ (cf. the a priori estimates~\eqref{eq:aprioriest}).
\\\\
{\it Asymptotic behaviour of $M_I(t)$ for $t \to \infty$.} 
When $q(1,1)=0$ and $0 < q(0,1) \leq 1$, dividing the differential equation~\eqref{eq:ODEsNSNIMSMI1}$_4$ by the differential equation~\eqref{eq:ODEsNSNIMSMI1}$_3$ and substituting the expression for $N_S$ given by \eqref{eq:NSMS} into the resulting differential equation yields
\beq \nonumber
 \frac{\di M_I}{\di M_S}=\frac{M_{S,0}N_{S,0}(\overline{Q}(0,1)-M_I)}{M_S(M_S-M_{S_{0}}N_{S_{0}})}.
 \eeq
 Hence, under the initial conditions $M_S(0)$ and $M_I(0)$ given by~\eqref{eq:ODEsNSNIMSMIIC}, we have
\begin{eqnarray}
 \label{curvaMSMI}
&&   \int_{M_{I_{0}}}^{M_I} \frac{\di M_I}{\overline{Q}(0,1)-M_I}=\int_{M_{S_{0}}}^{M_S} \Biggr( -\frac{1}{M_S}+\frac{1}{M_S-M_{S,0}N_{S,0}} \Biggl ) \di M_S \nonumber 
\\
\nonumber
\\
&& \Longrightarrow \frac{|\overline{Q}(0,1)-M_{I,0}|}{|\overline{Q}(0,1)-M_I|}= \frac{M_{S,0}}{M_S}    \frac{|M_S-M_{S_{0}}N_{S_{0}}|}{|M_{S_{0}}(1-N_{S_{0}})|} \nonumber
\\
\nonumber
\\
&& \Longrightarrow \frac{|\overline{Q}(0,1)-M_I||M_S-M_{S_{0}}N_{S_{0}}|}{M_S}=\frac{|\overline{Q}(0,1)-M_{I,0}||M_{S,0}(1-N_{S_{0}})|}{M_{S,0}}.
\end{eqnarray}
One can easily show that when $q(1,1)=0$ and $0 < q(0,1) \leq 1$ the differential equation~\eqref{eq:ODEsNSNIMSMI1rev}$_4$ subject to the initial condition $M_I(0)$ given by~\eqref{eq:ODEsNSNIMSMIIC} is such that 
$$
\sgn\left(\overline{Q}(0,1)-M_I(t) \right) = \sgn\left(\overline{Q}(0,1)-M_{I,0} \right) \quad \forall \, t \geq 0
$$
and, therefore, equation~\eqref{curvaMSMI} simplifies to 
$$
 \frac{(\overline{Q}(0,1)-M_I) |M_S-M_{S_{0}}N_{S_{0}}|}{M_S}=\frac{(\overline{Q}(0,1)-M_{I,0}) |M_{S,0}(1-N_{S_{0}})|}{M_{S,0}}.
$$
Using assumptions~\eqref{eq:ODEsNSNIMSMIIC} on $N_{S,0}$ and $M_{S,0}$ along with the asymptotic results~\eqref{eq:asylim2a} on $M_S(t)$ and $N_S(t)$, from the latter equation we obtain the asymptotic result~\eqref{eq:asylim2b} on $M_I(t)$.
\end{proof} 

Furthermore, Proposition~\ref{prop3} shows that also in the case where $q(0,1) = q(1,1) > 0$ all individuals in the system will eventually become infectious.

%----------------------------------------------------------------------------------------------------PROP 3
\begin{prpstn}
\label{prop3}
Let conditions~\eqref{ass:qQbounds}-\eqref{ass:qQ0111pos} hold and assume $q(0,1)=q(1,1)=q^*$ with $0 < q^* \leq 1$. Then the components of the solution to the Cauchy problem~\eqref{eq:ODEsNSNIMSMI1}-\eqref{eq:ODEsNSNIMSMIIC} are such that
\beq
\label{eq:asylim3a}
\lim_{t \to \infty} N_S(t) = 0, \quad \lim_{t \to \infty} N_I(t) = 1, \quad  M_S(t) = M_{S,0} \; \forall \, t \geq 0,
\eeq
and
\beq
\label{eq:asylim3b}
\lim_{t \to \infty} M_I(t) = N_{I,0} M_{I,0} + N_{S,0} \Big[\overline{Q}(0,1) (1-M_{S,0}) + \overline{Q}(1,1) M_{S,0} \Big].
\eeq
\end{prpstn}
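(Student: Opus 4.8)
The plan is to exploit the fact that, in the regime $q(0,1)=q(1,1)=q^*$, the ODE system~\eqref{eq:ODEsNSNIMSMI1} reduces to an essentially scalar problem. First I would observe that the right-hand side of~\eqref{eq:ODEsNSNIMSMI1}$_3$ carries the factor $q(0,1)-q(1,1)=0$, so that $\ddt M_S\equiv 0$ and hence $M_S(t)=M_{S,0}$ for all $t\geq 0$, which already yields the statement on $M_S$ in~\eqref{eq:asylim3a}. Setting $A:=\overline{Q}(0,1)(1-M_{S,0})+\overline{Q}(1,1)M_{S,0}$, the remaining equations simplify to $\ddt N_S=-q^*N_SN_IM_I$, $\ddt N_I=q^*N_SN_IM_I$ and, from~\eqref{eq:ODEsNSNIMSMI1}$_4$, $\ddt M_I=q^*N_S(A-M_I)M_I$, together with $N_S+N_I=1$. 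As in the proof of Proposition~\ref{prop1}, the exponential representation $M_I(t)=M_{I,0}\exp\bigl(q^*\int_0^tN_S(s)(A-M_I(s))\di s\bigr)$ together with $M_{I,0}\in(0,1)$ gives $M_I(t)>0$ for all $t\geq 0$.

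The key step is then to isolate a conserved quantity. Differentiating $N_IM_I$ and using the evolution equations above (equivalently, using~\eqref{eq:ODEsNSNIMSMI0}$_4$ with $q(0,1)=q(1,1)=q^*$ and $M_S\equiv M_{S,0}$) gives $\ddt(N_IM_I)=q^*A\,N_SN_IM_I=A\,\ddt N_I$, whence $N_I(t)M_I(t)-N_{I,0}M_{I,0}=A\bigl(N_I(t)-N_{I,0}\bigr)$; using $N_I=1-N_S$ this becomes $N_I(t)M_I(t)=c-A\,N_S(t)$, with $c:=N_{I,0}M_{I,0}+A\,N_{S,0}$. Substituting into $\ddt N_S=-q^*N_S(N_IM_I)$ produces the closed scalar logistic-type equation $\ddt N_S=-q^*N_S(c-A\,N_S)$. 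Under~\eqref{ass:qQbounds}-\eqref{ass:qQ0111pos} and the initial conditions~\eqref{eq:ODEsNSNIMSMIIC} one checks at once that $A>0$ and $c>0$, and moreover that $c/A=N_{S,0}+N_{I,0}M_{I,0}/A>N_{S,0}$.

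Finally I would read off the long-time behaviour from this scalar ODE. Its only equilibria are $0$ and $c/A$; since $N_S(0)=N_{S,0}$ lies in the open interval $(0,c/A)$ and the right-hand side is strictly negative throughout $(0,c/A)$, the solution $N_S(t)$ decreases monotonically within $(0,N_{S,0}]$ and, being monotone and bounded, converges to an equilibrium lying in $[0,N_{S,0}]$, necessarily $0$. Hence $N_S(t)\to 0$ and $N_I(t)=1-N_S(t)\to 1$, giving~\eqref{eq:asylim3a}, while passing to the limit in $M_I(t)=(c-A\,N_S(t))/N_I(t)$ (legitimate since $N_I(t)\geq N_{I,0}>0$) gives $M_I(t)\to c=N_{I,0}M_{I,0}+N_{S,0}\bigl[\overline{Q}(0,1)(1-M_{S,0})+\overline{Q}(1,1)M_{S,0}\bigr]$, which is~\eqref{eq:asylim3b}. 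I expect the only point requiring care to be the simple but essential bookkeeping showing $c/A>N_{S,0}$, so that the nontrivial equilibrium sits strictly above the initial datum and the monotone decreasing trajectory is forced all the way down to $0$; everything else is elementary, since all the quantities involved remain positive.
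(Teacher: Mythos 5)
Your proof is correct, and its core coincides with the paper's: the observation that $q(0,1)-q(1,1)=0$ kills the $M_S$ equation is identical, and your conserved quantity $N_I(t)M_I(t)+A\,N_S(t)=c$ is exactly the first integral the paper obtains by dividing the equation~\eqref{eq:ODEsNSMSPI_generale}$_3$ by~\eqref{eq:ODEsNSMSPI_generale}$_1$ and integrating, so your derivation of~\eqref{eq:asylim3b} matches the paper's up to notation. Where you genuinely depart is in proving $N_S(t)\to 0$: the paper appeals to the method of Proposition~\ref{prop1}, i.e.\ it argues from positivity of the non-autonomous decay rate in $\ddt N_S=-F_1 N_S$, whereas you substitute the first integral back into that equation to obtain the closed autonomous logistic-type ODE $\ddt N_S=-q^*N_S(c-A\,N_S)$ and read the limit off the phase line after checking $A>0$, $c>0$ and $c/A>N_{S,0}$. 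Your route is more self-contained on this point: convergence to the equilibrium $0$ follows from the standard fact that a bounded monotone solution of a scalar autonomous ODE converges to a zero of the right-hand side, while the paper's sign argument, as written, leaves implicit why the decay rate cannot degenerate fast enough for $N_S$ to stall at a positive value; your closed equation encodes the needed persistence automatically. Both approaches deliver the same statement; yours trades the reuse of Proposition~\ref{prop1} for an elementary and fully explicit one-dimensional analysis that is available precisely because $M_S\equiv M_{S,0}$ in this regime.
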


\begin{proof} {\it Behaviour of $M_S(t)$.} When $q(0,1)=q(1,1)=q^*$ with $0 < q^* \leq 1$, one has $F_3(N_I,M_I)=0$ in the differential equation~\eqref{eq:ODEsNSNIMSMI1rev}$_3$ and, therefore, under the initial condition $M_S(0)$ given by~\eqref{eq:ODEsNSNIMSMIIC}, the result~\eqref{eq:asylim3a} on $M_S$ holds.
\\\\
{\it Asymptotic behaviour of $N_S(t)$ and $N_I(t)$ for $t \to \infty$.} Exploiting the result~\eqref{eq:asylim3a} on $M_S$, the asymptotic results~\eqref{eq:asylim3a} on $N_S(t)$ and $N_I(t)$ can be proved using a method similar to that employed in the proof of Proposition~\ref{prop1}.
\\\\
{\it Asymptotic behaviour of $M_I(t)$ for $t \to \infty$.} 
When $q(0,1)=q(1,1)=q^*$ with $0 < q^* \leq 1$, dividing the differential equation~\eqref{eq:ODEsNSMSPI_generale}$_3$ by the differential equation~\eqref{eq:ODEsNSMSPI_generale}$_1$ and substituting the expression for $M_S$ given by \eqref{eq:asylim3a} into the resulting differential equation yields
\beq \nonumber
\frac{\di P_I}{\di N_S}=-\frac{P_IN_S [\overline Q(0,1)(1-M_{S,0})+\overline Q(1,1)M_{S,0}]}{P_IN_S}.
\eeq
Hence, under the initial conditions $P_I(0) := N_I(0) M_I(0)$ and $N_S(0)$ given by~\eqref{eq:ODEsNSNIMSMIIC}, we have
$$
P_I-P_{I,0}=-\bigl(\overline Q(0,1)(1-M_{S,0})+\overline Q(1,1)M_{S,0}\bigr)\bigl(N_S-N_{S,0}\bigr).
$$
Since $P_I:=N_I M_I$ and $P_{I,0}:=N_{I,0} M_{I,0}$, using the asymptotic results~\eqref{eq:asylim3a} on $N_S(t)$ and $N_I(t)$, from the latter equation we obtain the asymptotic result~\eqref{eq:asylim3b} on $M_I(t)$.
\end{proof}
%------------------------------------------------------------------------------------------------------------------------------------------------------------------------------------------

\section{Numerical simulations}
\label{sec:numsim}
In this section, we present a sample of results of numerical simulations. In Section~\ref{sec:num:setup}, we describe the set-up of numerical simulations and the numerical methods employed to carry them out. In Section~\ref{sec:num:verif}, we compare the numerical solutions of the macroscopic model defined via the ODE system~\eqref{eq:ODEsNSNIMSMI1} with the analytical results established by Propositions~\ref{prop1}-\ref{prop3} and the results of Monte Carlo simulations of the individual-based model, obtained under assumptions~\eqref{ass:Krv}-\eqref{ass:thetaseps} (i.e. the assumptions under which the macroscopic model is formally obtained from the mesoscopic model). 

\subsection{Set-up of numerical simulations and numerical methods}
\label{sec:num:setup}
\paragraph{Monte Carlo simulations of the individual-based model.} Monte Carlo simulations of the individual-based model are performed in {\sc Matlab}. We choose a uniform discretisation of step $10^{-3}$ of the interval $[0,t_f]$ as the computational domain of the independent variable $t$ and, under assumptions~\eqref{def:domainsRV}, we choose a uniform discretisation of step $10^{-4}$ of the interval $[0,1]$ as the computational domain of the independent variables $r$ and $v$. We focus on a system of $10^6$ agents that are initially distributed amongst the susceptible and infectious compartments according to the following population density functions, which satisfy assumptions~\eqref{eq:ICb} under the definitions given by~\eqref{def:domainsRV}:
\beq
\label{def:ICIDE}
n^0_S(r) := 0.9 \, \delta_{M_{S,0}}(r), \quad n^0_I(r) := 0.1 \, \delta_{M_{I,0}}(v), \quad M_{S,0}, M_{I,0} \in (0,1).
\eeq
Hence, at the initial time of simulations $t=0$, the system comprises $9 \times 10^5$ susceptible individuals, all with level of resistance $M_{S,0}$, and $10^5$ infectious individuals, all with viral load $M_{I,0}$. We consider different values of the parameters $M_{S,0}$ and $M_{I,0}$, in order to explore a variety of epidemiological scenarios.

We let assumptions~\eqref{ass:thetaseps} hold, with $\theta_X = \e=10^{-3}$ and $\theta_R=\theta_V=1$, and we focus on the case where the kernels $K_S$ and $K_I$ are defined as
\beq
\label{def:KSKI}
K_S(r | r') := (1-r') \, \delta_0(r) + r' \, \delta_1(r), \quad K_I(v | v') := (1-v') \, \delta_0(v) + v' \, \delta_1(v).
\eeq
The definitions given by~\eqref{def:KSKI} satisfy assumptions~\eqref{ass:KSa}-\eqref{ass:KIb} and~\eqref{ass:Krv} and correspond to a stylised scenario where: changes in the level of resistance to infection lead susceptible individuals with level of resistance $r'$ to acquire either the minimum level of resistance $r=0$ or the maximum level of resistance $r=1$; the probability of acquiring the level of resistance $r=0$ increases as $r' \to 0$ while the probability of acquiring the level of resistance $r=1$ increases as $r' \to 1$. Similarly: changes in the viral load lead infectious individuals with viral load $v'$ to acquire either the minimum viral load $v=0$ or the maximum viral load $v=1$; the probability of acquiring the viral load $v=0$ increases as $v' \to 0$ while the probability of acquiring the viral load $v=1$ increases as $v' \to 1$.

Moreover, we define the function $q$ either as
\beq
\label{def:qProp12}
q(r,v^*):=\frac{\kappa_1 - r}{\kappa_2 + r}v^*, \quad \kappa_1, \kappa_2 \in \mathbb{R}, \quad \kappa_2 \geq \kappa_1 \geq 1,
\eeq
with $\kappa_1$ and $\kappa_2$ such that the assumptions on the function $q$ which underlie Proposition~\ref{prop1} or Proposition~\ref{prop2} are satisfied, or as
\beq
\label{def:qProp3}
q(r,v^*) \equiv q(v^*):= \kappa_3 \, v^*, \quad \kappa_3 \in \mathbb{R}, \quad \kappa_3 > 0
\eeq
so that the assumptions on the function $q$ which underlie Proposition~\ref{prop3} are met. The definitions given by~\eqref{def:qProp12} and~\eqref{def:qProp3} translate in mathematical terms to the biological idea that the probability for a susceptible individual with level of resistance $r$ to become infected due to contact with an infectious individual with viral load $v^*$ increases as the viral load of the infectious individual increases. In particular, the definition given by~\eqref{def:qProp12} corresponds to the (perhaps more realistic) scenario where this probability decreases as the level of resistance of the susceptible individual increases.  

\begin{rmrk}
Additional numerical simulations were carried out under the following definition of the function $q$
$$
q(r,v^*) := (1 - r) \, v^*
$$
and/or the following definitions of the kernels $K_S$ and $K_I$
$$
K_S(r|r') := (r'-1)^2\delta_0(r)+2r'(1-r')\delta_{\frac{1}{2}}(r)+(r')^2\delta_1(r), 
$$
$$
K_I(v|v') := (v'-1)^2\delta_0(v)+2v'(1-v')\delta_{\frac{1}{2}}(v)+(v')^2\delta_1(v).
$$
The results obtained are omitted here since they are analogous to those displayed in Figures~\ref{fig:prop1}-\ref{fig:prop3}.  
\end{rmrk}

Finally, we define the kernel $Q(v|r,v^*)$ either as a uniform probability distribution on the interval $[0,1]$ for all $(r,v^*) \in [0,1] \times [0,1]$, i.e.
\beq
\label{def:Qunif}
Q(v|r,v^*) \equiv 1,
\eeq
or as a truncated normal probability distribution on the interval $[0,1]$ for all $(r,v^*) \in [0,1] \times [0,1]$, %with mean $\mu(r,v^*)$ and standard deviation $\sigma \in \mathbb{R}^*_+$, 
i.e.
\beq
\label{def:Qtrunc}
Q(v|r,v^*) := \frac{1}{\sqrt {2 \pi} \sigma} \frac{\exp\Bigl(-\dfrac{1}{2} \left(\frac{v-\mu(r,v^*)}{\sigma}\right)^2\Bigr)}{\Psi\left(\frac{1-\mu(r,v^*)}{\sigma}\right)-\Psi\left(\frac{-\mu(r,v^*)}{\sigma}\right)}
\eeq
with
\beq
\label{def:phimusigma}
\Psi(x) := \frac{1}{2}\left(1+ \frac{2}{\sqrt \pi} \int^{\frac{x}{\sqrt 2}}_0 e^{-t^2} \di t\right), \quad \mu(r,v^*) := \Biggl( \frac{3}{4}-\frac{1}{2}r \Biggr )\Biggl(\frac{3}{4}v^*+\frac{1}{4} \Biggr), \quad \sigma = 0.1.
\eeq
In particular, the definition given by~\eqref{def:Qtrunc}-\eqref{def:phimusigma} corresponds to a biological scenario where the mean value of the viral load $v$ acquired by a susceptible individual with original level of resistance $r$ that becomes infected due to contact with an infectious individual with viral load $v^*$ (i.e. the value of $\overline{Q}(r, v^*)$ defined via~\eqref{def:overlineQ}) decreases with the level of resistance of the susceptible individual and increases with the viral load of the infectious individual. Note that under the definition given by~\eqref{def:Qunif} we have
$$
\overline{Q}(r, v^*) \equiv \dfrac{1}{2},
$$
whereas under the definition given by~\eqref{def:Qtrunc}-\eqref{def:phimusigma} we have
$$
\overline{Q}(r, v^*) = \mu(r,v^*)+\frac{1}{\sqrt {2 \pi} } \frac{\exp\Bigl(-\dfrac{1}{2} \left(\frac{1-\mu(r,v^*)}{\sigma}\right)^2 \Bigr)-\exp\Bigl(-\dfrac{1}{2} \left(\frac{-\mu(r,v^*)}{\sigma}\right)^2 \Bigr)}{\Psi\left(\frac{1-\mu(r,v^*)}{\sigma}\right)-\Psi\left(\frac{-\mu(r,v^*)}{\sigma}\right)}\sigma
$$
and, in particular,
$$
\overline Q(0,1)=0.7482 \quad \text{and} \quad \overline Q(1,1)=0.2518.
$$
Hence, both the definition given by~\eqref{def:Qunif} and definition given by~\eqref{def:Qtrunc}-\eqref{def:phimusigma} meet conditions~\eqref{ass:qQ0111pos}.

\paragraph{Numerical simulations of the corresponding macroscopic model.} The corresponding macroscopic model defined via the ODE system~\eqref{eq:ODEsNSNIMSMI1} posed on the interval $(0,t_f]$ and subject to initial conditions corresponding to the population density functions~\eqref{def:ICIDE}, which are such that assumptions~\eqref{eq:ODEsNSNIMSMIIC} are satisfied, i.e.
 \beq
\label{def:ICODE}
N_S(0) = 0.9, \quad N_I(0) = 0.1, \quad M_S(0) = M_{S,0} \in (0,1), \quad M_I(0) = M_{I,0}\in (0,1),
\eeq
is solved numerically in {\sc Matlab} using the built-in function {\sc ode45}, which is based on a fourth order Runge-Kutta method.

\subsection{Main results of numerical simulations}
\label{sec:num:verif}
\subsubsection*{Agreement between analytical and numerical results under the assumptions of Proposition~\ref{prop1}}
Figure~\ref{fig:prop1} displays a sample of numerical results obtained in the case where the probability of infection $q$ is defined via~\eqref{def:qProp12} with the values of the parameters $\kappa_1$ and $\kappa_2$ such that the assumptions of Proposition~\ref{prop1} are satisfied (i.e. $0<q(1,1)<q(0,1)\le1$). Figures~\ref{q1R1V1}-\ref{q1R1V9} refer to the case where the kernel $Q$ is a defined via~\eqref{def:Qunif}, while Figures~\ref{gauR1V1}-\ref{gauR1V9} refer to the case where the kernel $Q$ is defined via~\eqref{def:Qtrunc}-\eqref{def:phimusigma}. In agreement with the analytical results established by Proposition~\ref{prop1}, these numerical results show that the components of the solution to the ODE system~\eqref{eq:ODEsNSNIMSMI1} converge to the corresponding long-term limits~\eqref{eq:asylim1a} and~\eqref{eq:asylim1b} as $t$ increases. 

\subsubsection*{Agreement between analytical and numerical results under the assumptions of Proposition~\ref{prop2}}
Figure~\ref{fig:prop2} displays a sample of numerical results obtained in the case where the probability of infection $q$ is defined via~\eqref{def:qProp12} with the values of the parameters $\kappa_1$ and $\kappa_2$ such that the assumptions of Proposition~\ref{prop2} are satisfied (i.e. $q(1,1)=0$ and $0 < q(0,1) \leq 1$). Figures~\ref{UniProp2R9V1}-\ref{UniProp2R1V9} refer to the case where the kernel $Q$ is defined via~\eqref{def:Qunif}, while Figures~\ref{prop2R9V1}-\ref{prop2R1V9} refer to the case where the kernel $Q$ is defined via~\eqref{def:Qtrunc}-\eqref{def:phimusigma}. In agreement with the analytical results established by Proposition~\ref{prop2}, these numerical results show that the components of the solution to the ODE system~\eqref{eq:ODEsNSNIMSMI1} converge to the corresponding long-term limits~\eqref{eq:asylim2a} and~\eqref{eq:asylim2b} as $t$ increases. 

\subsubsection*{Agreement between analytical and numerical results under the assumptions of Proposition~\ref{prop3}}
A sample of numerical results for the case where the function $q$ is defined via~\eqref{def:qProp3} with $\kappa_3=0.5$ so that $q(0,1)=q(1,1)=0.5$, and thus the assumptions of Proposition~\ref{prop3} are satisfied, is displayed in Figure~\ref{fig:prop3}. Figures~\ref{UniProp3R9V1}-\ref{UniProp3R1V9} refer to the case where the kernel $Q$ is a defined as a uniform probability distribution via~\eqref{def:Qunif}, while Figures~\ref{prop3R9V1}-\ref{prop3R1V9} refer to the case where the kernel $Q$ is defined as a truncated normal probability distribution via~\eqref{def:Qtrunc}-\eqref{def:phimusigma}. In agreement with the analytical results established by Proposition~\ref{prop3}, these numerical results show that the components of the solution to the ODE system~\eqref{eq:ODEsNSNIMSMI1} converge to the corresponding long-term limits~\eqref{eq:asylim3a} and~\eqref{eq:asylim3b} as $t$ increases. 

\subsubsection*{Agreement between the results of Monte Carlo simulations of the individual-based model and numerical solutions of the macroscopic model}
Taken together, the numerical results presented in Figures~\ref{fig:prop1}-\ref{fig:prop3} indicate that there is an excellent quantitative agreement between the dynamics of the proportion of susceptible individuals, $N_S(t)$, the proportion of infectious individuals, $N_I(t)$, the mean level of resistance to infection, $M_S(t)$, and the mean viral load, $M_I(t)$, obtained from Monte Carlo simulations of the individual-based model and the dynamics of the corresponding quantities obtained by solving numerically the macroscopic model defined via the ODE system~\eqref{eq:ODEsNSNIMSMI1}, both in the case where the kernel $Q$ is a defined as a uniform probability distribution via~\eqref{def:Qunif} (cf. Figures~\ref{q1R1V1}-\ref{q1R1V9}, Figures~\ref{UniProp2R9V1}-\ref{UniProp2R1V9}, and Figures~\ref{UniProp3R9V1}-\ref{UniProp3R1V9}) and in the case where the kernel $Q$ is defined as a truncated normal probability distribution via~\eqref{def:Qtrunc}-\eqref{def:phimusigma} (cf. Figures~\ref{gauR1V1}-\ref{gauR1V9}, Figures~\ref{prop2R9V1}-\ref{prop2R1V9}, and Figures~\ref{prop3R9V1}-\ref{prop3R1V9}). 

Note that in the cases where $N_S(t) \to 0$ as $t \to \infty$, $M_S(t)$ obtained from Monte Carlo simulations of the individual-based model may undergo oscillations and then deviates from the asymptotic trend predicted by the macroscopic model when $N_S(t)$ attains values sufficiently close to zero. This is to be expected as the formula used to compute the mean level of resistance to infection from the results of simulations is an empirical average over a number of susceptible individuals which decreases to zero. This is apparent in Figure~\ref{fig:prop3}.

%----------------------------------------------------------------------------------------------------------------------------------------------Proposition 3.1
%%%% figure
\begin{figure}[H]
\centering
\subcaptionbox{\label{q1R1V1}}
{\includegraphics[trim={0 7cm 0 7cm},clip, width=.49\textwidth]{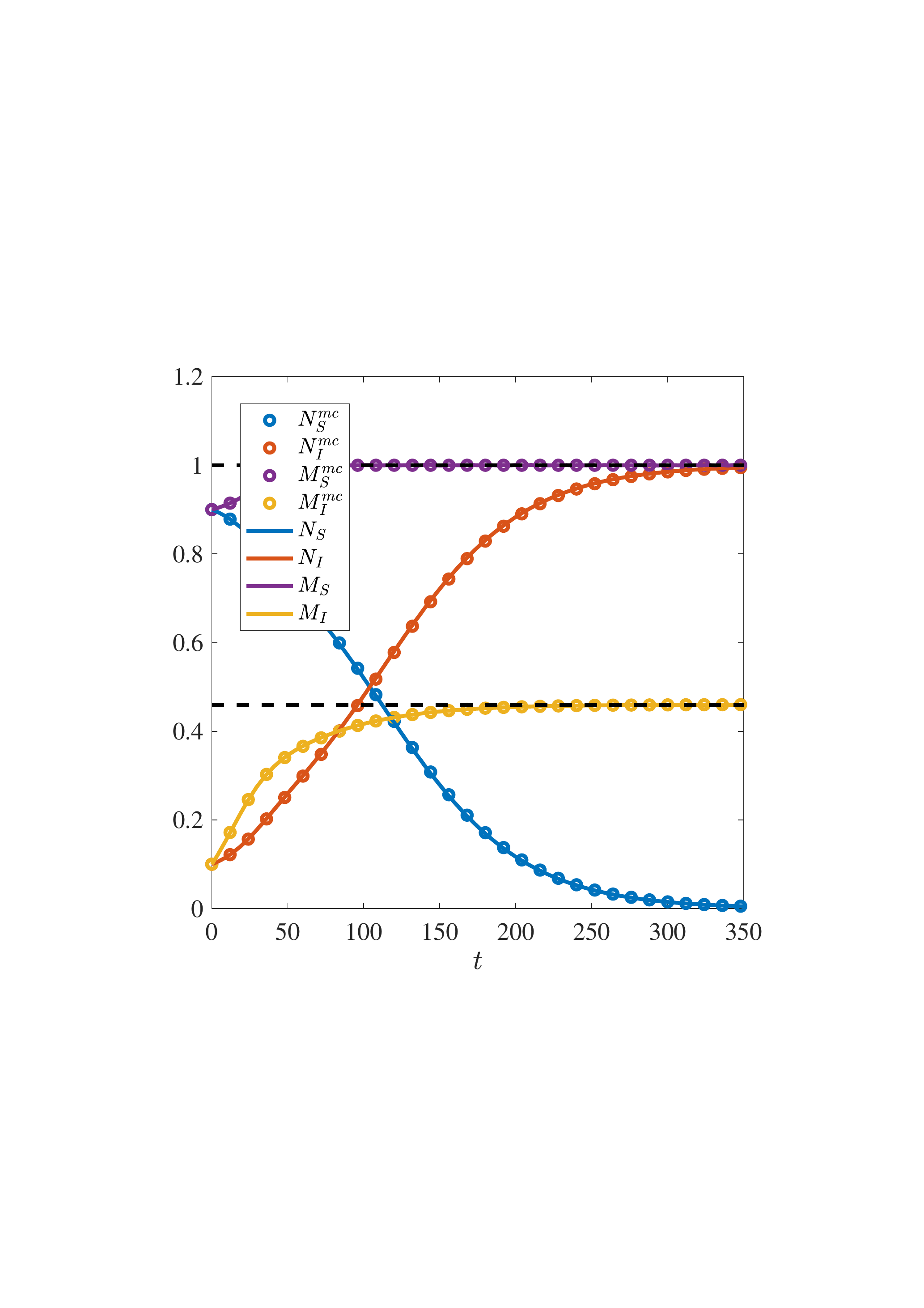}}
\subcaptionbox{\label{q1R1V9}} 
{\includegraphics[trim={0cm 7cm 0 7cm},clip, width=.49\textwidth]{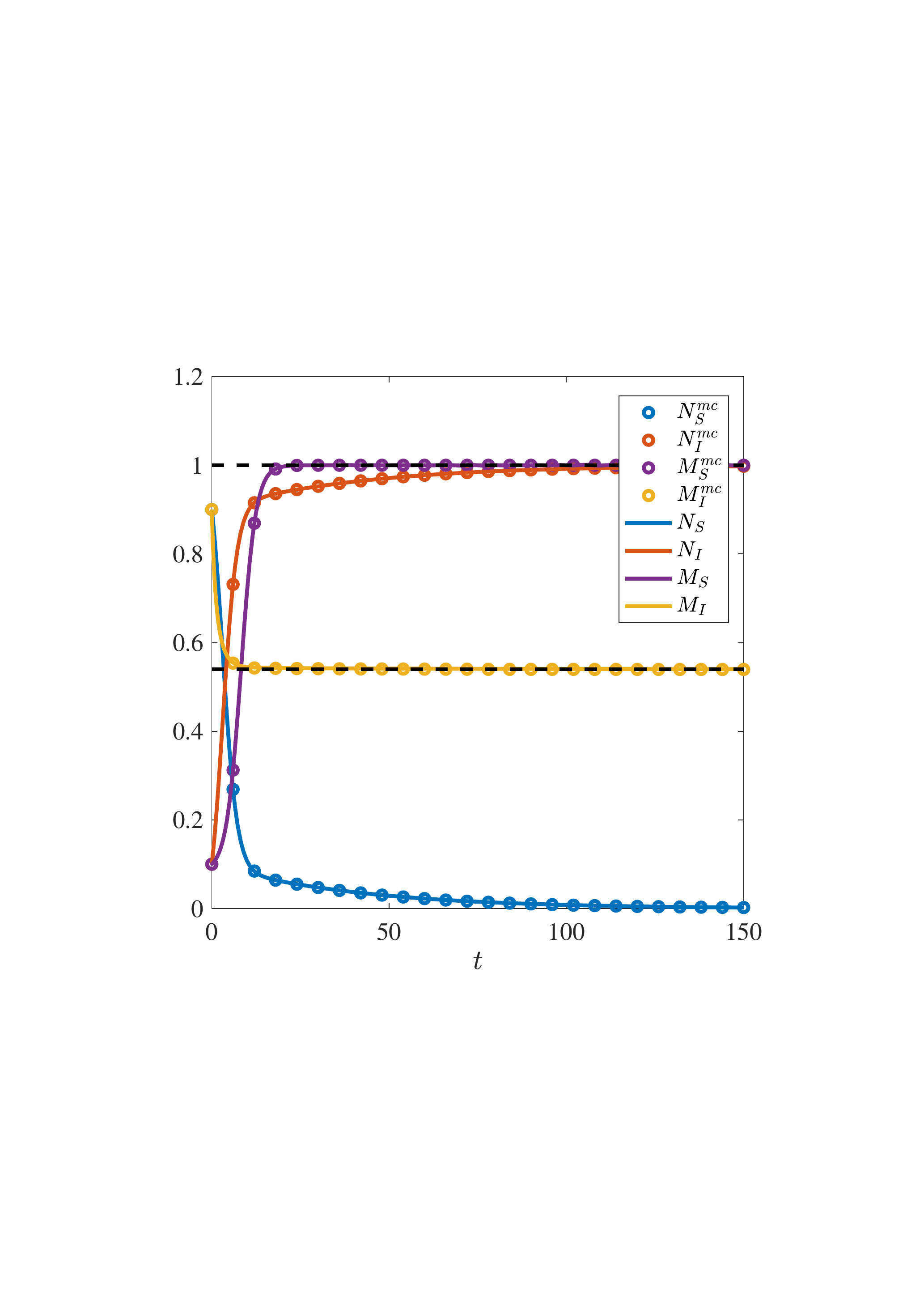}}
\subcaptionbox{\label{gauR1V1}}
{\includegraphics[trim={0cm 7cm 0 7cm},clip, width=.49\textwidth]{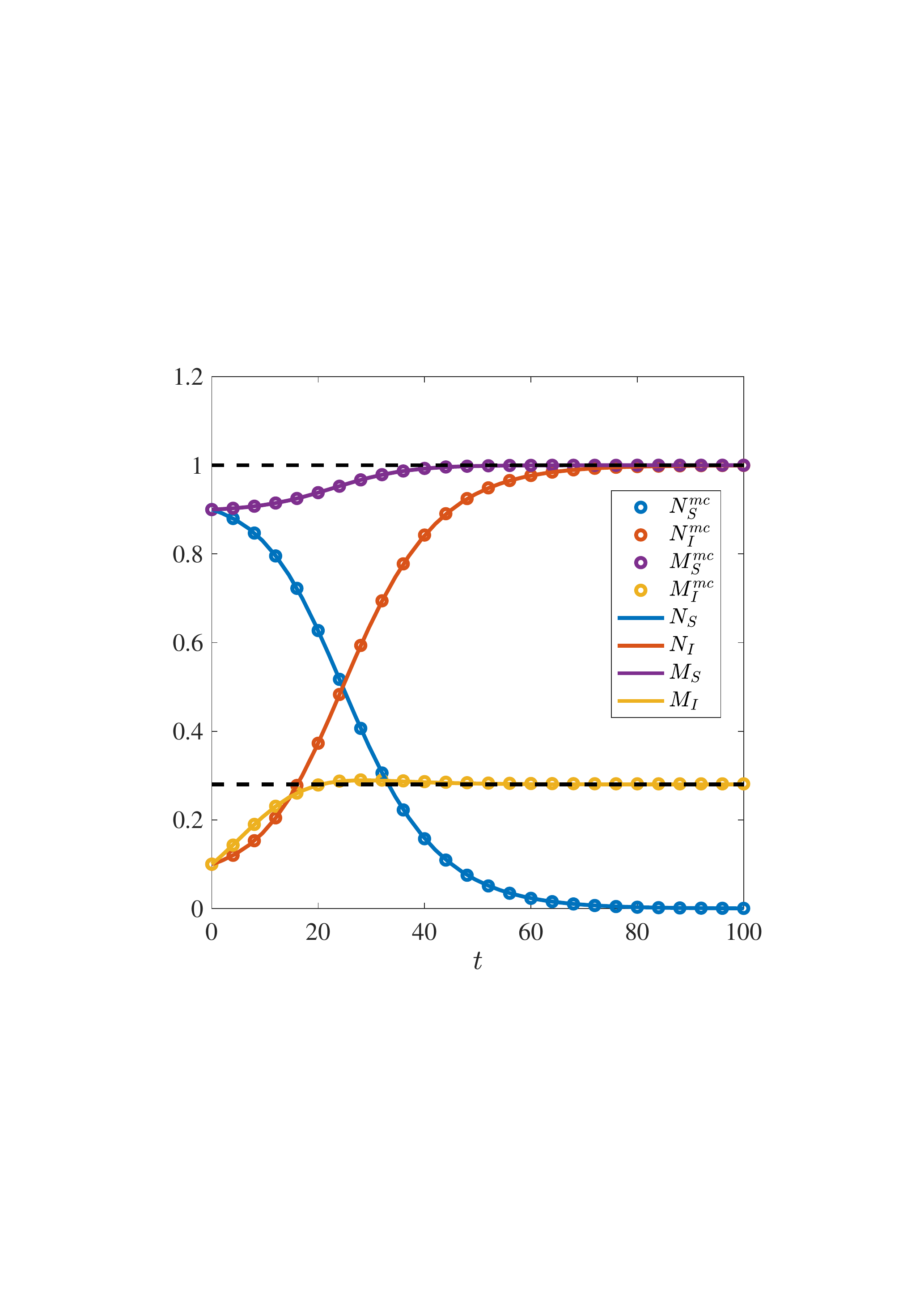}}
\subcaptionbox{\label{gauR1V9}} 
{\includegraphics[trim={0cm 7cm 0 7cm},clip,width=.49\textwidth]{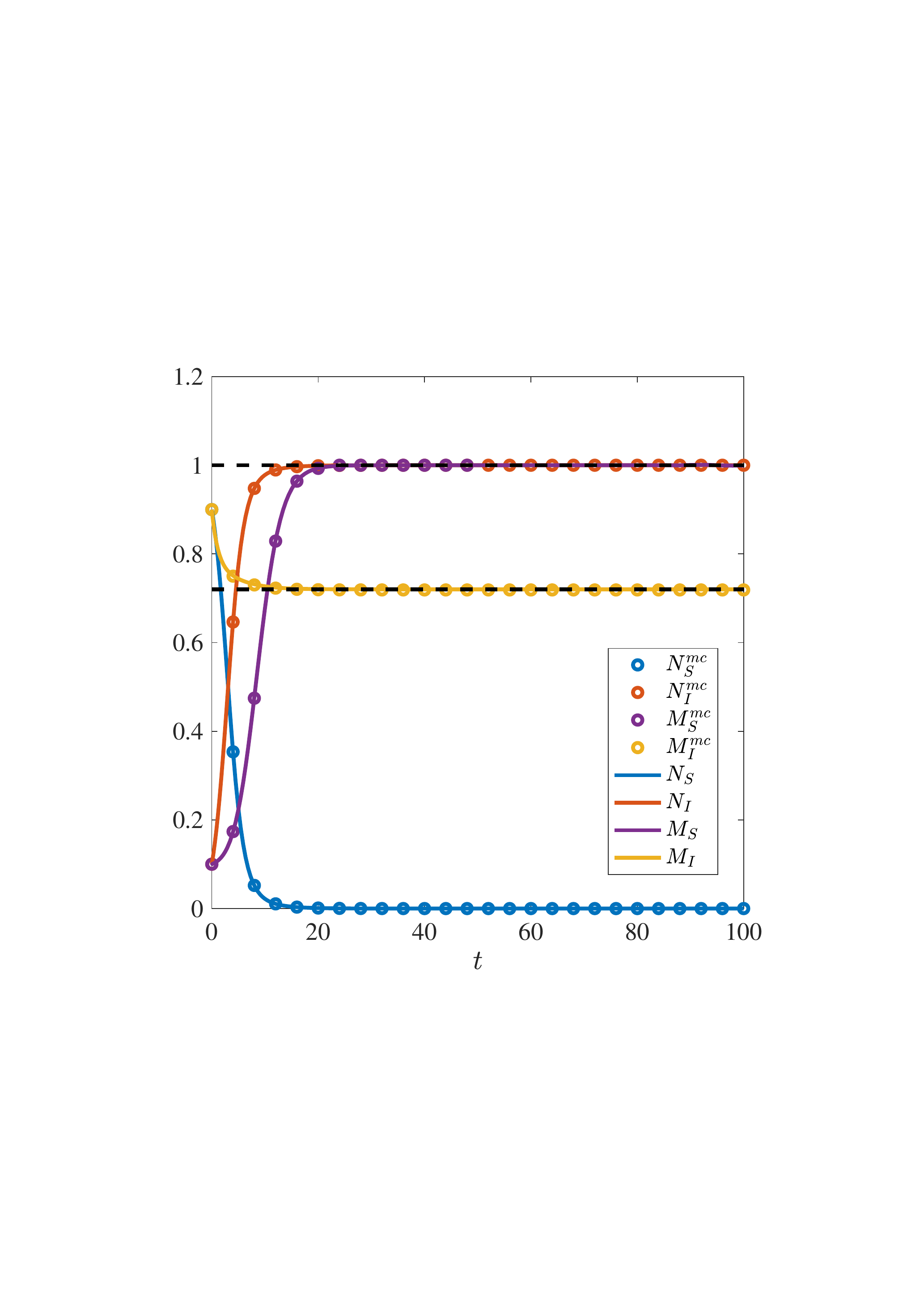}}
\caption{{\bf Numerical results under the assumptions of Proposition~\ref{prop1}.} Dynamics of $N_S(t)$ (blue lines), $N_I(t)$ (orange lines), $M_S(t)$ (purple lines) and $M_I(t)$ (yellow lines) obtained by solving numerically the macroscopic model defined via the ODE system~\eqref{eq:ODEsNSNIMSMI1} subject to initial conditions~\eqref{def:ICODE} with either $M_{S,0}=0.9$ and $M_{I,0}=0.1$ (panels (a) and (c)) or $M_{S,0}=0.1$ and $M_{I,0}=0.9$ (panels (b) and (d)). The dynamics of the corresponding quantities obtained through Monte Carlo simulations of the individual-based model are highlighted by circular markers, while the black lines highlight the asymptotic values given by~\eqref{eq:asylim1a} and~\eqref{eq:asylim1b}. These results are obtained in the case where the function $q$ is defined via~\eqref{def:qProp12} with the values of the parameters $\kappa_1$ and $\kappa_2$ such that the assumptions of Proposition~\ref{prop1} are satisfied -- i.e. $\kappa_1=1.1$ and $\kappa_2=1.1$ (panels (a) and (b)) or $\kappa_1=2.2$ and $\kappa_2=2.3$ (panels (c) and (d)) --, and the kernel $Q$ is either defined as a uniform probability distribution via~\eqref{def:Qunif} (panels (a) and (b)) or as a truncated normal probability distribution via~\eqref{def:Qtrunc}-\eqref{def:phimusigma} (panels (c) and (d)).}
\label{fig:prop1}
\end{figure}

%----------------------------------------------------------------------------------------------------------------------------------------------Proposition 3.2
%%%%% figure
\begin{figure}[H]
\centering
\subcaptionbox{\label{UniProp2R9V1}}
{\includegraphics[trim={0 7cm 0 7cm},clip, width=.49\textwidth]{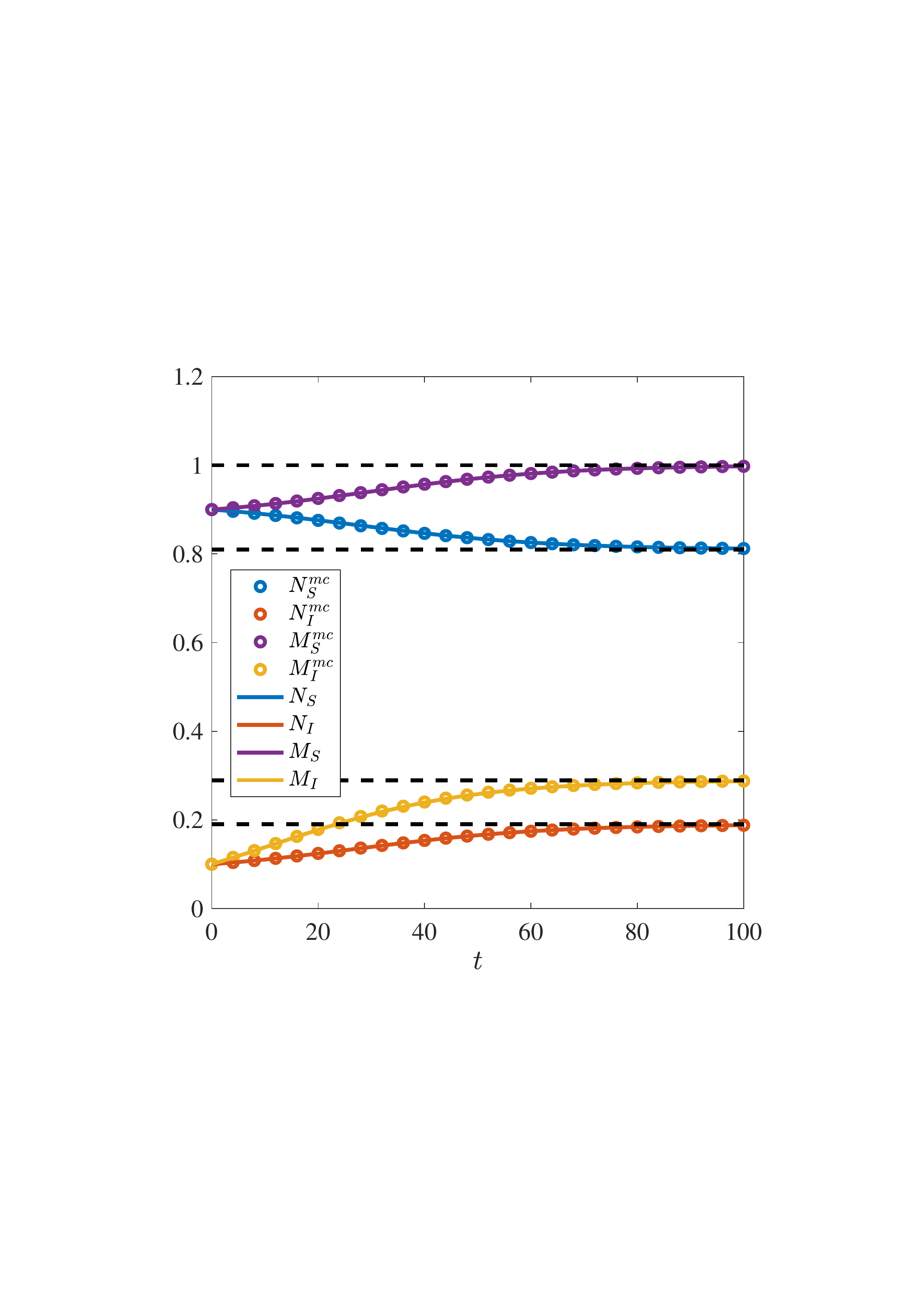}}
\subcaptionbox{\label{UniProp2R1V9}} 
{\includegraphics[trim={0 7cm 0 7cm},clip, width=.49\textwidth]{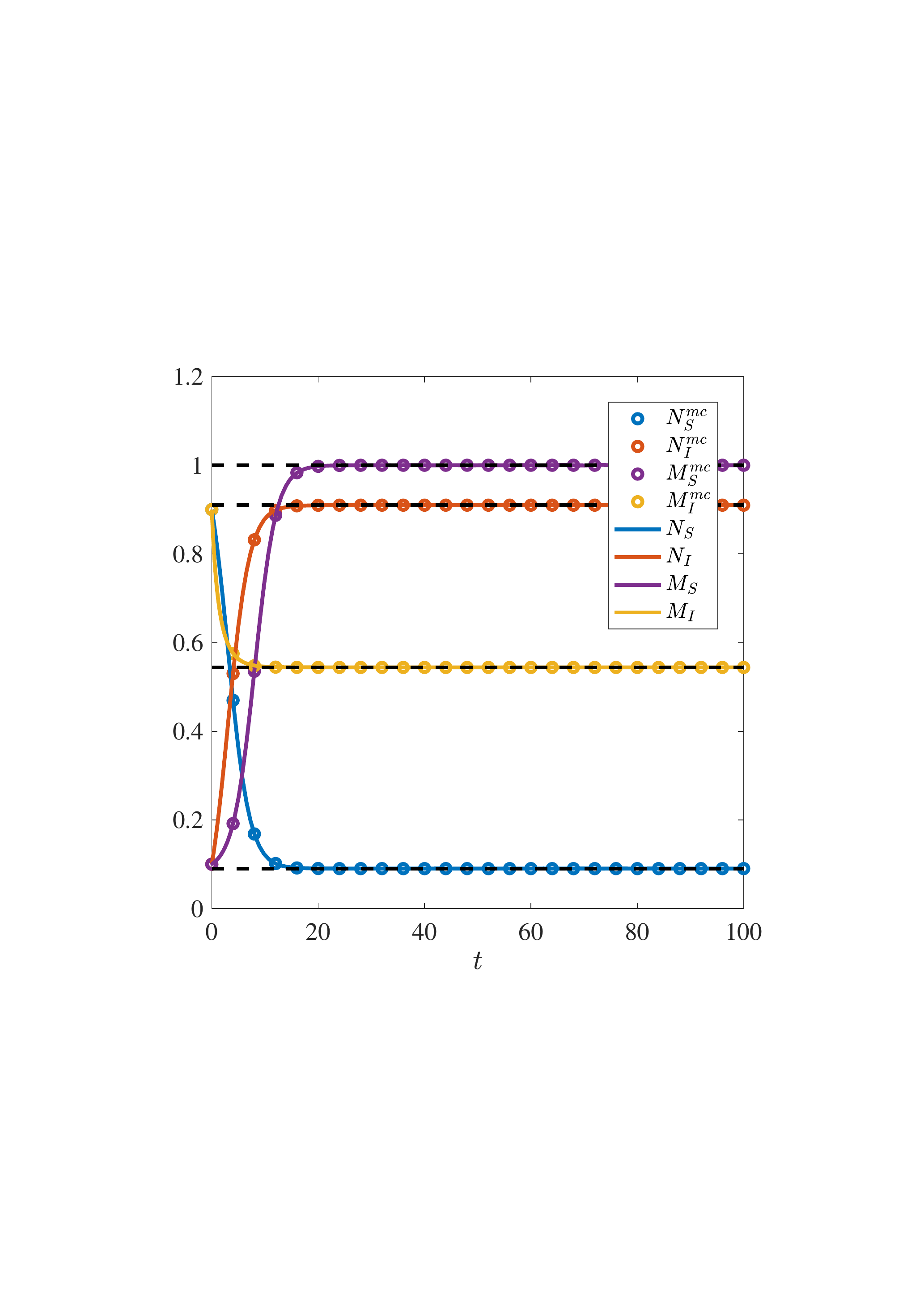}}
\subcaptionbox{\label{prop2R9V1}}
{\includegraphics[trim={0 7cm 0 7cm},clip, width=.49\textwidth]{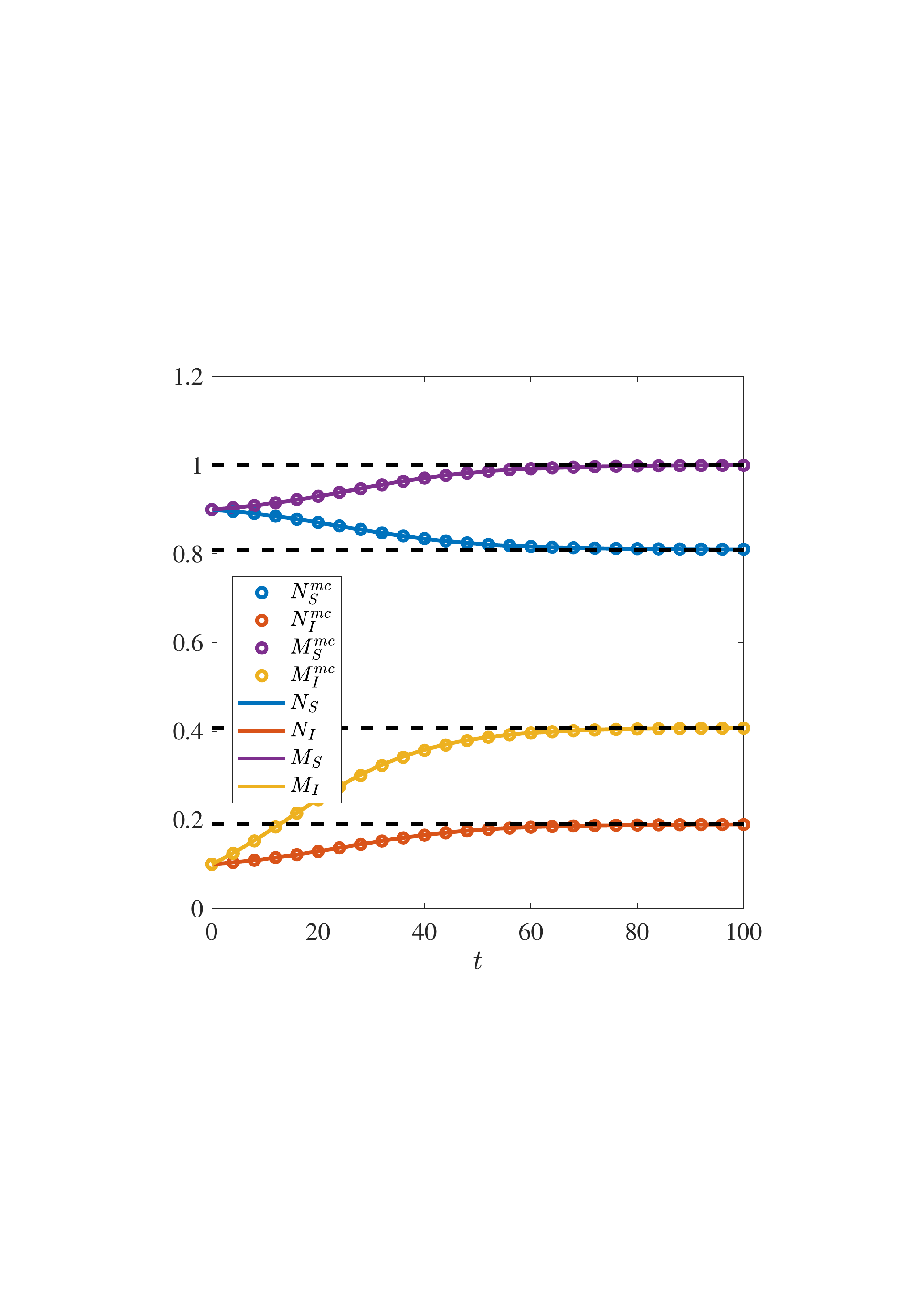}}
\subcaptionbox{\label{prop2R1V9}} 
{\includegraphics[trim={0 7cm 0 7cm},clip, width=.49\textwidth]{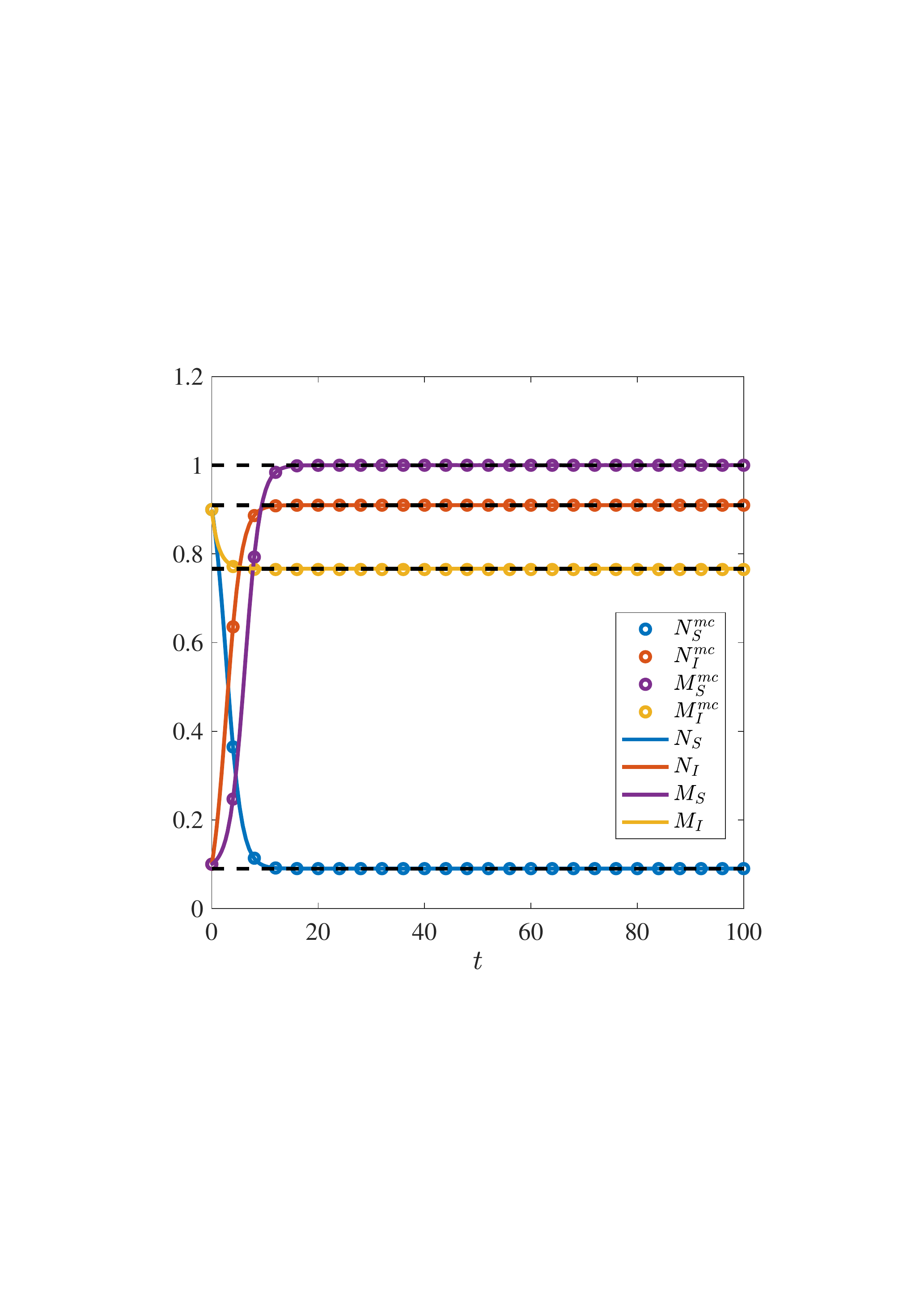}}
\caption{{\bf Numerical results under the assumptions of Proposition~\ref{prop2}.} Dynamics of $N_S(t)$ (blue lines), $N_I(t)$ (orange lines), $M_S(t)$ (purple lines) and $M_I(t)$ (yellow lines) obtained by solving numerically the macroscopic model defined via the ODE system~\eqref{eq:ODEsNSNIMSMI1} subject to initial conditions~\eqref{def:ICODE} with either $M_{S,0}=0.9$ and $M_{I,0}=0.1$ (panels (a) and (c)) or $M_{S,0}=0.1$ and $M_{I,0}=0.9$ (panels (b) and (d)). The dynamics of the corresponding quantities obtained through Monte Carlo simulations of the individual-based model are highlighted by circular markers, while the black lines highlight the asymptotic values given by~\eqref{eq:asylim2a} and~\eqref{eq:asylim2b}. These results are obtained in the case where the function $q$ is defined via~\eqref{def:qProp12} with $\kappa_1=1$ and $\kappa_2=1$ so that the assumptions of Proposition~\ref{prop2} are satisfied, and the kernel $Q$ is either defined as a uniform probability distribution via~\eqref{def:Qunif} (panels (a) and (b)) or as a truncated normal probability distribution via~\eqref{def:Qtrunc}-\eqref{def:phimusigma} (panels (c) and (d)).}
\label{fig:prop2}
\end{figure}

%----------------------------------------------------------------------------------------------------------------------------------------------Proposition 3.3
%%%%% figure
\begin{figure}[H]
\centering
\subcaptionbox{\label{UniProp3R9V1}}
{\includegraphics[trim={0 7cm 0 7cm},clip, width=.49\textwidth]{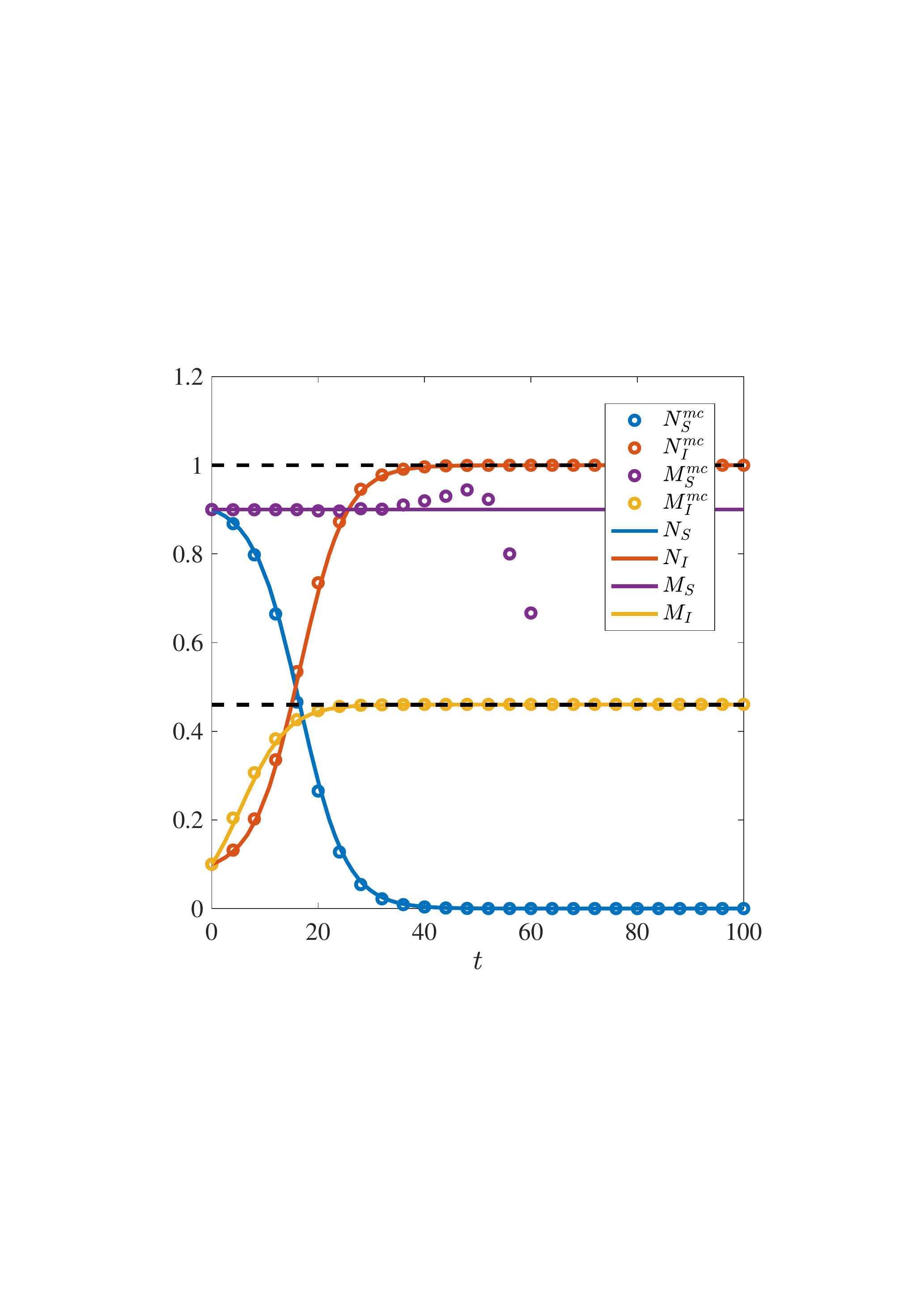}}
\subcaptionbox{\label{UniProp3R1V9}} 
{\includegraphics[trim={0 7cm 0 7cm},clip, width=.49\textwidth]{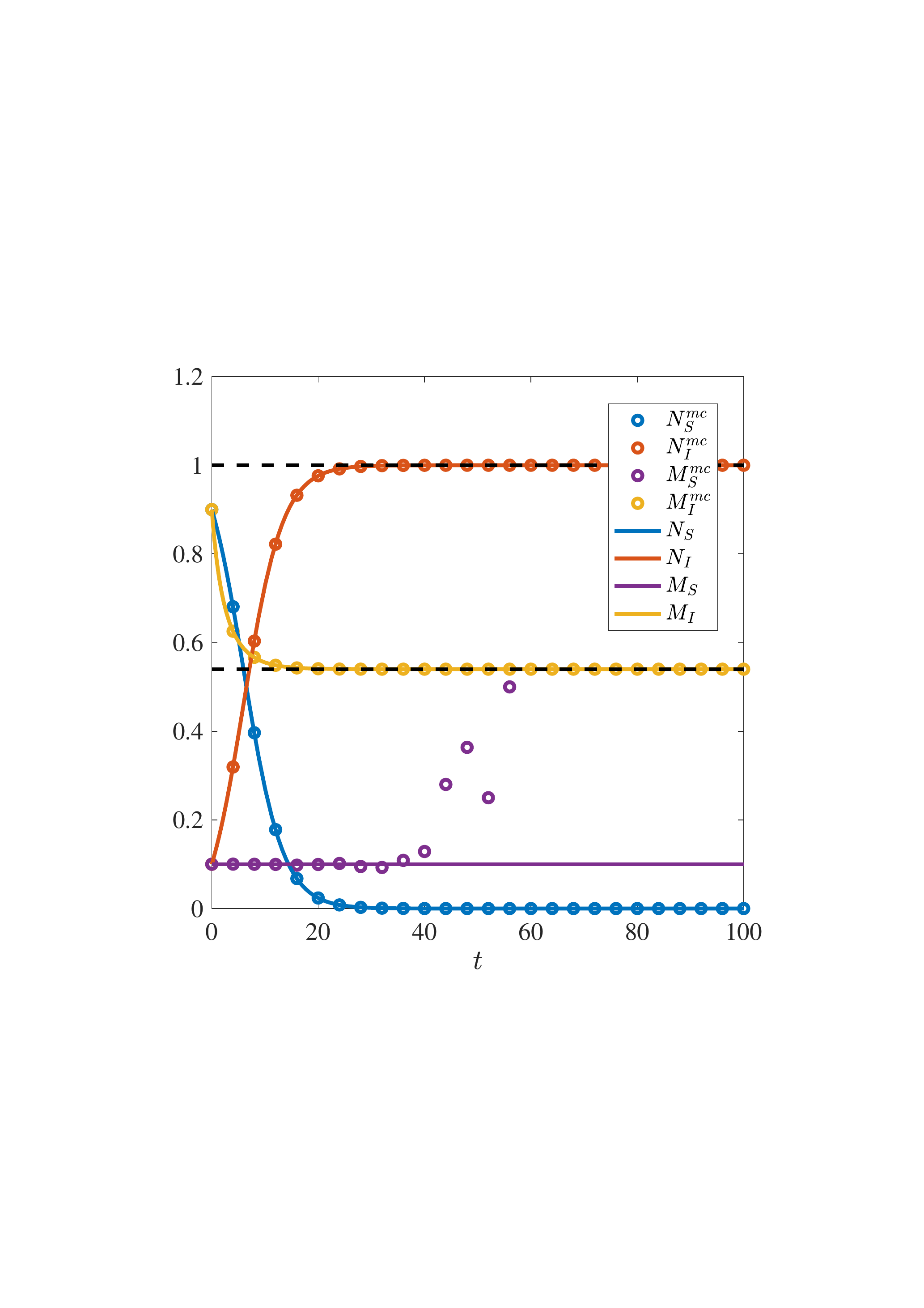}}
\subcaptionbox{\label{prop3R9V1}}
{\includegraphics[trim={0 7cm 0 7cm},clip, width=.49\textwidth]{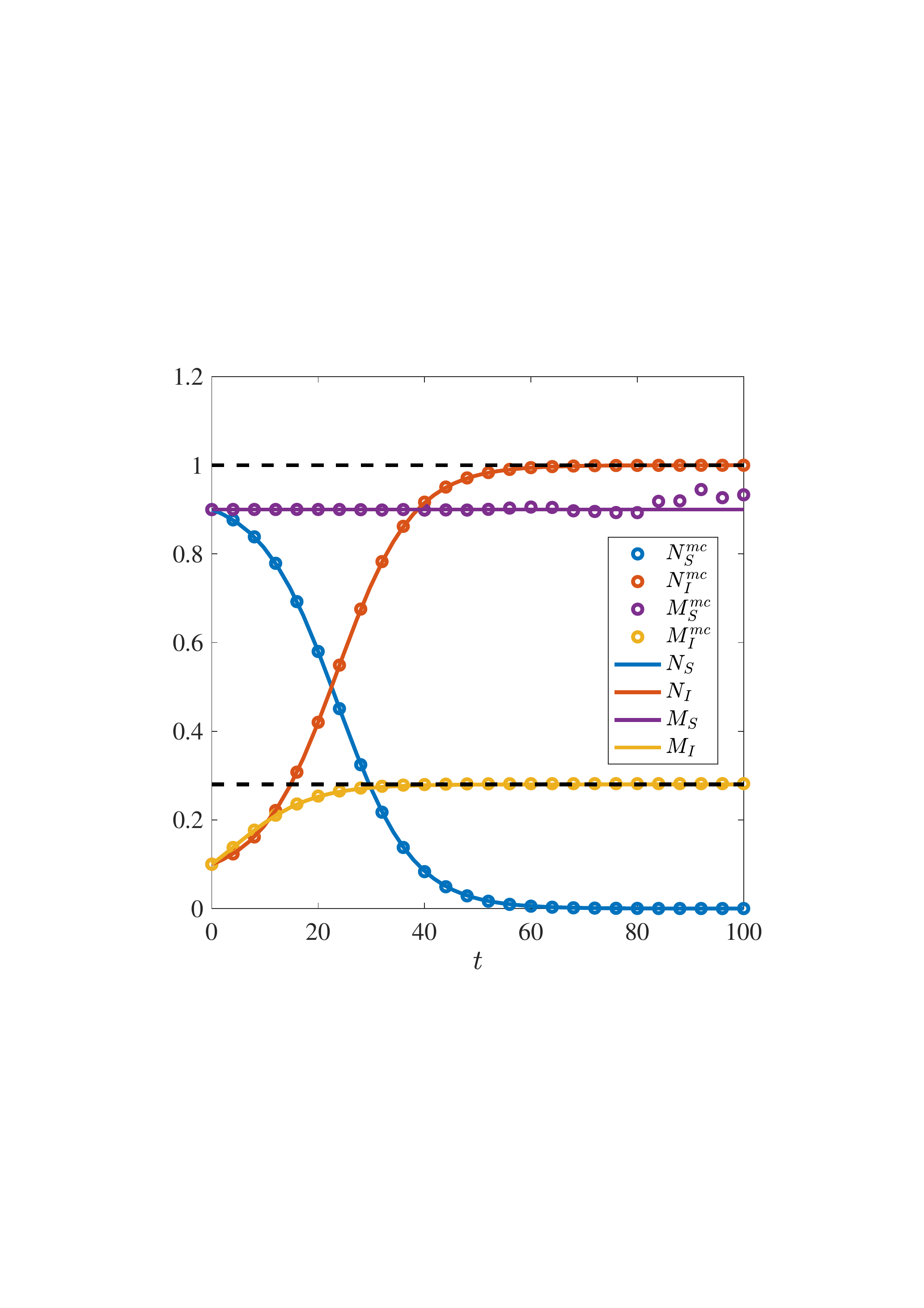}}
\subcaptionbox{\label{prop3R1V9}} 
{\includegraphics[trim={0 7cm 0 7cm},clip, width=.49\textwidth]{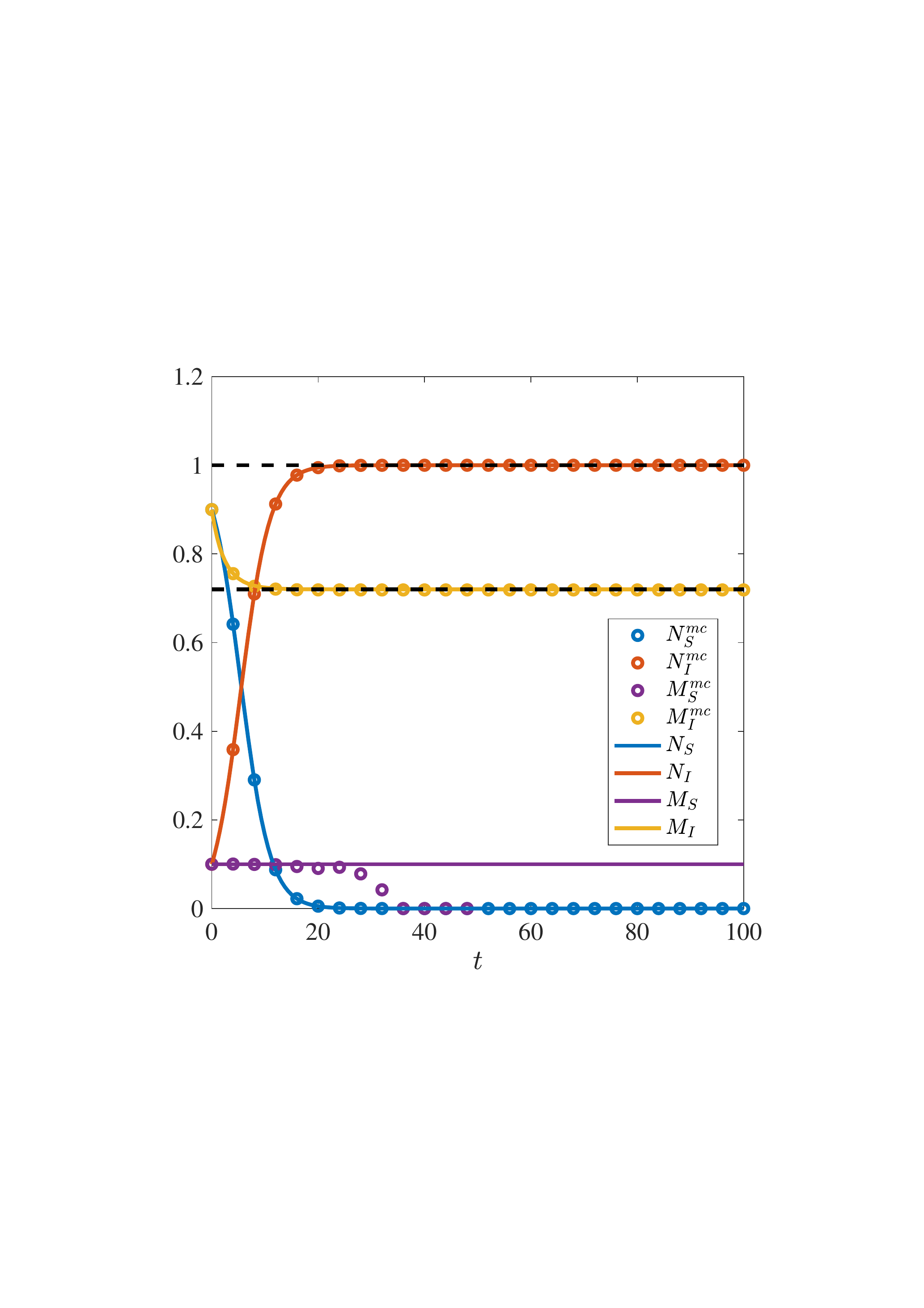}}
\caption{{\bf Numerical results under the assumptions of Proposition~\ref{prop3}.} Dynamics of $N_S(t)$ (blue lines), $N_I(t)$ (orange lines), $M_S(t)$ (purple lines) and $M_I(t)$ (yellow lines) obtained by solving numerically the macroscopic model defined via the ODE system~\eqref{eq:ODEsNSNIMSMI1} subject to initial conditions~\eqref{def:ICODE} with either $M_{S,0}=0.9$ and $M_{I,0}=0.1$ (panels (a) and (c)) or $M_{S,0}=0.1$ and $M_{I,0}=0.9$ (panels (b) and (d)). The dynamics of the corresponding quantities obtained through Monte Carlo simulations of the individual-based model are highlighted by circular markers, while the black lines highlight the asymptotic values given by~\eqref{eq:asylim3a} and~\eqref{eq:asylim3b}. These results are obtained in the case where the function $q$ is defined via~\eqref{def:qProp3} with $\kappa_3=0.5$ so that the assumptions of Proposition~\ref{prop3} are satisfied, and the kernel $Q$ is either defined as a uniform probability distribution via~\eqref{def:Qunif} (panels (a) and (b)) or as a truncated normal probability distribution via~\eqref{def:Qtrunc}-\eqref{def:phimusigma} (panels (c) and (d)).}
\label{fig:prop3}
\end{figure}

\newpage
\section{Discussion and research perspectives}
\label{sec:disc}
\paragraph{Summary of the main results.} 
We developed a new structured compartmental model for the coevolutionary dynamics between susceptible and infectious individuals in a heterogeneous SI epidemiological system. In this model, the susceptible compartment is structured by a continuous variable, $r \in \mathcal{R} \subset \mathbb{R}_+$, that represents the level of resistance to infection of susceptible individuals, while the infectious compartment is structured by a continuous variable, $v \in \mathcal{V} \subset \mathbb{R}_+$, that represents the viral load of infectious individuals. The model takes into account the fact that the level of resistance to infection and the viral load of the individuals may evolve in time, along with the fact that 
the probability of infection of susceptible individuals depends both on their level of resistance to infection and the viral load of the infectious individuals they come in contact with. 

We first formulated a stochastic individual-based model that tracks the dynamics of single individuals, from which we formally derived the corresponding mesoscopic model, which consists of the IDE system~\eqref{eq:model} for the population density functions of susceptible and infectious individuals, $n_S(t,r)$ and $n_I(t,v)$. We then considered an appropriately rescaled version of this model, which is given by the IDE system~\eqref{eq:modeleps}, and we carried out formal asymptotic analysis to derive the corresponding macroscopic model, which comprises the ODE system~\eqref{eq:ODEsNSNIMSMIgen} for the proportions of susceptible and infectious individuals, $N_S(t)$ and $N_I(t)$, the mean level of resistance to infection of susceptible individuals, $M_S(t)$, and the mean viral load of infectious individuals, $M_I(t)$.

We established well-posedness of the mesoscopic model (see Theorem~\ref{theo:wpide}) and the macroscopic model~\eqref{eq:ODEsNSNIMSMIgen} (see Theorem~\ref{theo:wpode}), and we studied the long-time behaviour of the components of the solution to the macroscopic model (see Propositions~\ref{prop1}-\ref{prop3}). Moreover, we presented a sample of numerical results (see Figures~\ref{fig:prop1}-\ref{fig:prop3}) which show that, in the framework of assumptions~\eqref{ass:Krv}-\eqref{ass:thetaseps} (i.e. the assumptions under which the macroscopic model is formally obtained from the mesoscopic model), there is an excellent quantitative agreement between the results of Monte Carlo simulations of the individual-based model, numerical solutions of the macroscopic model, and the analytical results established by Propositions~\ref{prop1}-\ref{prop3}. This validates the formal limiting procedure employed to obtain the mesoscopic model alongside the formal asymptotic analysis carried out to derive the macroscopic model.

%and the definitions given by~\eqref{def:domainsRV}-\eqref{def:KSKI} (see Remark~\ref{rem:KSKI})
\paragraph{Summary of biological insights provided by the main results.} Under biological scenarios corresponding to assumptions~\eqref{ass:Krv}-\eqref{ass:thetaseps} (see Remark~\ref{rem:Krv} and Remark~\ref{rem:theta}), the asymptotic results established by Propositions~\ref{prop1}-\ref{prop3} shed light on the way in which the probability of infection, $q$, affects the co-evolutionary dynamics between susceptible and infectious individuals. These results also demonstrate how the long-term behaviour of the heterogeneous SI epidemiological system considered here depends on:
\begin{enumerate*}[label=(\roman*)]
\item the mean value of the viral load acquired by a susceptible individual upon infection, $\overline Q$;
\item the initial proportions of susceptible and infectious individuals, $N_{S,0}$ and $N_{I,0}$;
\item the initial values of the mean level of resistance to infection of susceptible individuals and the mean viral load of infectious individuals, $M_{S,0}$ and $M_{I,0}$.
\end{enumerate*}
In summary:
\begin{itemize}
\item[(i)] If $q(0,1) \geq q(1,1) > 0$ then all individuals eventually become infectious. Moreover, the equilibrium value of the mean viral load increases with $N_{S,0}$, $N_{I,0}$, $M_{I,0}$, and $\overline Q(0,1)$, and increases or decreases with $M_{S,0}$ depending on the fact that $\overline{Q}(1,1)>\overline{Q}(0,1)$ or $\overline{Q}(1,1)<\overline{Q}(0,1)$, respectively.
\item[(ii)] If $q(0,1) > 0$ and $q(1,1) = 0$ then the mean level of resistance to infection eventually converges to the maximum value $1$ and an endemic equilibrium is attained. The fraction of susceptible individuals at equilibrium is proportional to $N_{S,0}$ with constant of proportionality $M_{S,0}$ (i.e. the larger the value of $M_{S,0}$ the smaller the proportion of infectious individuals at equilibrium). Moreover, the equilibrium value of the mean viral load of infectious individuals increases with $\overline{Q}(0,1)$ and $M_{I,0}$,  decreases with $M_{S,0}$, and increases or decreases with $N_{S,0}$ depending on the fact that $M_{I,0} >\overline{Q}(0,1)$ or $M_{I,0} <\overline{Q}(0,1)$, respectively.
\end{itemize}

\paragraph{Research perspectives.} We conclude with an outlook on possible research perspectives. First of all, while the current study has eschewed specific mechanisms driving changes in the level of resistance to infection of susceptible individuals and the viral load of infectious individuals, it would be relevant to explore how considering different forms of these mechanisms may result in different assumptions on the kernels $K_S$ and $K_I$ and, in turn, how such different assumptions may lead to different macroscopic models. In this regard, another avenue for future research would be explore the possibility to estimate the forms of $K_S$ and $K_I$ based on data; for this, techniques similar to those employed in~\cite{albi2022kinetic,albi2021control,dimarco2021kinetic} may prove useful. Moreover, although in this work we focused on heterogenous SI systems, the individual-based modelling approach presented here, and the formal methods to derive the corresponding mesoscopic and macroscopic models, could easily be extended to heterogenous SIS and SIR epidemiological systems. Furthermore, while we did not incorporate into the model the effect of measures for preventing and containing outbreak of infection, it would certainly be interesting to generalise the underlying individual-based modelling approach, as well as the limiting procedures to formally derive the mesoscopic and macroscopic counterparts of the model, to cases where pharmaceutical and non-pharmaceutical interventions to control the spread of infection are taken into account. In this regard, another track to follow might be to address optimal control of the macroscopic model to identify optimal strategies to prevent and contain outbreak of infection and then investigate whether such strategies would remain optimal also for the corresponding individual-based model. As a further generalisation, in the vein of~\cite{loy2021viral}, we could also consider the case where individuals are distributed over a network whereby each node would correspond to a different spatial region, in order to assess how the spread of infectious diseases is shaped by the interplay between phenotypic and spatial heterogeneities across scales. 

\section*{Appendix}

\appendix

\section{Proof of Theorem~\ref{theo:wpide}}
\label{sec:appA}
\paragraph{Notation.} Let $x = (r,v) \in \mathcal{R} \times \mathcal{V} =: \mathcal{X}$ and $n = (n_S, n_I)$, and consider the following Banach spaces 
$$
(A, \| \cdot \|_{A}), \quad A := \left\{n = (n_S, n_I) : n_1 \in L^1(\mathcal{R}), n_2 \in L^1(\mathcal{V}) \right\}, \quad \| n(t,\cdot) \|_{A} = \| n_S(t,\cdot) \|_{L^1(\mathcal{R})} + \| n_I(t,\cdot) \|_{L^1(\mathcal{V})}
$$
$$
(B, \| \cdot \|_{B}), \quad B := C([0,T], A), \quad  \| n \|_{B} = \sup_{t \in [0,T]} \| n(t,\cdot) \|_{A},
$$
with $T \in \mathbb{R}_+^*$. Let $F[n](t,x) = (F_S[n](t,x), F_I[n](t,x))$, with $F_S$ and $F_I$ being defined by the right-hand sides of the IDEs~\eqref{eq:model}, i.e.
$$
F_S[n](t,x) =  -\theta_X \, n_S \, \int_{\mathcal{V}} q(r, v^*) \, n_I(t,v^*) \di v^* + \theta_R \int_{\mathcal{R}} K_S(r | r') \, n_S(t,r') \di r' - \theta_R \, n_S
$$
and
$$
F_I[n](t,x) = \theta_X \, \int_{\mathcal{R}} \int_{\mathcal{V}} q(r,v^*) \, Q(v | r, v^*) \, n_S(t,r) \, n_I(t,v^*) \di v^* \di r + \theta_V \int_{\mathcal{V}} K_I(v | v') \, n_I(t,v') \di v' - \theta_V \, n_I.
$$

\paragraph{Preliminaries.} Considering $t \in [0,T]$ with $T \in \mathbb{R}^*_+$, we rewrite the Cauchy problem~\eqref{eq:model}-\eqref{eq:ICb} as
\beq \label{eq:modelvect}
\begin{cases}
\pa_t n(t,x) = F[n](t,x),
\\\\
n(0,x) = n_0(x) := (n_{S,0}, n_{I,0}),
\end{cases}
\quad
(t,x) \in (0,T] \times \mathcal{X}
\eeq
and we also note that
\beq
\label{eq:defP}
n(t,x) = P[n](t,x) \;\; \text{ with } \;\; P[n](t,x):= n(0,x) + \int_0^t  F[n](s,x) \di s. 
\eeq
\paragraph{Step 1: Local well-posedness.} We begin by proving the following lemma:
\begin{lem}
\label{lem:Pprop}
Let assumptions~\eqref{ass:KSa}, \eqref{ass:KIa}, \eqref{ass:Qa}, \eqref{ass:q}, and~\eqref{eq:Theta} hold. Then, for any $n, m \in B$ with $n(0,x)=m(0,x) = n_0(x)$, the operator $P$ defined via~\eqref{eq:defP} satisfies the following estimates
\beq
\label{eq:estP1}
\|P[n]\|_{B} \leq \|n_0\|_{A} + 2 \, T \Big(\max(\theta_R,\theta_V) + \theta_X \|n\|_{B} \Big) \|n\|_{B},
\eeq  
\beq
\label{eq:estP2}
\|P[n] - P[m]\|_{B} \leq 2 \, T \Big[\max(\theta_R,\theta_V) + \theta_X \Big(\|n\|_{B} + \|m\|_{B}\Big) \Big] \|n - m\|_{B}.
\eeq
\end{lem}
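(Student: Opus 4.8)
The plan is to reduce both inequalities to $L^1$-type bounds on the right-hand side map $F=(F_S,F_I)$ and then integrate in time. First I would record that, since $n$ and $m$ share the initial datum $n_0$, one has $P[n](t,x)-P[m](t,x)=\int_0^t\big(F[n](s,x)-F[m](s,x)\big)\di s$, so $n_0$ drops out of~\eqref{eq:estP2} and contributes only the additive term $\|n_0\|_A$ in~\eqref{eq:estP1}. It therefore suffices to prove the pointwise-in-time bounds
\beq
\|F[n](t,\cdot)\|_A\le 2\big(\max(\theta_R,\theta_V)+\theta_X\|n\|_B\big)\|n\|_B
\eeq
and
\beq
\|F[n](t,\cdot)-F[m](t,\cdot)\|_A\le 2\big[\max(\theta_R,\theta_V)+\theta_X(\|n\|_B+\|m\|_B)\big]\|n-m\|_B
\eeq
for every $t\in[0,T]$; integrating over $[0,t]$ and taking $\sup_{t\in[0,T]}$ then gives~\eqref{eq:estP1}--\eqref{eq:estP2}. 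The same bounds, combined with continuity of $s\mapsto n(s,\cdot)$ with values in $A$, also show that $P$ maps $B$ into itself, which is what is needed for the subsequent fixed-point argument.

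The heart of the matter is that the integral operators built from $K_S$, $K_I$ and $Q$ act as contractions (in fact with operator norm $1$) on the relevant $L^1$ spaces. To see this for, say, the gain term in the $n_S$-equation, I would apply Tonelli's theorem to $\int_{\mathcal R}\big|\int_{\mathcal R}K_S(r|r')n_S(t,r')\di r'\big|\di r$, swap the order of integration, and use $\int_{\mathcal R}K_S(r|r')\di r=1$ --- observe that this is the normalisation in the \emph{output} variable $r$, which is precisely the one being integrated, and it holds pointwise in $r'$ by~\eqref{ass:KSa}; likewise for $K_I$ via~\eqref{ass:KIa} and for $Q$ (integrating the output variable $v$) via~\eqref{ass:Qa}. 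The regularity $K_S\in L^1(\mathcal R;C(\mathcal R))$, $Q\in L^1(\mathcal V;C(\mathcal R\times\mathcal V))$ guarantees measurability so that these Fubini--Tonelli steps are legitimate. Using in addition $0\le q\le1$ from~\eqref{ass:q}, one bounds each linear term of $F_S$ by $\theta_R\|n_S(t,\cdot)\|_{L^1(\mathcal R)}$, each linear term of $F_I$ by $\theta_V\|n_I(t,\cdot)\|_{L^1(\mathcal V)}$, and the transmission terms of both $F_S$ and $F_I$ by $\theta_X\|n_S(t,\cdot)\|_{L^1(\mathcal R)}\|n_I(t,\cdot)\|_{L^1(\mathcal V)}$. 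Summing the three contributions in each component and using $\theta_R\|n_S(t,\cdot)\|_{L^1(\mathcal R)}+\theta_V\|n_I(t,\cdot)\|_{L^1(\mathcal V)}\le\max(\theta_R,\theta_V)\,\|n(t,\cdot)\|_A$ together with $\|n(t,\cdot)\|_A\le\|n\|_B$ then yields the first displayed bound.

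For the Lipschitz bound I would repeat the linear estimates with $n$ replaced by $n-m$, and handle the two bilinear transmission terms via the standard splitting $n_Sn_I-m_Sm_I=(n_S-m_S)\,n_I+m_S\,(n_I-m_I)$ performed under the integral sign; after the same Tonelli and normalisation bookkeeping this produces the combination $\|n_S(t,\cdot)-m_S(t,\cdot)\|_{L^1(\mathcal R)}\|n_I(t,\cdot)\|_{L^1(\mathcal V)}+\|m_S(t,\cdot)\|_{L^1(\mathcal R)}\|n_I(t,\cdot)-m_I(t,\cdot)\|_{L^1(\mathcal V)}$, which is bounded by $(\|n\|_B+\|m\|_B)\|n-m\|_B$, hence the factor $\theta_X(\|n\|_B+\|m\|_B)$ in the second bound.

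All the individual steps are elementary; the points I expect to require care --- rather than genuine difficulty --- are: keeping track of which variable each kernel normalisation refers to, so that the Fubini swaps really give $1$ and not some other quantity; working with absolute values throughout, since elements of $B$ are not assumed nonnegative at this stage (positivity of the actual solution is recovered separately); and the bilinear splitting, which is the sole nonlinear ingredient and the reason the Lipschitz constant involves $\|n\|_B+\|m\|_B$ rather than being uniform. This non-uniformity is precisely why the contraction argument can only be closed on a short time interval and must subsequently be iterated using the conserved total mass $\|n_S(t,\cdot)\|_{L^1(\mathcal R)}+\|n_I(t,\cdot)\|_{L^1(\mathcal V)}$.
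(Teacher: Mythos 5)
Your proposal is correct and follows essentially the same route as the paper's proof: bound $\|F_S[n]\|_{L^1(\mathcal{R})}$ and $\|F_I[n]\|_{L^1(\mathcal{V})}$ using the kernel normalisations and $0\le q\le 1$, sum to get $\|F[n]\|_A\le 2\bigl(\theta_X\|n\|_A^2+\max(\theta_R,\theta_V)\|n\|_A\bigr)$, and integrate in time. The only difference is one of detail: the paper dismisses~\eqref{eq:estP2} as a direct consequence of~\eqref{eq:estP1}, whereas you make explicit the bilinear splitting $n_Sn_I-m_Sm_I=(n_S-m_S)n_I+m_S(n_I-m_I)$ that actually produces the factor $\theta_X(\|n\|_B+\|m\|_B)$ in the Lipschitz constant.
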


\begin{proof}
The estimate~\eqref{eq:estP2} is a direct consequence of the estimate~\eqref{eq:estP1}. Therefore, we provide the proof of the estimate~\eqref{eq:estP1} only. Under assumptions~\eqref{ass:KSa}, \eqref{ass:KIa}, \eqref{ass:Qa}, \eqref{ass:q}, and~\eqref{eq:Theta}, we have
$$
\|F_S[n]\|_{L^1(\mathcal{R})} \leq \theta_X \|n_S\|_{L^1(\mathcal{R})} \|n_I\|_{L^1(\mathcal{V})} + 2 \theta_R \|n_S\|_{L^1(\mathcal{R})} 
$$
and
$$
\|F_I[n]\|_{L^1(\mathcal{V})} \leq \theta_X \|n_S\|_{L^1(\mathcal{R})} \|n_I\|_{L^1(\mathcal{V})} + 2 \theta_V \|n_I\|_{L^1(\mathcal{V})}.
$$
Hence,
\beqa
\|F[n]\|_A &=& \|F_S[n]\|_{L^1(\mathcal{R})} + \|F_I[n]\|_{L^1(\mathcal{V})} \nonumber
\\
&\leq& 2 \Big(\theta_X \|n_S\|_{L^1(\mathcal{R})} \|n_I\|_{L^1(\mathcal{V})} + \theta_R \|n_S\|_{L^1(\mathcal{R})} + \theta_V \|n_I\|_{L^1(\mathcal{V})}\Big) \nonumber
\\
&\leq& 2 \Big(\theta_X \|n\|_{A}^2 + \max(\theta_R,\theta_V)  \|n\|_{A}\Big). \nonumber
\eeqa
Recalling the definition for $P$ given by~\eqref{eq:defP}, the above estimate allows us to conclude that
$$
\|P[n]\|_A \leq \|n_0\|_{A} + 2 \int_0^t \Big(\max(\theta_R,\theta_V) + \theta_X \|n(\cdot, s)\|_{A}\Big) \|n(\cdot, s)\|_{A} \, \di s,
$$
from which, recalling that $t \in [0,T]$, the estimate~\eqref{eq:estP1} can easily be obtained.
\end{proof}
The estimate~\eqref{eq:estP1} ensures that $P$ maps $B$ into itself. Moreover, the estimate~\eqref{eq:estP2} ensures that there exists $T^*>0$ such that if $T<T^*$ then $P$ is a contraction on $B$. Hence, the Banach fixed point theorem allows us to conclude that if $T<T^*$ then $P: B \to B$ admits a unique fixed point $n \in B$ and, therefore (cf. the relation given by~\eqref{eq:defP}), the Cauchy problem~\eqref{eq:modelvect} (i.e. the Cauchy problem~\eqref{eq:model}-\eqref{eq:ICb}) admits a unique solution of components $n_S \in C([0,T]; L^1(\mathcal{R}))$ and $n_I \in C([0,T]; L^1(\mathcal{V}))$.

\paragraph{Step 2: Non-negativity of~\texorpdfstring{$\boldsymbol{n_S}$}{} and~\texorpdfstring{$\boldsymbol{n_I}$}{}.} Solving the IDE~\eqref{eq:model}$_1$ for $n_S$ subject to the initial condition $n_{S,0}$ yields the semi-explicit formula
\begin{align*}
	\resizebox{\textwidth}{!}{$
	\begin{aligned}
		n_S(t,r) &= n_{S,0}(r)\exp\left[-\int_0^t\left(\theta_X\,\int_{\mathcal{V}}q(r,v^*)\,n_I(s,v^*)\di v^*+\theta_R\right)\di s\right] \\
		&\phantom{=} +\theta_R\,\int_0^t\left(\int_{\mathcal{R}}K_S(r|r')\,n_S(s,r')\di r'\right)\exp\left[-\int_s^t\left(\theta_X\,\int_{\mathcal{V}}q(r,v^*)\,n_I(\tau,v^*)\di v^*
			+\theta_R\right)\di\tau\right]\di s.
	\end{aligned}
	$}
\end{align*}
Since the kernel $K_S$ and the functions $n_{S,0}$ and $q$ are non-negative (cf. assumptions~\eqref{ass:KSa}, \eqref{ass:q} and \eqref{eq:ICa}) and the parameter $\theta_R$ is positive (cf. assumption~\eqref{eq:Theta}), the above formula implies that $n_S$ is non-negative. Similarly, solving the IDE~\eqref{eq:model}$_2$ for $n_I$ subject to the initial condition $n_{I,0}$ we obtain the semi-explicit formula
\begin{align*}
	\begin{aligned}
		n_I(t,v) &= n_{I,0}(v)\exp\left(-\theta_V\,t\right) \\
		&\phantom{=} +\theta_X\int_0^t\left(\int_{\mathcal{R}}\int_{\mathcal{V}}q(r,v^*)\,Q(v|r,v^*)\,n_S(s,r)\,n_I(s,v^*)\di v^*\di r\right)\,\exp\big[-\theta_V\,(t-s)\big]\di s \\
		&\phantom{=} +\theta_V\int_0^t\left(\int_{\mathcal{V}}K_I(v|v')\,n_I(s,v')\di v'\right)\,\exp\big[-\theta_V\,(t-s)\big]\di s,
	\end{aligned}
\end{align*}
from which we conclude that $n_I$ is non-negative due to the fact that the kernels $K_I$ and $Q$ and the functions $n_{I,0}$ and $q$ are non-negative (cf. assumptions~\eqref{ass:KIa}, \eqref{ass:Qa}, \eqref{ass:q}, and \eqref{eq:ICa}), the parameters $\theta_X$ and $\theta_V$ are positive (cf. assumptions~\eqref{eq:Theta}), and, as we have just proved, $n_S$ is non-negative.

\paragraph{Step 3: Uniform bounds on~\texorpdfstring{$\boldsymbol{\|n(t,\cdot)\|_{A}}$}{} and~\texorpdfstring{$\boldsymbol{\|n\|_{B}}$}{}.} Adding together the differential equations obtained by integrating the IDE~\eqref{eq:model} for $n_S(t,r)$ over $\mathcal{R}$ and the IDE~\eqref{eq:model} for $n_I(t,v)$ over $\mathcal{V}$ and using the non-negativity of $n_S$ and $n_I$, one can easily prove that, under assumptions~~\eqref{ass:KSa}, \eqref{ass:KIa}, \eqref{ass:Qa}, \eqref{ass:q}, the solution to the Cauchy problem~\eqref{eq:modelvect} is such that
\beq
\label{eq:estim}
\| n(t,\cdot) \|_{A} = \| n(\cdot, 0) \|_{A} \; \forall t \in \mathbb{R}_+, \quad  \| n \|_{B} = \| n(\cdot, 0) \|_{A}
\eeq
and, in particular, the components of the solution to the Cauchy problem~\eqref{eq:model}-\eqref{eq:ICb} are such that
$$
\| n_S(t,\cdot) \|_{L^1(\mathcal{R})} + \| n_I(\cdot,t) \|_{L^1(\mathcal{V})} = 1 \quad \forall t \geq 0.
$$

\paragraph{Step 4: Global well-posedness.} The uniform bounds~\eqref{eq:estim} allow one to iterate the process used in Step~1 on consecutive time intervals of the form $[i \,T, (i + 1) \, T]$ with $i \in \mathbb{N}$ and, in so doing, prove that the Cauchy problem~\eqref{eq:modelvect} (i.e. the Cauchy problem problem~\eqref{eq:model}-\eqref{eq:ICb}) admits a unique solution of components $n_S \in C(\mathbb{R}_+; L^1(\mathcal{R}))$ and $n_I \in C(\mathbb{R}_+; L^1(\mathcal{V}))$, whose non-negativity is ensured by the results established in Step 2. This concludes the proof of Theorem~\ref{theo:wpide}. $\qed$

\section{Proof of the upper bound~\texorpdfstring{\eqref{eq:unifestNSNIMSMI}}{} on~\texorpdfstring{$\boldsymbol{M_S}$}{}}
\label{sec:appB}
As mentioned in the main body of the paper, we omit the proofs of the uniform bounds~\eqref{eq:unifestNSNIMSMI} on $N_S(t)$, $N_I(t)$ and $M_I(t)$, which can be obtained through simple estimates, while we prove here the upper bound~\eqref{eq:unifestNSNIMSMI} on $M_S(t)$. 

Using the differential equation~\eqref{eq:ODEsNSNIMSMIgen}$_3$ along with the fact that the function $M_S$ is non-negative (cf. estimates~\eqref{eq:unifestNSNIMSMI}), the functions $q$ and $N_I$ are non-negative and bounded above by $1$ (cf. assumptions~\eqref{ass:q} and the estimates~\eqref{eq:unifestNSNIMSMI}), the functions $\psi_I$ and $\psi_S$ are non-negative and such that
$$
\int_{\mathcal{V}} \psi_I(t,v^*) \di v^* = 1, \quad \int_{\mathcal{R}} \psi_S(t,r) \di r = 1, \quad \int_{\mathcal{R}} r \, \psi_S(t,r) \di r = M_S(t)
$$ 
(cf. the relations given by~\eqref{ass:KSKIaddMSMI}) and $\mathcal{R} \subset \mathbb{R}_+$, we find that, in general, 
\begin{eqnarray}
\ddt M_S(t) &=& N_I(t) \, \int_{\mathcal{V}} \left(\int_{\mathcal{R}} \left(M_S(t) - r\right) \, q(r, v^*) \, \psi_S(t,r) \di r \right) \, \psi_I(t,v^*) \di v^* \nonumber
\\
&\leq& N_I(t) \, M_S(t), \nonumber
\\
&\leq& M_S(t),
\label{eq:diffineq1}
\end{eqnarray}
and, in particular, if $M_S(t) \geq \sup \mathcal{R}$ then
\begin{eqnarray}
\ddt M_S(t) &=& N_I(t) \, \int_{\mathcal{V}} \left(\int_{\mathcal{R}} \left(M_S(t) - r\right) \, q(r, v^*) \, \psi_S(t,r) \di r \right) \, \psi_I(t,v^*) \di v^* \nonumber
\\
&\leq& N_I(t) \, \int_{\mathcal{V}} \left(\int_{\mathcal{R}} \left(M_S(t) - r\right) \, \psi_S(t,r) \di r \right) \, \psi_I(t,v^*) \di v^* \nonumber
\\
&\leq& M_S(t) - M_S(t) = 0.
\label{eq:diffineq2}
\end{eqnarray}
Since $M_S(0) \leq \sup \mathcal{R}$ (cf. assumptions~\eqref{eq:ODEsNSNIMSMIICgen}), the upper bound~\eqref{eq:unifestNSNIMSMI} on $M_S(t)$ follows from the differential inequalities~\eqref{eq:diffineq1} and~\eqref{eq:diffineq2}. $\qed$

\section{Proof of Lemma~\ref{lemma:psiSpsiI}}
\label{sec:appC}
We provide the proof for the result established by Lemma~\ref{lemma:psiSpsiI} on $\psi_S$ only, since the result on $\psi_I$ can be proved using an analogous method.

When the set $\mathcal{R}$ is defined via~\eqref{def:domainsRV}, the eigenproblem~\eqref{ass:KSKIaddMSMI}$_1$ reads as
\beq
\label{lemma1:eq0}
\begin{cases}
 \displaystyle{\int_{0}^1 K_{S}(r | r') \, \psi_{S}(t,r') \di r' = \psi_{S}(t,r)},
 \\\\
 \displaystyle{\int_{0}^1 \psi_S(t,r)  \di r = 1, \; \int_{0}^1 r \, \psi_S(t,r) \di r = M_S(t).}
\end{cases}
\eeq
Multiplying both sides of~\eqref{lemma1:eq0}$_1$ by $r^2$, integrating with respect to $r$ and rearranging terms yields
\beq
\label{lemma1:eq1}
\int_{0}^1 \left(\int_{0}^1 r^2 \, K_{S}(r | r') \di r \right) \psi_{S}(t,r') \di r' = \int_{0}^1 r^2 \, \psi_{S}(t,r) \di r.
\eeq
Introducing the notation
$$
\sigma^2_S(r') := \int_{0}^1 r^2 K_{S}(r | r') \di r - \left(\int_{0}^1 r K_{S}(r | r') \di r \right)^2,
$$
which, under assumption~\eqref{ass:Krv} on $K_S$, reduces to
$$
\sigma^2_S(r') = \int_{0}^1 r^2 K_{S}(r | r') \di r - \left(r'\right)^2,
$$
from~\eqref{lemma1:eq1} we obtain
$$
\int_{0}^1 \left(\sigma^2_S(r') + \left(r'\right)^2 \right) \psi_{S}(t,r') \di r' = \int_{0}^1 r^2 \psi_{S}(t,r) \di r \quad \Longrightarrow \quad \int_{0}^1 \sigma^2_S(r') \psi_{S}(t,r') \di r' = 0.
$$
This implies that
$$
\int_{0}^{\e} \sigma^2_S(r') \psi_{S}(t,r') \di r' + \int_{\e}^{1-\e} \sigma^2_S(r') \psi_{S}(t,r') \di r'  + \int_{1-\e}^{1} \sigma^2_S(r') \psi_{S}(t,r') \di r' = 0, \quad \forall \, \e \in \mathbb{R}_+^*
$$
and, therefore, since $\sigma^2_S$ and $\psi_{S}$ are non-negative,
\beq
\label{lemma1:eq2}
\int_{0}^{\e} \sigma^2_S(r') \psi_{S}(t,r') \di r' =0, \quad \int_{\e}^{1-\e} \sigma^2_S(r') \psi_{S}(t,r') \di r' = 0, \quad \int_{1-\e}^{1} \sigma^2_S(r') \psi_{S}(t,r') \di r' = 0, \quad \forall \, \e \in \mathbb{R}_+^*.
\eeq
Then we note that, under assumptions~\eqref{ass:KSb} and~\eqref{ass:Krv}, the following relations hold
\beq
\label{lemma1:eq3}
\sigma^2_S(r') > 0 \quad \forall r' \in (0,1), \qquad \sigma^2_S(0)=\sigma^2_S(1)=0.
\eeq
%In fact, for all $r' \in (0,1)$, using assumptions~\eqref{ass:KS} and \eqref{ass:Krv} on $K_S$, one finds 
%\begin{eqnarray*}
%\sigma^2_S(r') &=& \int_{0}^1 r^2 K_{S}(r | r') \di r - \left(r'\right)^2 
%\\
%&=& \int_{0}^1 r^2 K_{S}(r | r') \di r - \int_{0}^1 \left(r'\right)^2 \, K_{S}(r | r') \di r
%\\
%&=& \int_{0}^1 \left[r^2 - \left(r'\right)^2 \right] \, K_{S}(r | r') \di r 
%\\
%&=& \int_{0}^1 \left(r - r' \right) \, \left(r + r' \right) \, K_{S}(r | r') \di r 
%\\
%&=& \int_{0}^{r'} \left(r - r' \right) \, \left(r + r' \right) \, K_{S}(r | r') \di r +  \int_{r'}^{1} \left(r - r' \right) \, \left(r + r' \right) \, K_{S}(r | r') \di r 
%\\
%&>& 2 \, \int_{0}^{r'} \left(r - r' \right) \, r' \, K_{S}(r | r') \di r + 2 \,  \int_{r'}^{1} \left(r - r' \right) \, r' \, K_{S}(r | r') \di r = 2 \, \int_{0}^{1} \left(r - r' \right) \, r' \, K_{S}(r | r') \di r = 0.
%\end{eqnarray*}
Since $\e \in \mathbb{R}_+^*$ in~\eqref{lemma1:eq2} can be chosen arbitrarily, the relations given by~\eqref{lemma1:eq2} along with the relation given by~\eqref{lemma1:eq3} and the normalisation condition given by~\eqref{lemma1:eq0} imply that ${\rm Supp} \left(\psi_{S}(t,\cdot)\right) = \{0,1\}$ for all $t \geq 0$, that is,
\beq
\label{lemma1:eq4}
\psi_{S}(t,r) = w_0(t) \delta_0(r) + w_1(t) \delta_1(r), \quad \forall t \geq 0.
\eeq
Due to the non-negativity of $\psi_{S}$, we have $w_0 : \mathbb{R}_+ \to \mathbb{R}^*_+$ and $w_1 : \mathbb{R}_+ \to \mathbb{R}^*_+$. Moreover, since
$$
\int_{0}^1 \big(w_0(t) \delta_0(r) + w_1(t) \delta_1(r)\big) \di r = w_0(t)+ w_1(t), \quad  \int_{0}^1 r \,  \big(w_0(t) \delta_0(r) + w_1(t) \delta_1(r)\big) \di r = w_1(t),
$$
inserting~\eqref{lemma1:eq4} in~\eqref{lemma1:eq0}$_2$ we find 
$$
\begin{cases}
w_0(t)+ w_1(t) = 1,
\\\\
w_1(t) = M_S(t),
\end{cases}
\quad 
\Longrightarrow
\quad
\begin{cases}
w_0(t) = 1 - M_S(t),
\\\\
w_1(t) = M_S(t).
\end{cases}
$$
Finally, substituting the above expressions of $w_0(t)$ and $w_1(t)$ into~\eqref{lemma1:eq4} we obtain the expression of $\psi_S$ given by~\eqref{eq:psiSpsiI}. $\qed$

\section{Additional results under assumptions~\texorpdfstring{\eqref{ass:KSKI_analysis2}}{}}
\label{sec:appD}
Without loss of generality, we focus on the case where $\mathcal{R}$ and $\mathcal{V}$ are defined via~\eqref{def:domainsRV}. Under assumptions~\eqref{ass:KSKI_analysis2}, the relations given by~\eqref{ass:KSKIaddMSMI} imply that
$$
\psi_S(t,r) = K_S(r), \quad \psi_I(t,v) = K_I(v) \quad \forall \, t \geq 0,
$$
and also that
\beq
\label{eq:MSMI_analysis2a}
M_S(t) = \int_0^1 r \, \psi_S(t,r) \, \di r = \int_0^1 r \, K_S(r) \, \di r =: \mu_{K_S} \in [0,1], \quad \forall \, t > 0,
\eeq
and
\beq
\label{eq:MSMI_analysis2b}
M_I(t) = \int_0^1 v \, \psi_I(t,v) \, \di v = \int_0^1 v \, K_I(v) \, \di v =: \mu_{K_I} \in [0,1] \quad \forall \, t > 0.
\eeq
Relations~\eqref{eq:MSMI_analysis2a} and~\eqref{eq:MSMI_analysis2b} lead to the following compatibility conditions
\beq
\label{eq:MSMI_analysis2comp}
M_S(0) = \mu_{K_S}, \quad M_I(0) = \mu_{K_I}.
\eeq
Hence, the macroscopic model reduces to relations~\eqref{eq:MSMI_analysis2a}-\eqref{eq:MSMI_analysis2comp} and the ODE system~\eqref{eq:ODEsNSNI}, which specifies to 
\beq
\label{eq:ODEsNSNI_analysis2}
\begin{cases}
\displaystyle{\ddt N_S = - N_S \, N_I \int_{0}^1 \int_{0}^1 q(r, v^*) \, K_S(r) \, K_I(v^*) \di v^* \di r},
\\\\
\displaystyle{\ddt N_I = N_I \, N_S \int_{0}^1 \int_{0}^1 q(r,v^*) \, K_S(r) \, K_I(v^*) \di v^* \di r}.
\end{cases}
\eeq
Moreover, if the function $q$ is defined as
\beq
\label{def:q_analysis1}
q(r,v^*) := (1 - r) \, v^*,
\eeq
the ODE system~\eqref{eq:ODEsNSNI_analysis2} reduces to the following system of ODEs
\beq
\label{eq:ODEsNSNIMSMI_analysis2red}
\begin{cases}
\displaystyle{\ddt N_S = - \mu_{K_I} \, N_I \, \left(1 - \mu_{K_S}\right) N_S},
\\\\
\displaystyle{\ddt N_I = \left(1 - \mu_{K_S} \right) \, N_S \, \mu_{K_I} \, N_I},
\end{cases}
\eeq
which we complement with the following initial condition (cf. assumptions~\eqref{eq:ODEsNSNIMSMIIC})
\beq
\label{eq:ODEsNSNIMSMIIC_analysis2}
N_S(0) = N_{S,0} \in (0,1), \quad N_I(0) = 1- N_{S,0}.
\eeq
The behaviour of the components of the solution to the Cauchy problem~\eqref{eq:ODEsNSNIMSMI_analysis2red}-\eqref{eq:ODEsNSNIMSMIIC_analysis2} is characterised by Proposition~\ref{prop1_analysis2}.
\begin{prpstn}
\label{prop1_analysis2}
The components of the solution to the Cauchy problem~\eqref{eq:ODEsNSNIMSMI_analysis2red}-\eqref{eq:ODEsNSNIMSMIIC_analysis2} are such that: 
\begin{itemize}
\item[] if 
\beq
\label{ass:appd1}
0 \leq \mu_{K_S} < 1 \quad \text{and} \quad 0 < \mu_{K_I} \leq 1
\eeq
then 
\beq
\label{eq:asylim1a_analysis2a}
\lim_{t \to \infty} N_S(t) = 0 \quad \text{and} \quad \lim_{t \to \infty} N_I(t) = 1;
\eeq
\item[] if 
\beq
\label{ass:appd2}
\mu_{K_S}=1 \quad \text{and/or} \quad \mu_{K_I}=0
\eeq
then 
\beq
\label{eq:asylim1a_analysis2b}
N_S(t) \equiv N_{S,0} \quad \text{and} \quad N_I(t) \equiv 1- N_{S,0}.
\eeq
\end{itemize}
\end{prpstn}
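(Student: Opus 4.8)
The plan is to reduce the two-dimensional system~\eqref{eq:ODEsNSNIMSMI_analysis2red} to a single scalar logistic equation and then read off the long-time asymptotics. First I would add the two equations in~\eqref{eq:ODEsNSNIMSMI_analysis2red} to get $\ddt\left(N_S+N_I\right)=0$, so that, under the initial condition~\eqref{eq:ODEsNSNIMSMIIC_analysis2}, $N_S(t)+N_I(t)=1$ for all $t\geq 0$. Substituting $N_I=1-N_S$ into~\eqref{eq:ODEsNSNIMSMI_analysis2red}$_1$ then yields the closed scalar equation
$$
\ddt N_S = -\lambda \, N_S \, (1-N_S), \qquad \lambda := \mu_{K_I}\left(1-\mu_{K_S}\right) \geq 0,
$$
subject to $N_S(0)=N_{S,0}\in(0,1)$. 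I would also record the elementary confinement $N_S(t)\in(0,1)$ for all $t\geq 0$, which follows either from the a priori estimates already established for the macroscopic model, from the semi-explicit representation $N_S(t)=N_{S,0}\exp\left(-\lambda\int_0^t(1-N_S(s))\di s\right)$, or simply from Cauchy--Lipschitz uniqueness by comparison with the constant solutions $0$ and $1$; in particular $1-N_S>0$ throughout.

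Next I would split into the two cases. If~\eqref{ass:appd1} holds then $\lambda>0$, so $\ddt N_S=-\lambda N_S(1-N_S)<0$ for all $t\geq 0$: the function $N_S$ is strictly decreasing and bounded below by $0$, hence $N_S(t)\to\ell$ for some $\ell\in[0,N_{S,0})$, and by continuity of the right-hand side the limit $\ell$ must be a zero of $s\mapsto\lambda s(1-s)$, i.e. $\ell\in\{0,1\}$; since $\ell<N_{S,0}<1$ this forces $\ell=0$, and then $N_I=1-N_S\to 1$, which is~\eqref{eq:asylim1a_analysis2a}. (Alternatively one may integrate the logistic equation explicitly, obtaining $N_S(t)=N_{S,0}\big/\big(N_{S,0}+(1-N_{S,0})e^{\lambda t}\big)$, and let $t\to\infty$.) If instead~\eqref{ass:appd2} holds then $\lambda=0$, the scalar equation reduces to $\ddt N_S=0$, hence $N_S\equiv N_{S,0}$ and $N_I\equiv 1-N_{S,0}$, which is~\eqref{eq:asylim1a_analysis2b}.

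The argument is entirely elementary and I do not expect any genuine obstacle; the only point deserving a line of care is the confinement $N_S(t)\in(0,1)$, which is what makes the sign of $\ddt N_S$ unambiguous in the case $\lambda>0$ and legitimises the $\omega$-limit argument (or, in the explicit-solution route, guarantees that the denominator never vanishes). Everything else is routine.
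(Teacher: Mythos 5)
Your proposal is correct and follows the same route as the paper: the paper likewise uses $N_I=1-N_S$ to reduce the system to the scalar logistic equation $\ddt N_S=-\mu_{K_I}(1-\mu_{K_S})(1-N_S)N_S$ and then declares the asymptotics to follow from ``very simple calculations, which are omitted''. You have simply supplied those omitted details (confinement of $N_S$ in $(0,1)$, monotonicity or the explicit logistic solution, and the constant case $\lambda=0$), all of which are accurate.
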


\begin{proof}
Using the fact that $N_I(t) = 1 - N_S(t)$ for all $t \geq 0$ (cf. the a priori estimates~\eqref{eq:aprioriest}), one can rewrite the ODE~\eqref{eq:ODEsNSNIMSMI_analysis2red} for $N_S$ as
$$
\ddt N_S(t) = - \mu_{K_I} \, \left(1 - \mu_{K_S}\right) \, (1 - N_S(t)) \, N_S(t).
$$
Hence, under the initial condition~\eqref{eq:ODEsNSNIMSMIIC_analysis2}, the results~\eqref{eq:asylim1a_analysis2a} and \eqref{eq:asylim1a_analysis2b} can be proved through very simple calculations, which are omitted here.
\end{proof}

The results of numerical simulations presented in Figures~\ref{V6R2_prop3.2a}-\ref{V4R7_prop3.2b} and Figures~\ref{V6R2_prop3.2c}-\ref{V4R7_prop3.2d} indicate that there is an excellent quantitative agreement between numerical solutions of the macroscopic model defined via the ODE system~\eqref{eq:ODEsNSNIMSMI_analysis2red} complemented with relations~\eqref{eq:MSMI_analysis2a}-\eqref{eq:MSMI_analysis2comp}, the results of Monte Carlo simulations of the corresponding individual-based model, subject to the compatibility conditions~\eqref{eq:MSMI_analysis2comp}, and the analytical results established by Proposition~\ref{prop1_analysis2}. These numerical results were obtained carrying out simulations using the set-up and numerical methods detailed in Section~\ref{sec:num:setup}, with the function $q$ defined via~\eqref{def:q_analysis1}, the kernel $Q$ defined via~\eqref{def:Qunif},  and the kernels $K_S$ and $K_I$ satisfying assumptions~\eqref{ass:KSKI_analysis2} and being defined either as the following generalised Pareto distributions
\beq \label{GPDs}
K_S(r):=\frac{1}{\sigma_S} \Biggr (1+\frac{k_S(r-\theta_S)}{\sigma_S}\Biggl)^{\bigr(-\frac{1}{k_S}-1\bigl)}, \quad \theta_S=0, \quad k_S=-\frac{\mu_{K_S}}{1-\mu_{K_S}}, \quad \sigma_S=\frac{\mu_{K_S}}{1-\mu_{K_S}}, \quad \mu_{K_S}\in (0,1),
\eeq
\beq \label{GPDi}
K_I(v):=\frac{1}{\sigma_I} \Biggr (1+\frac{k_I(v-\theta_I)}{\sigma_I}\Biggl)^{\bigr(-\frac{1}{k_I}-1\bigl)}, \quad \theta_I=0, \quad k_I=-\frac{\mu_{K_I}}{1-\mu_{K_I}}, \quad \sigma_I=\frac{\mu_{K_I}}{1-\mu_{K_I}}, \quad \mu_{K_I}\in (0,1), 
\eeq
which are such that
$$
\int_0^1 r \, K_S(r) \, \di r = \mu_{K_S}, \quad \int_0^1 v \, K_I(v) \, \di v = \mu_{K_I},
$$
or as 
\beq \label{GPDs2}
K_S(r):= \delta_{\mu_{K_S}}(r), \quad \mu_{K_S}=1,
\eeq
\beq \label{GPDi2}
K_I(v):= \delta_{\mu_{K_I}}(v), \quad \mu_{K_I}=0,
\eeq
so that either assumptions~\eqref{ass:appd1} or  assumptions~\eqref{ass:appd2} are satisfied.

\begin{figure}[H]
\centering
\subcaptionbox{\label{V6R2_prop3.2a}}
{\includegraphics[trim={0 7cm 0 7cm},clip, width=.49\textwidth]{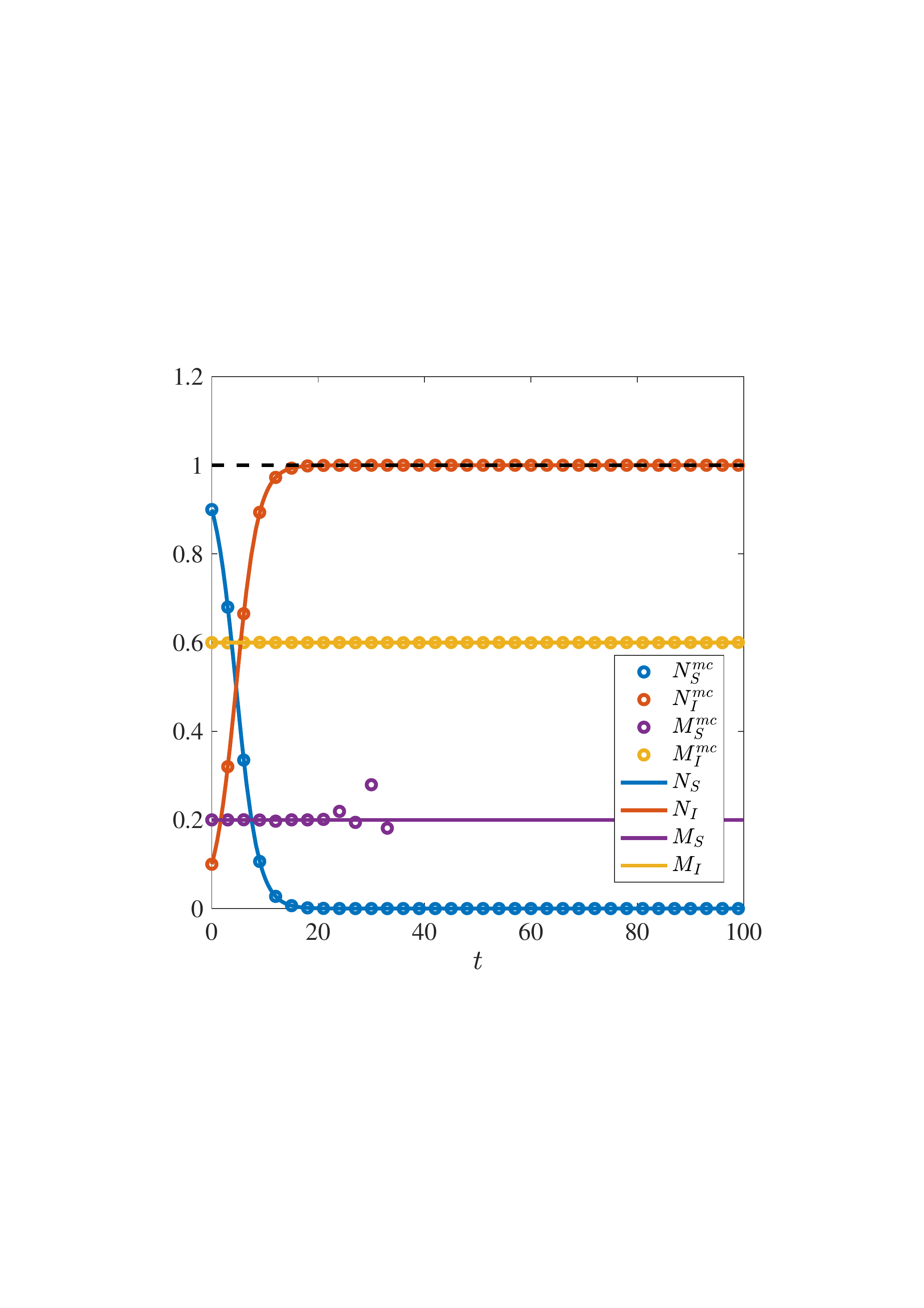}}
\subcaptionbox{\label{V4R7_prop3.2b}} 
{\includegraphics[trim={0 7cm 0 7cm},clip, width=.49\textwidth]{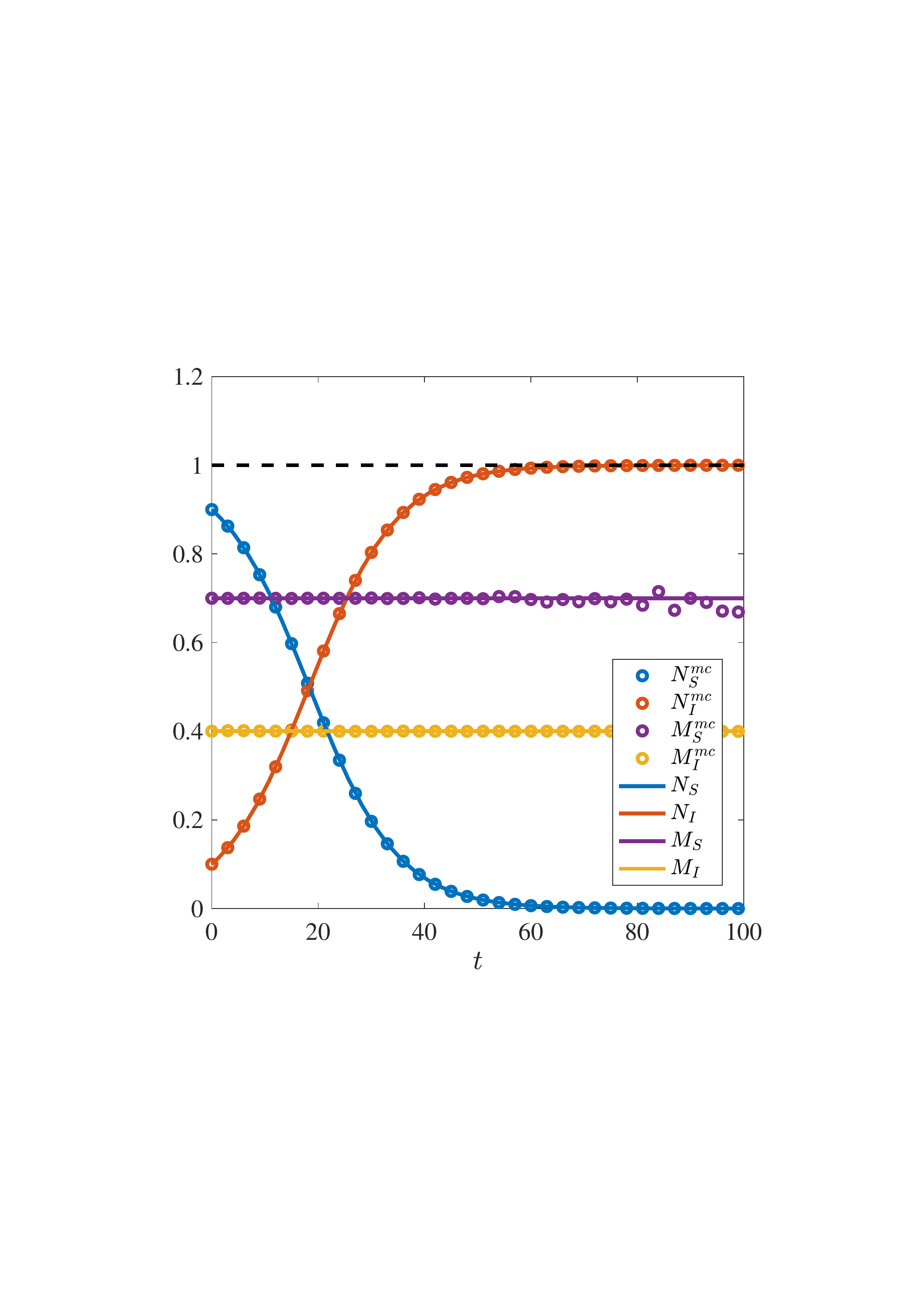}}
\subcaptionbox{\label{V6R2_prop3.2c}}
{\includegraphics[trim={0 7cm 0 7cm},clip, width=.49\textwidth]{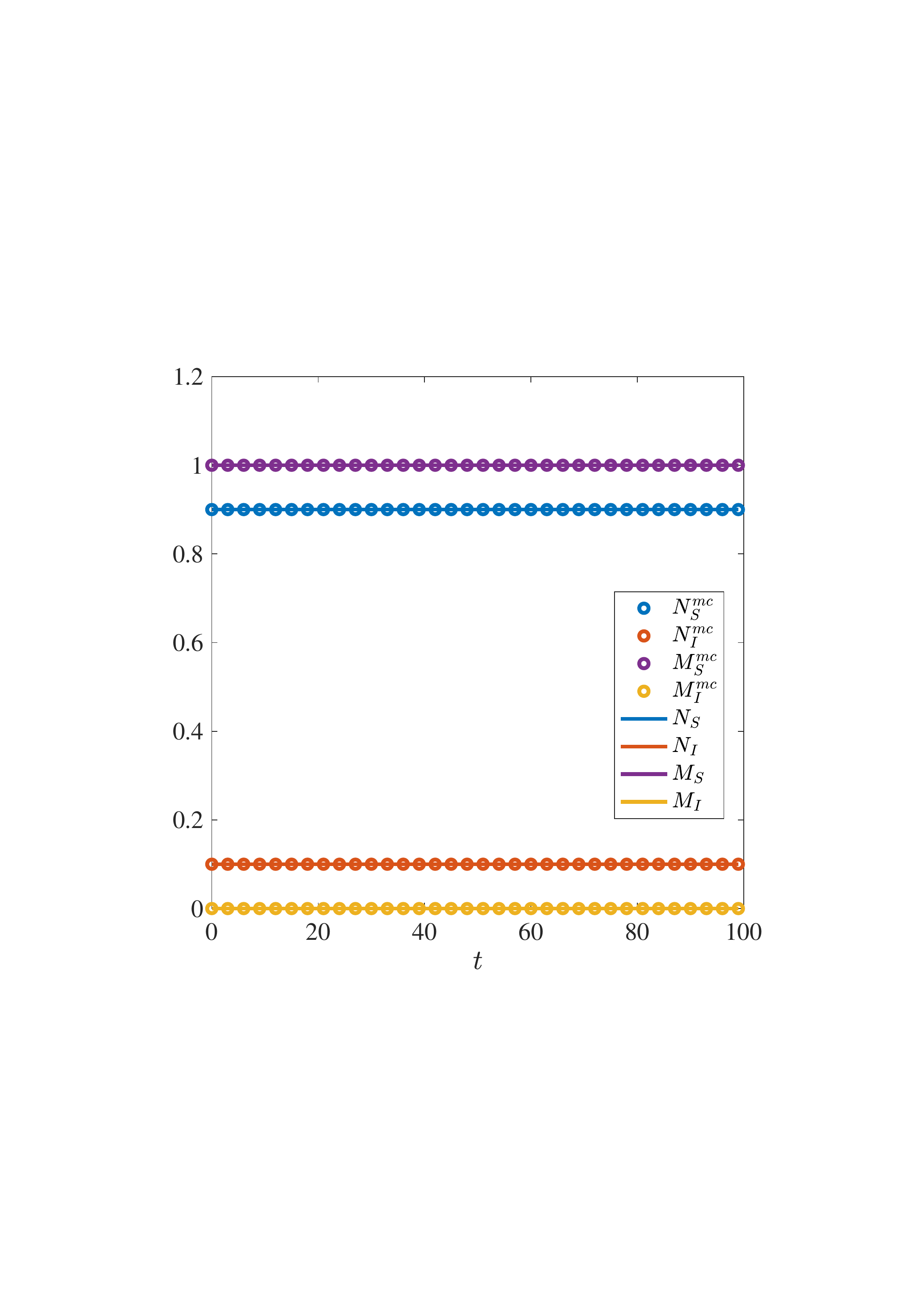}}
\subcaptionbox{\label{V4R7_prop3.2d}} 
{\includegraphics[trim={0 7cm 0 7cm},clip, width=.49\textwidth]{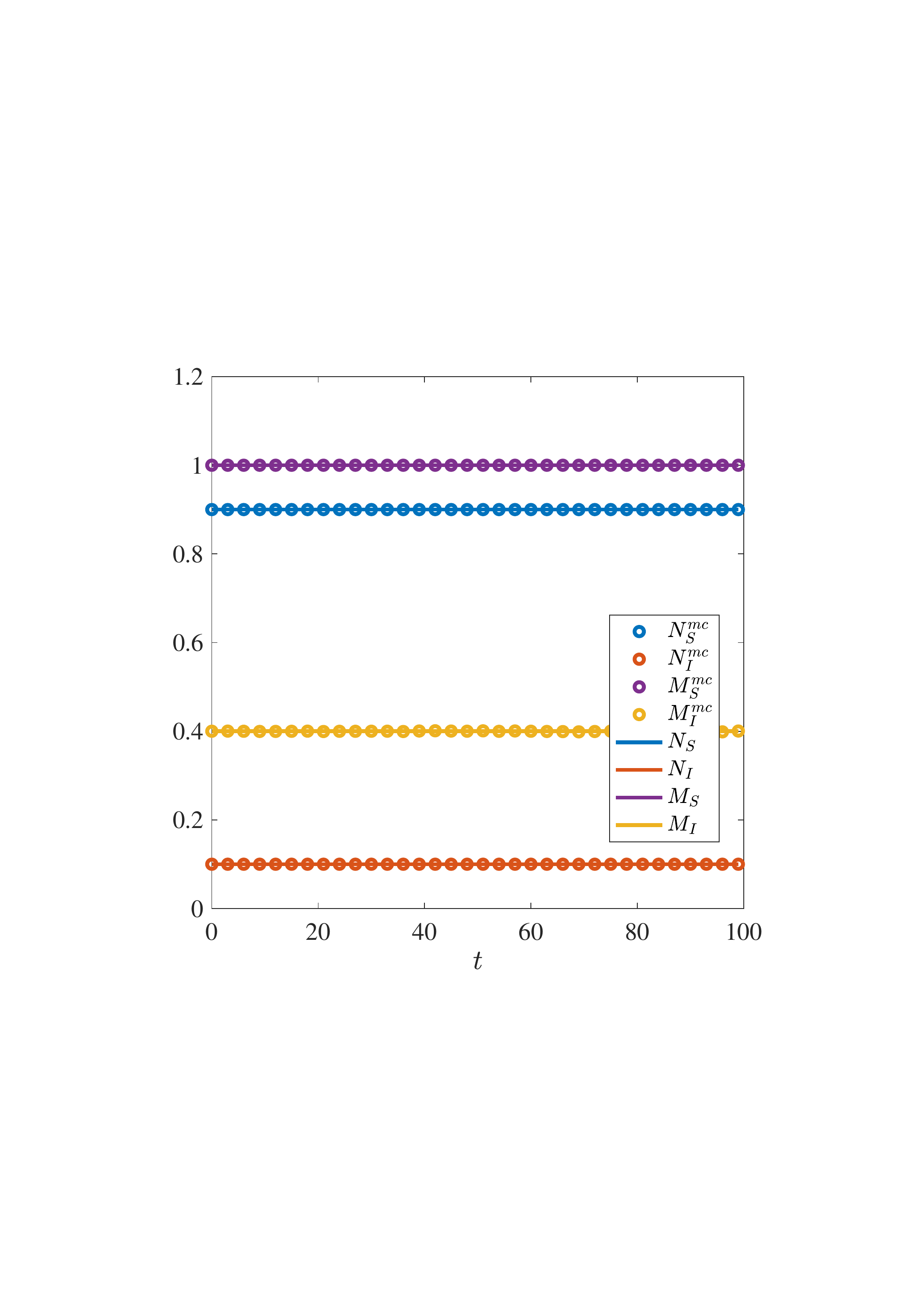}}
\caption{{\bf Numerical results under the assumptions of Proposition~\ref{prop1_analysis2}.} Dynamics of $N_S(t)$ (blue lines) and $N_I(t)$ (orange lines) obtained by solving numerically the macroscopic model defined via the ODE system~\eqref{eq:ODEsNSNIMSMI_analysis2red}, subject to initial conditions~\eqref{eq:ODEsNSNIMSMIIC_analysis2}, complemented with~\eqref{eq:MSMI_analysis2a}-\eqref{eq:MSMI_analysis2comp}. The corresponding values of $M_S(t)$ and $M_I(t)$ defined via~\eqref{eq:MSMI_analysis2a}-\eqref{eq:MSMI_analysis2comp} are also displayed (purple and yellow lines, respectively). The dynamics of the corresponding quantities obtained through Monte Carlo simulations of the individual-based model, subject to the compatibility conditions~\eqref{eq:MSMI_analysis2comp}, are highlighted by circular markers, while the black lines highlight the asymptotic values given by~\eqref{eq:asylim1a_analysis2a} (panels (a) and (b)). These results are obtained in the case where the function $q$ is defined via~\eqref{def:q_analysis1}, the kernel $Q$ is defined via~\eqref{def:Qunif}, and the kernels $K_S$ and $K_I$ satisfy assumptions~\eqref{ass:KSKI_analysis2} and are defined: either via~\eqref{GPDs} and \eqref{GPDi} with $\mu_{K_S}=0.2$ and $\mu_{K_I}=0.6$ (panel (a)); or via~\eqref{GPDs} and \eqref{GPDi} with $\mu_{K_S}=0.7$ and $\mu_{K_I}=0.4$ (panel (b)); or via~\eqref{GPDs2} and \eqref{GPDi2} (panel (c)); or via~\eqref{GPDs2} and \eqref{GPDi} with $\mu_{K_I}=0.4$.}
\end{figure}

\bibliographystyle{plain}
\bibliography{LtPeTa-SI_rv}

\end{document}